\newcommand{\brphi}{{\breve\phi}}
\newcommand{\lb}{\mathrm{lb}}
\begin{document}
\title{The Small Field Parabolic Flow for Bosonic Many--body Models:\\
        \Large Part 4 --- Background and Critical Field Estimates}

\author{Tadeusz Balaban}
\affil{\small Department of Mathematics \authorcr
       % Hill Center--Busch Campus \authorcr
       Rutgers, The State University of New Jersey \authorcr
     %  110 Frelinghuysen Rd \authorcr
      % Piscataway, NJ 08854-8019 \authorcr
       tbalaban@math.rutgers.edu\authorcr
       \  }

\author{Joel Feldman\thanks{Research supported in part by the Natural 
                Sciences and Engineering Research Council 
                of Canada and the Forschungsinstitut f\"ur 
                Mathematik, ETH Z\"urich.}}
\affil{Department of Mathematics \authorcr
       University of British Columbia \authorcr
     %  Vancouver, B.C.,   %\authorcr
      %  Canada\ \   V6T 1Z2 \authorcr
       feldman@math.ubc.ca \authorcr
       http:/\hskip-3pt/www.math.ubc.ca/\squig feldman/\authorcr
       \  }

\author{Horst Kn\"orrer}
\author{Eugene Trubowitz}
\affil{Mathematik \authorcr
       ETH-Z\"urich \authorcr
      %  ETH-Zentrum \authorcr
      % CH-8092 Z\"urich, %\authorcr
      %  Switzerland \authorcr
       knoerrer@math.ethz.ch, trub@math.ethz.ch \authorcr
       http:/\hskip-3pt/www.math.ethz.ch/\squig knoerrer/}

%\date{\editdate}

\maketitle

\begin{abstract}
\noindent
This paper is a contribution to a program to see symmetry breaking in a
weakly interacting many Boson system on a three dimensional lattice at 
low temperature.  It is part of an analysis of the ``small field''  approximation 
to the ``parabolic flow'' which exhibits the formation of a ``Mexican hat''
potential well. Here we prove the existence of and bounds on the background
and critical fields that arise from the steepest descent attack 
%method 
that is at the core of the renormalization group step anaylsis of \cite{PAR1,PAR2}.

\end{abstract}

\newpage
\tableofcontents

\newpage
\section{Introduction}
In \cite{PAR1,PAR2}, we use the block spin renormalization group
formalism to exhibit the formation\footnote{in the
small field regime} of a potential well, signalling the 
onset of symmetry breaking in a many particle system of weakly 
interacting Bosons in three space dimensions. For an overview, 
see \cite{ParOv}. For a brief discussion of the algebraic aspects 
of the block spin method see \cite{BlockSpin}.

In \cite{ParOv,PAR1,PAR2} the model is initially formulated as a 
functional integral with integration variables indexed by the 
lattice\footnote{Of course $\cX_0$ is a finite set and so is perhaps
more accurately described as a discrete torus, rather than a
lattice.}$^{,}$\footnote{In this introduction, we are only going to
give ``impressionistic'' definitions. The detailed, technically
complete, definitions are given in \cite[Appendix \appDefinitions]{PAR1}.
Specifically, for the lattices, see \cite[\S\appDEFlattices]{PAR1}.} 
\begin{equation*}
\cX_0=\big(\bbbz/L_\tp\bbbz\big)\times
         \big(\bbbz^3/L_\sp\bbbz^3\big) 
\end{equation*} 
$\cX_0$ is a unit lattice in the sense that the distance 
between nearest neighbours in the lattice is $1$. 
During each renormalization group step this lattice
is scaled down. In each of the first $\np$ steps, which are the
steps considered in \cite{ParOv,PAR1,PAR2}, we use 
(anisotropic) ``parabolic scaling'' which decreases
the lattice spacing in the temporal direction 
by a factor of $L^2$ and in the spatial directions by a factor of $L$. 
Here $L\ge 3$ is a fixed odd natural number.
So after $n$ renormalization group steps the lattice spacing 
in the spatial directions is $\veps_n=\frac{1}{L^n}$ and in the 
temporal direction is  $\veps_n^2=\frac{1}{L^{2n}}$
and the lattice $\cX_0$  has been scaled down to
\begin{equation*}
\cX_n= \big(\sfrac{1}{L^{2n}}\bbbz \big/ \sfrac{L_\tp}{L^{2n}}\bbbz\big)
   \times \big(\sfrac{1}{L^n}\bbbz^3 \big/\sfrac{L_\sp}{L^n}\bbbz^3 \big)
\end{equation*}
We call $\cX_n$ the ``$\veps_n$--lattice''.

The dominant ``pure small field'' part of the original functional 
integral representation of this model is, after $n$ renormalization 
group steps, reexpressed as a functional integral 
$
\int \smprod_{x\in\cX_0^{(n)}} \sfrac{d\psi^*(x) d\psi(x)}{2\pi i}\,
 e^{{\rm Action}_n}
$
with integration variables indexed by the unit sublattice 
\begin{align*}
\cX_0^{(n)}&=
     \big(\bbbz/\sfrac{L_\tp}{L^{2n}}\bbbz\big)
     \times\big(\bbbz^3/\sfrac{L_\sp}{L^n}\bbbz^3\big)
\end{align*}
of $\cX_n$. More generally, we have to deal with the 
decreasing sequence of sublattices
\begin{align*}
\cX_j^{(n-j)}&=
     \big(\sfrac{1}{L^{2j}}\bbbz/\sfrac{L_\tp}{L^{2n}}\bbbz\big)
     \times\big(\sfrac{1}{L^j}\bbbz^3/\sfrac{L_\sp}{L^n}\bbbz^3\big)
\end{align*}
of $\cX_n$.
The lower index gives the ``scale'' of the lattice. 
That is, the distance between nearest neighbour points of the lattice. 
The upper index  determines the number of points 
in the sublattice $\big($namely 
$\big(\sfrac{L_\tp}{L^{2(n-j)}}\big)\big(\sfrac{L_\sp}{L^{n-j}}\big)^3$
$\big)$. The sum of the upper and lower indices gives the number of 
the renormalization group step. For fields $\phi,\psi$ on $\cX_j^{(n-j)}$,
we use the ``real'' inner product
$
\<\phi,\psi\>_j =\sfrac{1}{L^{5j}}\sum_{u\in\cX_j^{(n-j)}}\phi(u)\psi(u)
$.
The vector space $\bbbc^{\cX_j^{(n-j)}}$, equipped with the inner product
$\<\phi^*,\psi\>$, is a Hilbert space, which we denote $\cH_j^{(n-j)}$.

\noindent
Roughly speaking, in each block spin RG step one
\begin{itemize}[leftmargin=*, topsep=2pt, itemsep=0pt, parsep=0pt]
\item[$\circ$] 
paves $\cX_0^{(n)}$ by rectangles centered at the points of the
sublattice $\cX_{-1}^{(n+1)}\subset \cX_0^{(n)}$ and then, 
\item[$\circ$] 
for each $y\in\cX_{-1}^{(n+1)}$, integrates out all values of $\psi$ 
whose ``average value'' over the rectangle centered at $y$ is equal to the 
value of a given field $\th(y)$ on $\cX_{-1}^{(n+1)}$.
The precise ``average value'' used is determined by an averaging 
profile\footnote{In \cite{PAR1,PAR2}, the averaging profile is an 
iterated convolution of the characteristic function of the rectangle 
with itself. See \cite[\S\appDEFblockspinops]{PAR1}}.
One uses this profile to define\footnote{For the detailed definition 
of the averaging operator $Q$, see \cite[\S\appDEFblockspinops]{PAR1}.} 
an averaging operator $Q$ from  the 
space of fields on $\cX_0^{(n)}$  to the space  of fields on 
$\cX_{-1}^{(n+1)}$. One then implements the ``integrating out'' by first 
inserting into the integrand $1$, expressed as a constant times the Gaussian integral
\begin{equation}\label{eqnBGEgaussianOne}
\int \smprod_{y\in\cX_{-1}^{(n+1)}}\sfrac{d\th^*(y) d\th(y)}{2\pi i}
                            e^{-a\< \th^*-Q\,\psi_*\,,\, \th-Q\,\psi \>_{-1}} 
\end{equation}
with some constant $a>0$, and then interchanging the order of the $\th$ and
$\psi$ integrals.
\end{itemize}
We use stationary phase/steepest descent to control these integrals.
This naturally leads one to express the action not solely
in terms of the integration variables $\psi$, but also
in terms of ``background fields'', which are concatinations of 
``steepest descent'' critical field maps for all previous steps.
See \cite[Remark \remBSactionRecur\ and 
Proposition \propBSconcatbackgr.c]{BlockSpin}.
The dominant part of the action is then of the form
\begin{equation*}
A_n(\psi_*,\psi,\phi_*,\phi,\mu,\cV)
   \Big|_{\atop{\phi_* = \phi_{*n}(\psi^*,\psi)}
               {\phi = \phi_n(\psi^*,\psi)}}                       
\end{equation*}
where
\begin{equation}\label{eqnBGErepresentationaction}
\begin{split}
&A_n(\psi_*,\psi,\phi_*,\phi,\mu,\cV)
 =-\<\psi_*-Q_n\phi_*,\fQ_n\big(\psi-Q_n\phi\big)\>_0
 -\< \phi_*,\,D_n\phi\>_n \\
&\hskip3in     - \cV(\phi_*,\phi)
+\mu \< \phi_*,\,\phi\>_n              
\end{split}
\end{equation}
and

\begin{itemize}[leftmargin=*, topsep=2pt, itemsep=0pt, parsep=0pt]
\item[$\circ$] 
$Q_n:\bbbc^{\cX_n}\rightarrow\bbbc^{\cX_0^{(n)}}$ 
is an averaging operator that is the composition of the 
averaging operations for all previous steps. For the precise
definition of $Q_n$, see \cite[\S\appDEFblockspinops]{PAR1}. For bounds
on $Q_n$, see \cite[Remark \remPBSqnft.a and Lemma \lemPBSunppties]{POA}.

\item[$\circ$]
the term $\< \psi^*-Q_n\,\phi_*\,,\, \fQ_n(\psi-Q_n\,\phi) \>_0 $
is a residue of the exponents in the  Gaussian integrals 
\eqref{eqnBGEgaussianOne} inserted in the
previous steps. The operator $\fQ_n$ is bounded and boundedly invertible.
For the precise definition of $\fQ_n$, see \cite[\S\appDEFblockspinops]{PAR1}. 
See \cite[Remark \remBSactionRecur]{BlockSpin} for the
recursion relation that builds $\fQ_n$.
For bounds on $\fQ_n$, see \cite[Remark \remPBSqnft.c and 
Proposition \propPBSAnppties]{POA}.

\item[$\circ$] 
$D_n$ is a discrete differential operator. It is simply a scaled
version of the discrete differential operator that appeared in
the initial action, which, in turn, was built from the
single particle ``kinetic energy'' operator. Think of $D_n$
as behaving like $-\partial_0-\De$. For the detailed definition
of $D_n$, see \cite[\S\appDEFoperators]{PAR1}. Various properties of and bounds
on $D_n$ are provided in \cite[\S\secPOdiffOps]{POA}.

\item[$\circ$] 
$\cV$ is an interaction. It is a quartic monomial
\begin{equation*}
\cV(\phi_*,\phi)=\half\int_{\cX_n^4} du_1 \cdots du_4\  
                  V(u_1,u_2,u_3,u_4)\,  
                  \phi_*(u_1) \phi(u_2)\phi_*(u_3) \phi(u_4)
\end{equation*}
where $\int_{\cX_n} du = \frac{1}{L^{5n}}\sum_{u\in\cX_n}$
and the kernel $V(u_1,u_2,u_3,u_4)$ is translation invariant
and exponentially decaying.

\item[$\circ$] 
$\mu$ is a chemical potential. In this paper, we are interested
in $\mu>0$ that are sufficiently small.  For more details, see 
\cite[Theorem \thmTHmaintheorem]{PAR1}.

\item[$\circ$] 
The background fields\footnote{We routinely use the ``optional $*$''
notation $\al_{(*)}$ to denote ``$\al_*$ or $\al$''. 
} $\phi_{(*)n}(\psi_*,\psi,\mu,\cV)$, in
addition to being concatinations of ``steepest descent'' critical 
field maps for all previous steps, are critical points for the map
$$
(\phi_*,\phi) \mapsto A_n(\psi_*,\psi,\phi_*,\phi,\mu,\cV)
$$
\end{itemize}
In this paper we fix an integer $1\le n\le \np$, where $\np$ is the number
of ``parabolic scaling'' renormalization group steps considered
in \cite{PAR1,PAR2}, and prove  existence and properties 
of the background fields as above, in the concrete setting of 
\cite{PAR1,PAR2}.
By definition, they are solutions of the ``background field equations''
\begin{equation*}
\sfrac{\partial\hfill}{\partial\phi_*}A_n(\psi_*,\psi,\phi_*,\phi,\mu,\cV)
=\sfrac{\partial\hfill}{\partial\phi}A_n(\psi_*,\psi,\phi_*,\phi,\mu,\cV)
=0
\end{equation*}
or
\begin{equation}\label{eqnBGEbgeqns}
\begin{split}
S^*_n(\mu)^{-1}\phi_* +\cV'_*(\phi_*,\phi,\phi_*)
       &=Q_n^* \fQ_n\psi_*\\
S_n(\mu)^{-1}\phi+\cV'(\phi,\phi_*,\phi)
              &=Q_n^* \fQ_n\psi
\end{split}
\end{equation}
where\footnote{The number of RG steps, $n_p$, is chosen so that, for the
chemical potentials $\mu$ under consideration, the operator 
$D_n+Q_n^*\fQ_n Q_n-\mu$ is invertible.
}
\begin{equation*}
S_n(\mu)=(D_n+Q_n^*\fQ_n Q_n-\mu)^{-1}
\end{equation*}
and
\begin{align*}
\cV'_*(u;\ze_{*1},\ze,\ze_{*2})
&=\int du_1 du_2 du_3\  V(u_1,u_2,u_3,u)\, \ze_{*1}(u_1)\ze(u_2)\ze_{*2}(u_3)\\
\cV'(u;\ze_1,\ze_*,\ze_2)
&=\int du_2 du_3 du_4\  V(u,u_2,u_3,u_4)\, \ze_1(u_2)\ze_*(u_3)\ze_2(u_4)
\end{align*}
We also write $S_n=S_n(0)=(D_n+Q_n^*\fQ_n Q_n)^{-1} $.

In \S\ref{secBGEbgdfld} we write these equations as a fixed point 
equation and use the variant of the Banach fixed point theorem 
developed in \cite{SUB}, and summarized in Proposition \ref{SUB:propBGEeqnsoln},
 to control them. We also  show, 
in Proposition \ref{propBGEphivepssoln}, that
\begin{align*}
\phi_{(*)n}(\psi_*,\psi,\mu,\cV)
     &=S_n(\mu)^{(*)}Q_n^* \fQ_n\,\psi_{(*)}
      +\phi_{(*)n}^{(\ge 3)}(\psi_*,\psi,\mu,\cV)
\end{align*}
where 
$\,
\phi_{(*)n}^{(\ge 3)}
\, $
are analytic maps in $(\psi_*,\psi)$ from a neighbourhood of the origin in
$\,\bbbc^{\cX_0^{(n)}}\times \bbbc^{\cX_0^{(n)}}\,$ 
to $\,\bbbc^{\cX_n}\,$, and, in Corollary \ref{corBGEexpandaldephi}, that
\begin{equation}\label{eqnBGEbglocal}
\phi_{(*)n}(\psi_*,\psi,\mu,\cV)(u)
=\sfrac{a_n}{a_n-\mu}\psi_{(*)}\big(X(u)\big)
           +\brphi_{(*)n}\big((\psi_*,\{\partial_\nu\psi_*\})\,,\,
                       (\psi,\{\partial_\nu\psi\})\,,\,\mu,\cV\big)(u)
\end{equation}
where, for each point $u$ of the fine lattice $\cX_n$,  
$X(u)$ denotes the point of the unit lattice $\cX_0^{(n)}$ nearest to $u$, 
$a_n=a\big(1 +\smsum_{j=1}^{n-1}\sfrac{1}{L^{2j}}\big)^{-1}$
and
$\,
\brphi_{(*)n}
\, $
are analytic maps.
%in $(\tilde\psi_*,\tilde\psi)$ from a neighbourhood of 
%the origin in $\,\tilde\cH_0^{(n)}\times \tilde\cH_0^{(n)}\,$
%to $\,\cH_n^{(0)}$. Recall, from \cite[(\eqnTHdefexpandedstates)]{PAR1}, that 
%\begin{equation*}
%\tilde \cH_0^{(n)} 
%      = \set{ \tilde\psi = \big(\psi,\{\psi_\nu \}_{\nu=0,1,2,3}\big) }
%                     { \psi,\psi_\nu\in \cH_0^{(n)}}
%\end{equation*}
\begin{remark}\label{remBGEcnstFld}
When the fields $\psi_{(*)}$ and $\phi_{(*)}$ happen to be constant, then, 
by \cite[Remark \remMXconsteigen]{PAR2}, the equations 
\eqref{eqnBGEbgeqns} reduce to
\begin{equation}\label{eqnBGEbgeqnsCst}
\begin{split}
(a_n-\mu)\phi_* +\rv\phi_*^2\phi
       &=a_n\psi_*\\
(a_n-\mu)\phi+\rv\phi_*\phi^2
              &=a_n\psi
\end{split}
\end{equation}
where
$
\rv = \int_{\cX_n^3} dx_1 \cdots dx_3\  V(0,x_1,x_2,x_3)
$
is the average value of the kernel of $\cV$. As long as 
$\rv(|\psi_*|+|\psi|)^2$ is small enough, this system has a 
unique solution with 
\begin{equation*}
\phi_*=\sfrac{a_n}{a_n-\mu}\psi_*+O\big(\rv(|\psi_*|+|\psi|)^3\big)\qquad
\phi=\sfrac{a_n}{a_n-\mu}\psi+O\big(\rv(|\psi_*|+|\psi|)^3\big)
\end{equation*}
If $\psi_*=\psi^*$, then the solution $\phi_*=\phi^*$.
\end{remark}

In \S\ref{secBGEvrnspsi}, we prove, in Proposition \ref{propBGEdephisoln}, bounds
%on the maps 
%$
%\de\hat\phi_{(*)n+1}(\psi_*,\psi,z_*,z)
%$ 
%of \cite[(\eqnSTdehatphidef)]{PAR1}, 
on maps which describe the variations of the 
background field  with respect to $\psi$. 

In \S\ref{secBGEvrnsmu}, we consider variations of the background field with 
respect to the chemical potential $\mu$ and interaction $\cV$. 
We prove, in  Proposition \ref{propBGEdephidemu}, bounds on 
\begin{equation*}
\De\phi_{(*)n}(\psi_*,\psi,\mu,\de\mu,\cV,\de\cV)
=\phi_{(*)n}(\psi_*,\psi,\mu+\de\mu,\cV+\de\cV) -\phi_{(*)n}(\psi_*,\psi,\mu,\cV)
\end{equation*}
as well as on $\partial_\nu$ and $D_n^{(*)}$ applied to these field maps.

Finally, in 
\S\ref{secBGEcritfld} we apply these results and  
\cite[Proposition \propBSconcatbackgr.a]{BlockSpin}
 to construct and bound the critical points, denoted $\psi_{*n}$, $\psi_n$,  of the map
\begin{equation*}
(\psi_*,\psi) \mapsto A_n(\psi_*,\psi,\phi_*,\phi,\mu,\cV)
   \Big|_{\atop{\phi_* = \phi_{*n}(\psi^*,\psi)}
               {\phi = \phi_n(\psi^*,\psi)}}   
\end{equation*}

The proofs and estimates in this paper depend heavily on bounds on operators
like $Q$, $Q_n$ and $S^{-1}_n(\mu)$, which in turn are developed in \cite{POA}.
The size of an operator is formulated in terms of a norm on its kernel.
\begin{definition}\label{defBGEoperatornorm}
Let $\cX$ and $\cY$ be sublattices of a common lattice having metric $d$,
with $\cX$ having a ``cell volume'' $\vol_\cX$ and
with $\cY$ having a ``cell volume'' $\vol_\cY$. 
For any operator $A:\bbbc^\cX\rightarrow\bbbc^\cY$, with kernel
$A(y,x)$, and for any mass $m\ge 0$, we define the norm
\begin{equation*}
\|A\|_m
=\max\Big\{\sup_{y\in\cY}\,\sum_{x\in\cX} \vol_{\cX}\ 
                  e^{m|y-x|}|A(y,x)|\ ,\ 
\sup_{x\in\cX}\,\sum_{y\in\cY} \vol_{\cY}\ 
                  e^{m|y-x|}|A(y,x)|
\Big\}
\end{equation*}
In the special case that $m=0$, this is just the usual $\ell^1$--$\ell^\infty$
norm of the kernel.
\end{definition}

Similarly, to measure the size of  a function $f:\big(\cX_j^{(n-j}\big)^r\rightarrow\bbbc$,
we introduce the weighted $\ell^1$--$\ell^\infty$ norm with mass $m\ge 0$
\begin{equation}\label{eqnBGEtreeNorm}
\|f\big(x_1,\cdots x_r\big)\|_m
=\max_{i=1\cdots,r}\ \max_{x\in \cX_j^{(n-j)}}\ 
         \sfrac{1}{L^{5j}}\hskip-15pt\sum_{\atop{x_1,\cdots,x_r\in\cX_j^{(n-j)}}
                   {x_i=x}}\hskip-15pt
          |f(x_1,\cdots,x_r)|\, e^{m\tau(x_1,\cdots,x_r)}
\end{equation}
where the tree length $\,\tau(x_1,\cdots,x_r)\,$ is the 
minimal length of a tree in $\,\cX_j^{(n-j)}\,$ that has  
$\,x_1,\cdots,x_r\,$ among its vertices. 

We use the terminology ``field map'' to designate an analytic map that assigns to
one or more fields on a finite set $\cX$ another field on a finite set $\cY$.
The most prominent examples of field maps in this paper are the background
fields $\phi_{(*)n}(\psi_*,\psi)$.  In Appendix \ref{appNormsFixedPoint},
we define norms on field maps that are constructed by summing 
norms, like \eqref{eqnBGEtreeNorm}, of the kernels in their power series expansions.
The kernel of a monomial, for example of degree $n$ in a field $\psi$,
is weighted by $\ka^n$, where $\ka$ is a ``weight factor'' assigned to
$\psi$.  For example, if $\phi(\psi)(y)=\sum\limits_{n=0}^\infty\ \sum\limits_{x_1,\cdots,x_n\in\cX} \hskip-5pt\vol_\cX^n 
         \ \phi_n\big(y;x_1,\cdots,x_n\big) \psi(x_1) \cdots\psi(x_n)$
\begin{equation*}
\tn \phi \tn =  \sum_{n}  \| \phi_n\|_m\ \ka^n
\end{equation*} 

For full definitions
of our norms, see \cite[\S\appDEFnorms]{PAR1}.

In this paper, we fix masses $\bar\fm>\fm>0$ and generic weight factors 
$\wf,\wf',\wf_\fl\ge 1$  and use the norm $\tn F\tn$ with mass $\fm$ and 
these weight factors to measure field maps $F$. 
The weight factor $\wf$ is used for the $\psi_{(*)}$'s, 
the weight factor $\wf'$ is used for the derivative fields
$\psi_{(*)\nu}$ and  
the weight factor $\wf_\fl$ is used for the fluctuation fields
$z_{(*)}$. See  Appendix \ref{appNormsFixedPoint}.

\begin{convention}\label{convBGEconstants}
The (finite number of) constants that appear in the bounds of this 
paper are consecutively labelled $\GGa_1, \GGa_2, \cdots$ or
$\rrho_1,\rrho_2,\cdots$. All of the constants $\GGa_j$, $\rrho_j$ 
are independent of $L$ and the scale index $n$. 
They depend only on the masses $\fm$ and $\bar\fm$ and the constant 
$\Gam_\op$ of \cite[Convention \convPOconstants]{POA} (with mass $m=\bar\fm$) 
and, for the $\rrho_j$'s, the $\mu_{\rm up}$ of \cite[Proposition \POGmainpos]{POA}.
We define $\GGa_\bg$ to be the maximum of the $\GGa_j$'s and $\rrho_\bg$ to be
the minimum of $\sfrac{1}{8}$ and the $\rrho_j$'s. We shall refer only
to $\GGa_\bg$ and $\rrho_\bg$, as opposed to the $\GGa_j$'s and $\rrho_j$'s, 
in \cite{PAR1,PAR2}.

\end{convention}

\newpage
%%%%%%%%%%%%%%%%%%%%%%%%%%%%%%%%%%%%%%%%%%%%%%%%%%%%%%%%%%%%%%%%%%%%
\section{The Background Field}\label{secBGEbgdfld}
%%%%%%%%%%%%%%%%%%%%%%%%%%%%%%%%%%%%%%%%%%%%%%%%%%%%%%%%%%%%%%%%%%%%

\noindent
The main existence result for the background field, which was
summarized in \cite[Proposition \propHTexistencebackgroundfields]{PAR1}, is

\begin{proposition}[Existence of the background field]\label{propBGEphivepssoln}
There are constants $\GGa_1$, $\rrho_1>0$  such that, if 
$
\|V\|_{\fm} \wf^2+|\mu| \le \rrho_1
$,
the following hold.
\begin{enumerate}[label=(\alph*), leftmargin=*]
\item
There exist solutions to the equations \eqref{eqnBGEbgeqns} 
for the background field. Precisely, there are field maps 
$\phi_{(*)n}^{(\ge 3)}$ such that
\begin{align*}
\phi_{(*)n}(\psi_*,\psi,\mu,\cV)
     &=S_n(\mu)^{(*)}Q_n^* \fQ_n\,\psi_{(*)}
      +\phi_{(*)n}^{(\ge 3)}(\psi_*,\psi,\mu,\cV)
\end{align*}
solves \eqref{eqnBGEbgeqns} and
\begin{equation*}
\TN \phi_{(*)n}^{(\ge 3)}\TN
\le \GGa_1 \|V\|_{\fm}\wf^3
\end{equation*}
Furthermore 
$\phi_{*n}^{(\ge 3)}$ is of degree at least one in $\psi_*$ and
$\phi_n^{(\ge 3)}$ is of degree at least one in $\psi$. Both are of degree
at least three in $(\psi_*,\psi)$.

\item
Set
\begin{align*}
B^{(+)}_{n,\mu,\nu}
   &=\big[D_n^*+Q_{n,\nu}^{(+)}\fQ_nQ_{n,\nu}^{(-)}-\mu\big]^{-1}
     Q_{n,\nu}^{(+)}\fQ_n\\
B^{(-)}_{n,\mu,\nu}
   &=\big[D_n+Q_{n,\nu}^{(+)}\fQ_nQ_{n,\nu}^{(-)}-\mu\big]^{-1}
       Q_{n,\nu}^{(+)}\fQ_n
\end{align*}
where $Q_{n,\nu}^{(+)},Q_{n,\nu}^{(-)}$ were defined in 
\cite[(\eqnPBSqnplusminus)]{POA}.
There are, for each $0\le\nu\le 3$, field maps 
$\phi_{(*)n,\nu}^{(\ge 3)}
  =\phi_{(*)n,\nu}^{(\ge 3)}\big(\psi_*,\psi, \psi_{*\nu},\psi_\nu,\mu,\cV\big)$
such that
\begin{align*}
\partial_\nu\phi_{*n}(\psi_*,\!\psi,\mu,\cV)
&=B^{(+)}_{n,\mu,\nu}\partial_\nu\psi_*
 +\phi_{*n,\nu}^{(\ge 3)}\big(\psi_*,\!\psi, 
                    \partial_\nu\psi_*,\partial_\nu\psi,
                    \mu,\cV\big) \\
\partial_\nu\phi_n(\psi_*,\psi,\mu,\cV)
&=B^{(-)}_{n,\mu,\nu}\partial_\nu\psi
  +\phi_{n,\nu}^{(\ge 3)}\big(\psi_*,\psi, 
                    \partial_\nu\psi_*,\partial_\nu\psi,
                    \mu,\cV\big) 
\end{align*}
and
\begin{equation*}
\TN \phi_{(*)n,\nu}^{(\ge 3)}\TN
\le \GGa_1\,\|V\|_\fm\wf^2\,\wf'
\end{equation*}
Furthermore $\partial_\nu \phi_{(*)n}^{(\ge 3)}(\psi_*,\psi,\mu,\cV)
=\phi_{(*)n,\nu}^{(\ge 3)}\big(\psi_*,\!\psi, 
                    \partial_\nu\psi_*,\partial_\nu\psi,
                    \mu,\cV\big)$, 
and $\phi_{*n,\nu}^{(\ge 3)}$ and $\phi_{n,\nu}^{(\ge 3)}$ are each 
of degree precisely one in $\psi_{(*)\nu}$ and of degree at least two in 
$\big(\psi_*, \psi\big)$.

\item
Set
\begin{align*}
B_{n,\mu,D}^{(+)}
  &=\big[\bbbone-(Q_n^*\fQ_nQ_n-\mu)S_n(\mu)^*\big]Q_n^*\fQ_n\\
B_{n,\mu,D}^{(-)}
  &=\big[\bbbone-(Q_n^*\fQ_nQ_n-\mu)S_n(\mu)\big]Q_n^*\fQ_n
\end{align*}
There are field maps 
$\phi_{(*)n,D}^{(\ge 3)}$
such that
\begin{align*}
D_n^*\phi_{*n}(\psi_*,\!\psi,\mu,\cV)
&=B_{n,\mu,D}^{(+)}\,\psi_*
 +\phi_{*n,D}^{(\ge 3)}\big(\psi_*,\psi,\mu,\cV\big) \\
D_n\phi_n(\psi_*,\psi,\mu,\cV)
&=B_{n,\mu,D}^{(-)}\,\psi
  +\phi_{n,D}^{(\ge 3)}\big(\psi_*,\psi,\mu,\cV\big) 
\end{align*}
and
\begin{equation*}
\TN \phi_{(*)n,D}^{(\ge 3)}\TN
\le \GGa_1\,\|V\|_\fm\wf^3
\end{equation*}
Furthermore $\phi_{(*)n,D}^{(\ge 3)}$ are of degree at least three in 
$\big(\psi_*, \psi\big)$.
\end{enumerate}
\end{proposition}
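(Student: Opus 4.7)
The plan is to recast the background field equations \eqref{eqnBGEbgeqns} as a fixed point problem. Applying the bounded operator $S_n(\mu)^{(*)}$ to \eqref{eqnBGEbgeqns} gives
\begin{equation*}
\phi_{(*)} = S_n(\mu)^{(*)} Q_n^*\fQ_n\psi_{(*)} - S_n(\mu)^{(*)}\cV'_{(*)}(\phi_{(*)},\phi,\phi_{(*)}),
\end{equation*}
and writing $\phi_{(*)} = S_n(\mu)^{(*)} Q_n^*\fQ_n \psi_{(*)} + \Phi_{(*)}$ produces a system in which $\Phi_{(*)}$ satisfies a cubic fixed point equation carrying one factor of $\cV$. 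For part (a) I would then apply the variant of the Banach fixed point theorem recalled as Proposition \ref{SUB:propBGEeqnsoln} in the triple norm $\TN\cdot\TN$, supplying the operator bounds on $S_n(\mu)^{(*)}$ and $Q_n^*\fQ_n$ from \cite{POA} and the cubic structure of $\cV'_{(*)}$. The Lipschitz constant of the nonlinear map is of order $\|V\|_\fm\wf^2 + |\mu|$, so shrinking $\rrho_1$ forces contraction, and the unique fixed point $\phi_{(*)n}^{(\ge 3)} := \Phi_{(*)}$ is analytic with $\TN \phi_{(*)n}^{(\ge 3)}\TN \le \GGa_1 \|V\|_\fm \wf^3$. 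The degree assertions come from induction on the fixed point iterates: $\cV'_*$ is quadratic in its first and third arguments and linear in its second, so every iterate of $\Phi_*$ retains at least one factor of $\psi_*$, and each iterate is at least cubic in $(\psi_*,\psi)$ because each of its arguments is at least linear.

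For part (b), I would differentiate the above fixed point equation in $\partial_\nu$ and use the discrete Leibniz rule for $Q_n$, whose ``derivative version'' is encoded via the operators $Q_{n,\nu}^{(\pm)}$ in \cite{POA}. This yields a linear equation for $\partial_\nu\phi_{(*)n}$ whose homogeneous part is exactly $D_n^{(*)} + Q_{n,\nu}^{(+)}\fQ_n Q_{n,\nu}^{(-)} - \mu$, so inverting produces the leading term $B^{(\pm)}_{n,\mu,\nu}\partial_\nu\psi_{(*)}$. Splitting $\partial_\nu\phi_{(*)n} = B^{(\pm)}_{n,\mu,\nu}\partial_\nu\psi_{(*)} + \phi_{(*)n,\nu}^{(\ge 3)}$ and invoking the fixed point theorem once more, now with $\partial_\nu\psi_{(*)}$ carrying weight $\wf'$, controls the remainder. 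The degree assertion (exactly one factor of $\psi_{(*)\nu}$ and at least two of $(\psi_*,\psi)$) is transparent from the differentiation: each application of $\partial_\nu$ produces precisely one derivative field, and the remaining two arguments of $\cV'_{(*)}$ are each at least linear in $(\psi_*,\psi)$.

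For part (c), I would apply $D_n^{(*)}$ directly to $\phi_{(*)n} = S_n(\mu)^{(*)}Q_n^*\fQ_n\psi_{(*)} + \phi_{(*)n}^{(\ge 3)}$ and exploit the resolvent identity
\begin{equation*}
D_n^{(*)}S_n(\mu)^{(*)} = \bbbone - (Q_n^*\fQ_n Q_n - \mu)S_n(\mu)^{(*)},
\end{equation*}
immediate from $S_n(\mu)^{-1} = D_n + Q_n^*\fQ_n Q_n - \mu$. The linear part becomes exactly $B_{n,\mu,D}^{(\pm)}\psi_{(*)}$, and the cubic piece $\phi_{(*)n,D}^{(\ge 3)} := D_n^{(*)}\phi_{(*)n}^{(\ge 3)}$ inherits the degree count from part (a). Its norm is controlled by applying $D_n^{(*)}$ to the fixed point equation for $\phi_{(*)n}^{(\ge 3)}$, which exposes only the composite $D_n^{(*)}S_n(\mu)^{(*)}$ already displayed above, for which uniform bounds are provided in \cite{POA}.

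The main obstacle, and the technical heart of the argument, is to verify the contraction hypothesis of Proposition \ref{SUB:propBGEeqnsoln} with constants depending only on the data of Convention \ref{convBGEconstants}, hence independent of $L$ and of the scale $n$. This uniformity requires that all the triple-norm estimates invoked above, on $S_n(\mu)^{(*)}$, $Q_n^*\fQ_n$, $Q_{n,\nu}^{(\pm)}$ and $D_n^{(*)}S_n(\mu)^{(*)}$, be supplied at that quality by \cite{POA}; in particular, the choice $\bar\fm > \fm$ is what leaves a margin of exponential decay to absorb the convolutions hidden inside the weighted $\ell^1$--$\ell^\infty$ norms of the composites that arise when one iterates the fixed point map.
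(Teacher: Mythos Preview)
Your proposal is correct and follows essentially the same route as the paper: part (a) is the same fixed point argument (the paper writes $\phi_{(*)}=S_n(\mu)^{(*)}(\al_{(*)}+\ga_{(*)})$ with $\al_{(*)}=Q_n^*\fQ_n\psi_{(*)}$, which is your substitution up to $\Phi_{(*)}=S_n(\mu)^{(*)}\ga_{(*)}$); part (b) is the same differentiation-then-linear-fixed-point step, with the paper also invoking the discrete product rule \eqref{eqnBGEprodruleTrip} on $\cV'_{(*)}$ and treating $\phi_{(*)}$ as external data before substituting part (a); and part (c) in the paper reads $D_n^{(*)}\phi_{(*)}$ directly off \eqref{eqnBGEbgeqns} and then substitutes part (a), which is algebraically identical to your resolvent-identity computation.
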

\begin{proof} (a) 
We shall write the equations \eqref{eqnBGEbgeqns} for 
$\phi_{(*)}(\psi_*,\psi,\mu,\cV)$ in the form
\begin{equation}\label{eqnBGEalgaeqn}
\vec\ga
=\vec f(\vec\al)+\vec L(\vec\al,\vec\ga)+\vec B\big(\vec\al\,;\, \vec\ga\big)
\end{equation}
as in Appendix \ref{appNormsFixedPoint} or in \cite[(\eqnSUBfixedpteqn.b)]{SUB}
with $X=\cX_n$. In particular, we shall use Proposition \ref{SUB:propBGEeqnsoln}
to supply solutions to those equations. Substituting 
\begin{align*}
\al_*&=Q_n^* \fQ_n\psi_*     
&  \al&= Q_n^* \fQ_n\psi  
& \vec\al& = \big(\al_1,\al_2\big)= \big(\al_*,\al\big)\\
\phi_*&=S_n(\mu)^*\big(\al_*+\ga_*\big) 
& \phi&=S_n(\mu)\big(\al+\ga\big) 
& \vec\ga&=\big(\ga_1,\ga_2\big)=\big(\ga_*,\ga\big) 
\end{align*}
into \eqref{eqnBGEbgeqns} gives
\begin{align*}
\ga_* +\cV'_*\big(S_n(\mu)^*(\al_*+\ga_*)\,,\,
                 S_n(\mu)(\al+\ga)\,,\,S_n(\mu)^*(\al_*+\ga_*)\big)
       &=0\\
\ga+\cV'\big(S_n(\mu)(\al+\ga)\,,\,S_n(\mu)^*(\al_*+\ga_*)\,,\,
                  S_n(\mu)(\al+\ga)\big)
       &=0
\end{align*}
We have the desired form with
\begin{align*}
\vec f(\vec\al)(u)
&=\left[\begin{matrix}
   - \cV'_*\big(u;S_n(\mu)^*\al_*,S_n(\mu) \al,S_n(\mu)^*\al_*\big)\\
   - \cV'\big(u;S_n(\mu) \al,S_n(\mu)^*\al_*,S_n(\mu) \al\big)
    \end{matrix}\right]
\\
\noalign{\vskip0.1in}
\vec L(\vec\al;\vec\ga)(u)
&=\left[\begin{matrix}
         - \cV'_*\big(u;S_n(\mu)^*\al_*,S_n(\mu) \ga,S_n(\mu)^*\al_*\big)\hfill\null\\
        \hskip0.5in 
        - 2\cV'_*\big(u;S_n(\mu)^*\al_*,S_n(\mu) \al,S_n(\mu)^*\ga_*\big)
        \\ 
         - \cV'\big(u;S_n(\mu) \al,S_n(\mu)^*\ga_*,S_n(\mu) \al\big)\hfill\null\\
         \hskip0.5in
         - 2\cV'\big(u;S_n(\mu) \al,S_n(\mu)^*\al_*,S_n(\mu) \ga\big)\hfill\null
       \end{matrix}\right]
\\
\noalign{\vskip0.1in}
\vec B(\vec\al;\vec\ga)(u)
&=\left[\begin{matrix} 
          - \cV'_*\big(u;S_n(\mu)^*\ga_*,S_n(\mu) \al,S_n(\mu)^*\ga_*\big)\hfill\null\\
         \hskip0.5in   
         - 2\cV'_*\big(u;S_n(\mu)^*\ga_*,S_n(\mu) \ga,S_n(\mu)^*\al_*\big)\hfill\null\\ 
         \hskip0.5in 
        - \cV'_*\big(u;S_n(\mu)^*\ga_*,S_n(\mu) \ga,S_n(\mu)^*\ga_*\big)\hfill\null\\ 
         - \cV'\big(u;S_n(\mu) \ga,S_n(\mu)^*\al_*,S_n(\mu) \ga\big)\hfill\null\\
          \hskip0.5in 
        - 2\cV'\big(u;S_n(\mu) \ga,S_n(\mu)^*\ga_*,S_n(\mu) \al\big)\hfill\null\\
        \hskip0.5in    
        - \cV'\big(u;S_n(\mu) \ga,S_n(\mu)^*\ga_*,S_n(\mu) \ga\big)\hfill\null
       \end{matrix}\right]
\end{align*}
Here $V(u_1,u_2,u_3,u_4)$ is the kernel of $\cV$ that has the symmetries
\begin{equation}\label{eqnBGEvnperm}
V(u_1,u_2,u_3,u_4)
=V(u_3,u_2,u_1,u_4)
=V(u_1,u_4,u_3,u_2)
\end{equation}
Now apply  \cite[Proposition \propSUBeqnsoln.a\ and Remark \remSUBcB.a]{SUB},
or Proposition \ref{SUB:propBGEeqnsoln},  with $r=s=2$ and
\begin{equation*}
d_{\rm max}=3\qquad
\fc=\half\qquad
\ka_1= \ka_2 = \|Q_n^*\fQ_n\|_{\fm}\,\wf\qquad 
\la_1=\la_2=\wf
\end{equation*}
(and the metric on $X$ being $\fm$ times the metric on $\cX_n$).
Since
\begin{align*}
\tn f_j\tn_w &\le  \|S_n(\mu)\|_{\fm}^3\|V\|_{\fm}\ka_1\ka_2\ka_j\cr
       &\le 8\|S_n\|_{\fm}^3\|Q_n^*\fQ_n\|_{\fm}^3 
              \ \|V\|_{\fm}\wf^3 \\
%%%%
\tn L_j\tn_{w_{\ka,\la}}
  &\le \|S_n(\mu)\|_{\fm}^3\|V\|_{\fm}\big(2\ka_1\ka_2\la_j
            +\ka_j^2\la_{3-j}\big) \cr
          &\le 24\|S_n\|_{\fm}^3\|Q_n^*\fQ_n\|_{\fm}^2
               \ \|V\|_{\fm}\wf^3\\
%%%%
\tn B_j\tn_{w_{\ka,\la}}
    &\le \|S_n(\mu)\|_{\fm}^3\ \|V\|_{\fm}\big[  
                    \ka_{3-j}\la^2_j+2\ka_j\la_j\la_{3-j}
                            +\la_j^2\la_{3-j} \big]\cr
    &\le8\|S_n\|_{\fm}^3\big(3\|Q_n^*\fQ_n\|_{\fm}+1\big)\ 
                      \|V\|_{\fm}\wf^3
\end{align*}
assuming that $\rrho_1$ has been chosen small enough that 
$\|S_n(\mu)\|_{\fm}\le 2\|S_n\|_{\fm}$.
By hypothesis, $\tn f_j\tn_w,\,\tn L_j\tn_{w_{\ka,\la}},
\,\tn B_j\tn_{w_{\ka,\la}}<\sfrac{1}{8}\la_j$
and \cite[Proposition \propSUBeqnsoln.a]{SUB} gives a solution $\vec\Ga(\vec\al)$
to \eqref{eqnBGEalgaeqn}
that obeys the bound
\begin{equation*}
\tn \Ga_j\tn_w
\le 16\|S_n\|_{\fm}^3\|Q_n^*\fQ_n\|_{\fm}^3 
              \ \|V\|_{\fm}\wf^3
\end{equation*}
Hence
\begin{alignat*}{3}
\phi_*&=\phi_*(\psi_*,\psi,\mu,\cV)& 
&=S_n(\mu)^*\al_*(\psi_*)
   +S_n(\mu)^*  \Ga_1\big(\al_*(\psi_*),\al(\psi)\big)\\
& &&=S_n(\mu)^*Q_n^*\fQ_n\psi_*
   +S_n(\mu)^*  \Ga_1\big(Q_n^*\fQ_n\psi_*,
                        Q_n^*\fQ_n\psi\big)  \\
\phi&=\phi(\psi_*,\psi,\mu,\cV)& 
&=S_n(\mu)\al(\psi)
      +S_n(\mu)\Ga_2\big(\al_*(\psi_*),\al(\psi)\big)\\
& &&=S_n(\mu)Q_n^*\fQ_n\psi
      +S_n(\mu)\Ga_2\big(Q_n^*\fQ_n\psi_*,
                         Q_n^*\fQ_n\psi\big)
\end{alignat*}
and \cite[Corollary \corSUBsubstitution]{SUB} yields all of the claims.

\Item (b)
We denote  $\phi_{(*)}=\phi_{(*)n}(\psi_*,\psi,\mu,\cV)$. Set
\begin{align*}
S^{(+)}=\big[D_n^*
        +Q_{n,\nu}^{(+)}\fQ_nQ_{n,\nu}^{(-)}-\mu\big]^{-1}\qquad
S^{(-)}=\big[D_n
        +Q_{n,\nu}^{(+)}\fQ_nQ_{n,\nu}^{(-)}-\mu\big]^{-1}
\end{align*}
By \cite[Proposition \:\POGmainpos]{POA}, with $S^{(\pm)}=S_{n,\nu}^{(\pm)}(\mu)$,
we have $\|S^{(\pm)}\|_\fm\le\Gam_\op$, assuming that $\rrho_1$ has been 
chosen small enough.
By \cite[(\eqnPOGpartialS) and Remark \remPBSunderivAlg]{POA},
applying $\partial_\nu$ to \eqref{eqnBGEbgeqns}, and then replacing
$\partial_\nu\phi_{(*)}$ by $\phi_{(*)\nu}$ and 
$\partial_\nu\psi_{(*)}$ by $\psi_{(*)\nu}$ gives
\begin{equation}\label{eqnBGEbgderiv}
\begin{split}
\big(S^{(+)}\big)^{-1}\phi_{*\nu}
   +\cV'_*\big(\phi_{*\nu}\,,\,T_\nu^{-1}\phi\,,
                          \,\phi_*+T_\nu^{-1}\phi_*\big)
          +\cV'_*\big(\phi_*\,,\,\phi_\nu\,,\,\phi_*\big)
     &= Q_{n,\nu}^{(+)} \fQ_n\psi_{*\nu}\\
\big(S^{(-)}\big)^{-1}\phi_\nu
   +\cV'\big(\phi_\nu\,,\,T_\nu^{-1}\phi_*\,,\,\phi+T_\nu^{-1}\phi\big)
          +\cV'\big(\phi\,,\,\phi_{*\nu}\,,\,\phi\big)
   &=Q_{n,\nu}^{(+)} \fQ_n\psi_\nu
\end{split}
\end{equation}
with $T_\nu$ being the translation operator by the lattice basis vector 
in direction $\nu$.
Here we have used the translation invariance of $V$, the symmetries
\eqref{eqnBGEvnperm} and the ``discrete product rule''
\begin{equation}\label{eqnBGEprodrule}
\partial_\nu\big(fg\big)
 =(\partial_\nu f)(T_\nu^{-1} g) +f\partial_\nu g
\end{equation}
in the forms
\begin{equation}\label{eqnBGEprodruleTrip}
\begin{split}
\partial_\nu(fgh)
&= (\partial_\nu f)(T_\nu^{-1} g)(T_\nu^{-1} h) 
              +f(\partial_\nu g)(T_\nu^{-1}h) +fg(\partial_\nu h)\\
\partial_\nu(fgf)
&= (\partial_\nu f)(T_\nu^{-1} g)(T_\nu^{-1} f) 
              +f(T_\nu^{-1}g)(\partial_\nu f) +f(\partial_\nu g)f
\end{split}
\end{equation}
The equations \eqref{eqnBGEbgderiv} are of the form
\begin{equation}\label{eqnBGEbgderivvec}
\vec\ga
=\vec f(\vec\al)+\vec L(\vec\al,\vec\ga)+\vec B\big(\vec\al\,;\, \vec\ga\big)
\end{equation}
as in \cite[(\eqnSUBfixedpteqn.b)]{SUB}, with
\begin{align*}
\al_*&=\phi_*   
     &  \al&= \phi 
     & \al_{*\nu}&=Q_{n,\nu}^{(+)} \fQ_n\psi_{*\nu}     
     &  \al_\nu&= Q_{n,\nu}^{(+)} \fQ_n\psi_\nu   
     & \vec\al& = \big(\al_*,\al,\al_{*\nu},\al_\nu\big)\\
& &
& &
     \phi_{*\nu}&=S^{(+)}(\al_{*\nu}+ \ga_*) &
     \phi_\nu&=S^{(-)}(\al_\nu+\ga) 
    & \vec\ga&=\big(\ga_*,\ga\big) 
\end{align*}
and
\begin{align*}
\vec f(\vec\al)
&=-\left[\begin{matrix}\cV'_*\big(S^{(+)}\al_{*\nu}\,,\,
                          T_\nu^{-1}\al\,,\,\al_*+T_\nu^{-1}\al_*\big) 
                 +\cV'_*\big(\al_*\,,\,S^{(-)}\al_\nu\,,\,\al_*\big) \\
         \cV'\big(S^{(-)}\al_\nu\,,\,T_\nu^{-1}\al_*\,,\,
                                        \al+T_\nu^{-1}\al\big)
        +\cV'\big(\al\,,\,S^{(+)}\al_{*\nu}\,,\,\al\big)
                \end{matrix}\right]
\\
\noalign{\vskip0.1in}
\vec L(\vec\al;\vec\ga)
&=-\left[\begin{matrix}
          \cV'_*\big(S^{(+)}\ga_*\,,\,
                   T_\nu^{-1}\al\,,\,\al_*+T_\nu^{-1}\al_*\big) 
           +\cV'_*\big(\al_*\,,\,S^{(-)}\ga\,,\,\al_*\big) \\
    \cV'\big(S^{(-)}\ga\,,\,T_\nu^{-1}\al_*\,,\,
                                        \al+T_\nu^{-1}\al\big)
    +\cV'\big(\al\,,\,S^{(+)}\ga_*\,,\,\al\big) 
             \end{matrix}\right]
\\
\noalign{\vskip0.1in}
\vec B(\vec\al;\vec\ga)
&=0
\end{align*}
Now apply  \cite[Proposition \propSUBeqnsoln.a]{SUB} with $\fc=\half$ and
\begin{align*}
\ka_1&= \ka_2 = \Gam_\op\|Q_n^*\fQ_n\|_{\fm} \wf
                 +\GGa_1   \|V_n\|_{\fm}\wf^3\\
\la_1&=\la_2=\ka_3=\ka_4=\|Q_{n,\nu}^{(+)}\fQ_n\|_{\fm}\wf'
\end{align*}
Since
\begin{align*}
\tn f_j\tn_w &\le  \max_{\si=+,-}\|S_\mu^{(\si)}\|_{\fm}\,\|V_n\|_\fm
     \Big[ 2 e^{2\veps_n\fm}\ka_1\ka_2\ka_{2+j}+\ka_j^2\ka_{5-j}\Big]
 \le b\la_j \\
%%%%
\tn L_j\tn_{w_{\ka,\la}}&\le \max_{\si=+,-}\|S_\mu^{(\si)}\|_{\fm}\,\|V_n\|_\fm
     \Big[ 2 e^{2\veps_n\fm}\ka_1\ka_2\la_j+\ka_j^2\la_{3-j}\Big]
 \le b\la_j\\
%%%%
\tn B_j\tn_{w_{\ka,\la}}&=0 
\end{align*}
where $\veps_n=\sfrac{1}{L^n}$ and 
\begin{align*}
b&=3\max_{\si=+,-}\|S_\mu^{(\si)}\|_{\fm}\ e^{2\veps_n\fm}
         \Big[\Gam_\op\|Q_n^*\fQ_n\|_{\fm} 
             +\GGa_1   \|V\|_{\fm}\wf^2\Big]^2\|V\|_{\fm}\wf^2 %\cr &
  \le \const\|V\|_{\fm}\wf^2 %\cr &
  \le\sfrac{1}{4}
\end{align*}
by the hypotheses, \cite[Propositions \propSUBeqnsoln.a]{SUB} gives
a solution $\vec\Ga(\vec\al)$ to \eqref{eqnBGEbgderivvec} with 
\begin{align*}
&\tn \Ga_1\tn_{w_{\ka,\la}},
\tn \Ga_2\tn_{w_{\ka,\la}}
\le \GGa_1' \|V\|_{\fm}\wf^2 \wf'
\end{align*}
As \eqref{eqnBGEbgderivvec} is a linear system of equations and 
$b\le\sfrac{1}{4}$, the solution is unique. Correspondingly 
\begin{align*}
\phi_{*\nu}
  &= B^{(+)}_{n,\mu,\nu}\psi_{*\nu}
    +S^{(+)}  \Ga_1\big(\al_*(\phi_*),\al(\phi),
               \al_{*\nu}(\psi_{*\nu}),\al_\nu(\psi_\nu)\big)\\
\phi_\nu
  &= B^{(-)}_{n,\mu,\nu}\psi_\nu
     +S^{(-)}\Ga_2 \big(\al_*(\phi_*),\al(\phi),
               \al_{*\nu}(\psi_{*\nu}),\al_\nu(\psi_\nu)\big)
\end{align*}
solves \eqref{eqnBGEbgderiv}.
The conclusion now follows by part (a) and 
\cite[Corollary \corSUBsubstitution]{SUB}. 

That $\partial_\nu \phi_{(*)n}^{(\ge 3)}(\psi_*,\psi,\mu,\cV)
=\phi_{(*)n,\nu}^{(\ge 3)}\big(\psi_*,\!\psi, 
                    \partial_\nu\psi_*,\partial_\nu\psi,
                    \mu,\cV\big)$
follows from the observation that $\partial_\nu S_n(\mu)^{(*)}Q_n^* \fQ_n
=B^{(\pm)}_{n,\mu,\nu}$, by
\cite[(\eqnPOGpartialS) and Remark \remPBSunderivAlg]{POA}.

\Item (c)
From \eqref{eqnBGEbgeqns} we see  
\begin{align*}
   D_n^*\phi_*
    &= Q_n^*\fQ_n\psi_*
       -\big(Q_n^*\fQ_nQ_n-\mu\big)\phi_*
       -\cV'_*(\phi_*,\phi,\phi_*)\\
%%%%%%%%%%%%%%%%%%%%%%%%%%%%%
 D_n\phi
   &=Q_n^*\fQ_n\psi
       -\big(Q_n^*\fQ_nQ_n-\mu\big)\phi
       -\cV'(\phi,\phi_*,\phi)
\end{align*}
with $\phi_{(*)}=\phi_{(*)n}$. 
Now just substitute for $\phi_{(*)n}$ using part (a). 
\end{proof}

\begin{remark}[The complex conjugate of the background field]\label{remBGEbgfandcomplexconj}
Assume that the constants $\GGa_1$, $\rrho_1>0$ of Proposition \ref{propBGEphivepssoln}
are chosen big enough and small enough, respectively, and 
fulfil its hypotheses.
Let $\psi(x)$ be a field on $\cX_0^{(n)}$ such that $|\psi(x)| <\wf$ and
$|\partial_\nu \psi(x)| <\wf'$ for all $x\in \cX_0^{(n)}$ and $0\le \nu
\le 3$. Then
\begin{equation*}
\big|   \phi_{*n}(\psi^*,\psi,\mu,\cV)^*(u)- \phi_{n}(\psi^*,\psi,\mu,\cV)(u)\big|\le
\GGa_1 \wf'   \qquad \qquad {\rm for\ all\ } u \in \cX_n
\end{equation*}
\end{remark}

\begin{proof}
Write $\phi_{(*)} = \phi_{(*)n}(\psi^*,\psi,\mu,\cV)$.
By Proposition \ref{propBGEphivepssoln} and \cite[Lemma \lemSUBLp.b]{SUB}
\begin{equation}\label{eqnBGEsupphiderphi}
|\phi(u)| \le \GGa_1 \wf  \quad {\rm and} \quad |\partial_\nu\phi(u)| \le \GGa_1 \wf'
\qquad {\rm for \ all\ } u \in \cX_n,\ 0\le\nu\le 3
\end{equation}
By \eqref{eqnBGEbgeqns} and the fact that 
$\,S^{-1}_n(\mu)  -S^{-1}_n(\mu)^\dagger = D_n-D_n^\dagger\,$ (see 
the definition of $S_n(\mu)$ after \eqref{eqnBGEbgeqns})
\begin{equation*}
S^{-1}_n(\mu)( \phi_*^*- \phi)+\cV'_*(\phi_*,\phi,\phi_*)^*-\cV'(\phi,\phi_*,\phi)
              =(D_n-D_n^\dagger) \phi_*^*   
\end{equation*}
where $\dagger$ refers to the adjoint.
Localizing as in \cite[Corollary \corLprelocalize]{PAR2},
\begin{equation}\label{eqnBGEestcalVL}
S^{-1}_n(\mu)( \phi_*^*- \phi)
+ \rv \phi^* ( \phi_*^*+ \phi)\,( \phi_*^*- \phi)  
- \rv \phi^2\,( \phi_*^*- \phi)^*  
              =(D_n-D_n^\dagger) \,\phi_*^* + \cV_\loc(\phi_*,\phi)
\end{equation}
where $\rv = \int V(0,u_1,u_2,u_3)\ du_1\,du_2\,du_3$ and $ \cV_\loc(\phi_*,\phi)$ is a field such that
\begin{equation*}
\big|  \cV_\loc(\phi_*,\phi)(u)\big| \le \const \wf'  \qquad \qquad {\rm for\ all\ } u \in \cX_n
\end{equation*}
By \cite[(\eqnPDOdndef)]{POA},
\begin{align*}
D_n-D_n^\dagger
&=L^{2n}\ \bbbl_*^{-n} e^{-\oh_0}
         \big( \partial_0^\dagger- \partial_0\big)\bbbl_*^n \\
&=e^{-\bbbl_*^{-n}\oh_0\bbbl_*^n}
        \ \big( \partial_0^\dagger- \partial_0\big) 
\end{align*}
Beware that in the first line $\partial_0$ acts on the $\cH_0^{(n)}$,
while in the second line $\partial_0$ acts on $\cH_n$.
Hence, by \eqref{eqnBGEsupphiderphi}
\begin{equation}\label{eqnBGEestDnDndagger}
\big| (D_n-D_n^\dagger) \phi_*^*(u) \big| \le \const \wf'
 \qquad \qquad {\rm for\ all\ } u \in \cX_n
\end{equation}
Also considering the complex conjugate, we see that $\si =  \phi_*^*- \phi$
fulfils the equations
\begin{equation}\label{eqnBGEeqnfordifference}
\begin{split}
\big[\bbbone +S_n(\mu)\,\rv  \phi^* ( \phi_*^*+\phi) \big] \si 
-  S_n(\mu) \,\rv \phi^2\,\si^*
     &=S_n(\mu)\big[(D_n-D_n^\dagger) \,\phi_*^* + \cV_\loc(\phi_*,\phi)\big]
\\
\big[\bbbone +\overline{S_n(\mu)}\,\rv  \phi ( \phi_*+ \phi^*) \big]\si^*
-  \overline{S_n(\mu)} \,\rv {\phi^*}^2\,\si 
     &= \overline{S_n(\mu)}\,\big[ (\overline{D_n}-D_n^*) \,\phi_*
                      + \cV_\loc(\phi_*,\phi)^*\,\big]
\end{split}
\end{equation}
where, in the square brackets on the left hand side, 
$\phi^* ( \phi_*^*+ \phi) $ and $\phi ( \phi_*+\phi^*)$, respectively,
are viewed as multiplication operators. 
By \cite[Proposition \POGmainpos]{POA} and \eqref{eqnBGEsupphiderphi},
the $L^1$--$L^\infty$ norm of the operators 
$S_n(\mu)\,\rv  \phi^* ( \phi_*^*+\phi) $
and $ S_n(\mu) \,\rv \phi^2$ is bounded by $2\GGa_\op\,\rrho_1\le\sfrac{1}{4}$.
Hence, one can solve \eqref{eqnBGEeqnfordifference} for $\si$ and $\si^*$, 
and the estimates \eqref{eqnBGEestcalVL} and
\eqref{eqnBGEestDnDndagger} for the terms on the right hand side give the desired estimate.
\end{proof}

\begin{remark}[Third order terms of the background field] \label{remBGEphivepssoln}
Proposition  \ref{propBGEphivepssoln}.a states that the linear part 
of the background field $\phi_{(*)n}(\psi_*,\psi,\mu,\cV)$ is
\begin{equation*}
\phi_{(*)n}^{(1)}(\psi_*,\psi,\mu,\cV) = S_n(\mu)^{(*)}Q_n^* \fQ_n\,\psi_{(*)}
\end{equation*}
and that the higher order terms $\phi_{(*)n}^{(\ge 3)}$ are of degree at
least three in $\psi_*,\psi$. In fact, the term of degree exactly three
can be described easily. There is a constant $\hat \GGa_1$ and there are field maps  $\phi_{(*)n}^{(\ge 5)}$ such that
\begin{align*}
\phi_{*n}^{(\ge 3)}(\psi_*,\psi,\mu,\cV)
  &= - S_n(\mu)\,\cV_*'\big(\Phi_*,\Phi,\Phi_*\big) 
\big|_{\Phi_{(*)}= \phi_{(*)n}^{(1)}(\psi_*,\psi,\mu,\cV) }
        \ \   + \phi_{(*)n}^{(\ge 5)}(\psi_*,\psi,\mu,\cV)
\\
\phi_{n}^{(\ge 3)}(\psi_*,\psi,\mu,\cV)
  &= - S_n(\mu)\,\cV'\big(\Phi,\Phi_*,\Phi\big) 
\big|_{\Phi_{(*)}= \phi_{(*)n}^{(1)}(\psi_*,\psi,\mu,\cV) }
        \ \   + \phi_{(*)n}^{(\ge 5)}(\psi_*,\psi,\mu,\cV)
\end{align*}
and
$\ 
\TN \phi_{(*)n}^{(\ge 5)}\TN
\le \hat\GGa_1 \|V\|_{\fm}^2\wf^5
\ $.
\end{remark}
\begin{proof}  We prove the statement about $\phi_n^{(\ge 3)}$.
Write $\phi_{(*)}=\phi_{(*)n}(\psi^*,\psi,\mu,\cV)$  and 
$\Phi_{(*)}= \phi_{(*)n}^{(1)}(\psi_*,\psi,\mu,\cV)$.
By \eqref{eqnBGEbgeqns},
\begin{align*}
\phi &=S_n(\mu) Q_n^* \fQ_n\psi - S_n(\mu)\cV'(\phi,\phi_*,\phi)\\
  &=\Phi -  S_n(\mu)\cV'(\Phi+\phi^{(\ge 3)},
                                 \Phi_*+\phi_*^{(\ge 3)},
                                 \Phi+\phi^{(\ge 3)})\\
&=\Phi- S_n(\mu)\,\cV'\big(\Phi,\Phi_*,\Phi\big) 
          + \phi_{(*)n}^{(\ge 5)}(\psi_*,\psi,\mu,\cV)
\end{align*}
with
\begin{equation*}
 \phi_n^{(\ge 5)}(\psi^*,\psi)
   = - S_n(\mu)\big\{\cV'(\Phi+\phi^{(\ge 3)},
                                 \Phi_*+\phi_*^{(\ge 3)},
                                 \Phi_*+\phi_*^{(\ge 3)})
                  - \cV'(\Phi,\Phi_*,\Phi)\big\}
\end{equation*}
The estimate on $ \phi_n^{(\ge 5)}$ follows from Proposition  \ref{propBGEphivepssoln}.a
and \cite[Lemma \lemSUBdiff]{SUB}.
\end{proof}

To derive a representation of the background fields of the form 
\eqref{eqnBGEbglocal} from Proposition \ref{propBGEphivepssoln}, we use
\begin{lemma}\label{lemBGEexpandalphiveps}  
There are field maps 
     $F_\lb\big(\{\psi_\nu\}\big)$ 
and 
     $F_{\lb*}\big(\{\psi_{*\nu}\}\big)$ 
and a constant $\GGa_2$ such that
\begin{align*}
\big(S_n(\mu)^{(*)}Q_n^* \fQ_n\psi_{(*)}\big)(u)
&=\sfrac{a_n}{a_n-\mu}\psi_{(*)}\big(X(u)\big)
          +F_{\lb(*)}(\{\partial_\nu\psi_{(*)}\})(u)
\end{align*}
and
\begin{align*}
\tn F_{\lb(*)}\tn&\le  \GGa_2 \,\wf'\,
   \|S_n(\mu)^{(*)}Q_n^* \fQ_n\|_{\bar\fm}
\end{align*}
Furthermore, the maps $F_{\lb(*)}$ are of degree precisely one.

\end{lemma}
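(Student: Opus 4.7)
The plan is to exploit that $S_n(\mu)^{(*)}Q_n^*\fQ_n$ sends the constant field to a constant field with eigenvalue $a_n/(a_n-\mu)$, and then to express its action on a general $\psi_{(*)}$ as that local constant contribution at $X(u)$ plus a telescoping sum of nearest--neighbour differences. For the preservation-of-constants claim I would specialise Remark \ref{remBGEcnstFld} and Proposition \ref{propBGEphivepssoln}(a) to $\cV=0$: the latter reduces to $\phi_{(*)n}(\psi_*,\psi,\mu,0)=S_n(\mu)^{(*)}Q_n^*\fQ_n\,\psi_{(*)}$, while the former, with $\rv=0$, collapses the background field equations to $(a_n-\mu)\phi_{(*)}=a_n\psi_{(*)}$. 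Comparing for constant $\psi_{(*)}=c\,\bbbone$ yields $S_n(\mu)^{(*)}Q_n^*\fQ_n(c\,\bbbone)=\frac{a_n}{a_n-\mu}\,c\,\bbbone$; equivalently, writing $K(u,x)$ for the kernel of $S_n(\mu)^{(*)}Q_n^*\fQ_n$, $\sum_{x\in\cX_0^{(n)}} K(u,x)=a_n/(a_n-\mu)$ for every $u\in\cX_n$.

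Using this identity I would decompose
\[
\bigl(S_n(\mu)^{(*)}Q_n^*\fQ_n\,\psi_{(*)}\bigr)(u)
= \frac{a_n}{a_n-\mu}\,\psi_{(*)}\bigl(X(u)\bigr) + \sum_{x\in\cX_0^{(n)}} K(u,x)\,\bigl[\psi_{(*)}(x)-\psi_{(*)}\bigl(X(u)\bigr)\bigr],
\]
so that the second summand is the candidate $F_{\lb(*)}$. For each pair $(u,x)$ I would fix a canonical axial lattice path $\pi(u,x)\subset\cX_0^{(n)}$ from $X(u)$ to $x$, of length $|\pi(u,x)|=|X(u)-x|_1\le|u-x|_1+\const$, and telescope
\[
\psi_{(*)}(x) - \psi_{(*)}\bigl(X(u)\bigr) = \sum_{(y,\nu,\epsilon)\in\pi(u,x)} \epsilon\,(\partial_\nu\psi_{(*)})(y),
\]
which, after collecting terms, gives $F_{\lb(*)}(\{\psi_{(*)\nu}\})(u)=\sum_{\nu=0}^{3}\sum_{y\in\cX_0^{(n)}}G_\nu(u,y)\,\psi_{(*)\nu}(y)$, manifestly a field map of degree precisely one in the derivative fields.

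The remaining task, which I expect to be the main obstacle, is to bound $\|G_\nu\|_\fm$ by $\|S_n(\mu)^{(*)}Q_n^*\fQ_n\|_{\bar\fm}$; this is where the strict gap $\bar\fm>\fm$ is used. Any $y\in\pi(u,x)$ satisfies $|u-y|\le|u-x|+\const$, and each path contributes at most $|u-x|+\const$ edges. In the $\sup_u$ direction, each fixed $x$ therefore contributes at most $(|u-x|+\const)\,e^{\fm|u-x|}\,|K(u,x)|$, which is dominated by $\const\,e^{\bar\fm|u-x|}\,|K(u,x)|$ by the standard absorption of the linear prefactor into the stronger exponential weight. The $\sup_y$ direction is more delicate: for a fixed edge $(y,\nu,\epsilon)$, the axial choice of $\pi(u,x)$ restricts the pairs $(u,x)$ with $(y,\nu,\epsilon)\in\pi(u,x)$ to a combinatorially controlled set on which the same geometric bound $|u-y|\le|u-x|+\const$, combined with the gap $\bar\fm-\fm>0$, reduces the estimate to the $\bar\fm$-weighted norm of $K$. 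Summing over the four values of $\nu$ and extracting the derivative-field weight $\wf'$ then yields $\tn F_{\lb(*)}\tn\le\GGa_2\,\wf'\,\|S_n(\mu)^{(*)}Q_n^*\fQ_n\|_{\bar\fm}$, completing the proof.
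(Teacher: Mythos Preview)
Your proposal is correct and follows essentially the same route as the paper: establish that $S_n(\mu)^{(*)}Q_n^*\fQ_n$ has the constant field as an eigenvector with eigenvalue $a_n/(a_n-\mu)$, split off the local term $\psi_{(*)}(X(u))$, and control the remainder $\sum_x K(u,x)[\psi_{(*)}(x)-\psi_{(*)}(X(u))]$ as a linear map in $\{\partial_\nu\psi_{(*)}\}$ via a telescoping path argument with the mass gap $\bar\fm>\fm$.

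Two minor differences are worth noting. First, the paper obtains the constant--field identity directly from the algebra $Q_n^*1=1_\fin$, $\fQ_n1=a_n1$, $D_n1_\fin=0$ (citing \cite[Remark \remMXconsteigen]{PAR2}), whereas you deduce it by specialising Proposition~\ref{propBGEphivepssoln}(a) and Remark~\ref{remBGEcnstFld} to $\cV=0$; your route is valid but slightly indirect. Second, the paper does not spell out the telescoping and the two--sided $\|G_\nu\|_\fm$ bound but defers both to \cite[Lemma \lemLexpandalphiveps]{PAR2}; what you sketch is precisely the content of that cited lemma, and your use of the strict inequality $\bar\fm>\fm$ to absorb the linear path--length factor is exactly the mechanism that makes it work.
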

\begin{proof}  We prove the lemma for
$
B=S_n(\mu)Q_n^* \fQ_n
$.
Denote by $1$ and $1_\fin$ the constant fields on $\cX_0^{(n)}$ and
$\cX_n$, respectively, that always take the value $1$.
By \cite[Remark \remMXconsteigen]{PAR2}, $Q_n1_\fin=1$, $Q_n^*1=1_\fin$ 
and $\fQ_n 1=a_n1$. Since $D_n$ annihilates constant fields,
\begin{align*}
B\,1
=S_n(\mu)Q_n^* \fQ_n 1
=(D_n+Q_n^*\fQ_nQ_n-\mu)^{-1}Q_n^* \fQ_n 1
=\sfrac{a_n}{a_n-\mu}1_\fin
\end{align*}
Fix any $u\in\cX_n$ and any field $\psi$ on $\cX_0^{(n)}$.  Then
\begin{align*}
&\big(B\psi\big)(u)
=\sum_{x\in\cX^{(n)}_0} B(u,x)\,\psi(x)\\
&\hskip0.5in
=\sum_{x\in\cX^{(n)}_0} B(u,x)\,
                              \psi\big(X(u)\big)
+\sum_{x\in\cX^{(n)}_0} 
    B(u,x)\,
    \big[\psi(x)-\psi\big(X(u)\big)\big]\\
&\hskip0.5in
=\sfrac{a_n}{a_n-\mu}\psi\big(X(u)\big)
+\sum_{x\in\cX^{(n)}_0} 
   B(u,x)\, \big[\psi(x)-\psi\big(X(u)\big)\big]
\end{align*}
It now suffices to apply \cite[Lemma \lemLexpandalphiveps]{PAR2}. 
\end{proof}

\begin{corollary}\label{corBGEexpandaldephi}
There are field maps  $\brphi_{(*)n}$ 
and a constant $\GGa_3$ such that, under 
the hypotheses of Proposition \ref{propBGEphivepssoln},
\begin{align*}
\phi_{(*)n}(\psi_*,\psi,\mu,\cV)(u)
&=\sfrac{a_n}{a_n-\mu}\psi_{(*)}\big(X(u)\big)
           +\brphi_{(*)n}\big((\psi_*,\{\partial_\nu\psi_*\})\,,\,
                       (\psi,\{\partial_\nu\psi\})\,,\,\mu,\cV\big)(u)
\end{align*}
and
$$
\tn \brphi_{(*)n} \tn
\le \GGa_3\big(\wf' + \|V\|_{\fm}\wf^3\big)
$$
\end{corollary}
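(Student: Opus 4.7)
The proof is essentially a matter of combining the two ingredients that have just been proved. My plan is as follows.

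First, I would invoke Proposition \ref{propBGEphivepssoln}(a) to decompose
\begin{equation*}
\phi_{(*)n}(\psi_*,\psi,\mu,\cV) = S_n(\mu)^{(*)}Q_n^* \fQ_n\,\psi_{(*)} + \phi_{(*)n}^{(\ge 3)}(\psi_*,\psi,\mu,\cV),
\end{equation*}
with $\TN\phi_{(*)n}^{(\ge 3)}\TN \le \GGa_1\|V\|_\fm \wf^3$. Then I would apply Lemma \ref{lemBGEexpandalphiveps} to the linear piece, which rewrites
\begin{equation*}
\bigl(S_n(\mu)^{(*)}Q_n^* \fQ_n\psi_{(*)}\bigr)(u) = \sfrac{a_n}{a_n-\mu}\psi_{(*)}\bigl(X(u)\bigr) + F_{\lb(*)}\bigl(\{\partial_\nu\psi_{(*)}\}\bigr)(u).
\end{equation*}
Setting
\begin{equation*}
\brphi_{(*)n}\bigl((\psi_*,\{\partial_\nu\psi_*\}),(\psi,\{\partial_\nu\psi\}),\mu,\cV\bigr)
 = F_{\lb(*)}\bigl(\{\partial_\nu\psi_{(*)}\}\bigr) + \phi_{(*)n}^{(\ge 3)}(\psi_*,\psi,\mu,\cV)
\end{equation*}
gives the desired representation.

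For the norm bound, the triangle inequality for $\tn\cdot\tn$ (as constructed in Appendix \ref{appNormsFixedPoint}) reduces the task to bounding each summand separately. The first term is controlled by Lemma \ref{lemBGEexpandalphiveps} by $\GGa_2 \wf' \|S_n(\mu)^{(*)}Q_n^* \fQ_n\|_{\bar\fm}$, and the operator norm here is bounded by a constant depending only on $\Gam_\op$ via the bounds on $S_n(\mu)$, $Q_n$, $\fQ_n$ collected from \cite{POA} (under the smallness hypothesis on $\mu$ from Proposition \ref{propBGEphivepssoln}). The second term is bounded by $\GGa_1\|V\|_\fm\wf^3$ from part (a) of Proposition \ref{propBGEphivepssoln}. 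Choosing $\GGa_3$ to be the larger of these two absolute constants yields
\begin{equation*}
\tn\brphi_{(*)n}\tn \le \GGa_3\bigl(\wf' + \|V\|_\fm \wf^3\bigr).
\end{equation*}

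There is no real obstacle here; the only point requiring a bit of care is verifying that the \emph{arguments} of $\brphi_{(*)n}$ are correctly accounted for in the norm. Namely, $F_{\lb(*)}$ is a field map in the derivative variables $\{\partial_\nu\psi_{(*)}\}$ (weighted by $\wf'$), while $\phi_{(*)n}^{(\ge 3)}$ is a field map in $(\psi_*,\psi)$ (weighted by $\wf$); thus their sum is most naturally interpreted as a field map in the combined collection $\bigl((\psi_*,\{\partial_\nu\psi_*\}),(\psi,\{\partial_\nu\psi\})\bigr)$, with each summand independent of the variables of the other. Provided the norm $\tn\cdot\tn$ on such combined field maps is defined (as it is in Appendix \ref{appNormsFixedPoint}) by summing the weighted kernel norms term by term, the triangle inequality applies without change, and the proof is complete.
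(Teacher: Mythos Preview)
Your proof is correct and follows essentially the same approach as the paper: decompose $\phi_{(*)n}$ via Proposition~\ref{propBGEphivepssoln}(a), rewrite the linear piece using Lemma~\ref{lemBGEexpandalphiveps}, set $\brphi_{(*)n}=F_{\lb(*)}+\phi_{(*)n}^{(\ge 3)}$, and bound the two summands separately. The paper's proof is slightly more terse but uses exactly the same decomposition and the same two ingredients, with the operator norm $\|S_n(\mu)^{(*)}Q_n^*\fQ_n\|_{\bar\fm}$ absorbed into the constant $\GGa_3$ just as you do.
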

\begin{proof}  Proposition \ref{propBGEphivepssoln}.a and Lemma \ref{lemBGEexpandalphiveps} imply that
\begin{align*}
\phi_{(*)n}(\psi_*,\psi,\mu,\cV)(u)
&=\big(S_n(\mu)^{(*)}Q_n^* \fQ_n\psi_{(*)}\big)(u)
      +\phi_{(*)n}^{(\ge 3)}(\psi_*,\psi,\mu,\cV)(u)\\
&=\sfrac{a_n}{a_n-\mu}\psi_{(*)}\big(X(u)\big)
           +F_{\lb(*)}(\{\partial_\nu\psi_{(*)}\})(u)
      + \phi_{(*)n}^{(\ge 3)}(\psi_*,\psi,\mu,\cV)(u)\\
&=\sfrac{a_n}{a_n-\mu}\psi_{(*)}\big(X(u)\big)
           +\brphi_{(*)n}\big((\psi_*,\{\partial_\nu\psi_*\})\,,\,
                       (\psi,\{\partial_\nu\psi\})\,,\,\mu,\cV\big)(u)
\end{align*}
with
$$
\tn \brphi_{(*)n}\tn
\le \GGa_2 
   \|S_n(\mu)^{(*)}Q_n^* \fQ_n\|_{\bar\fm}\ \wf'
   +\GGa_1 \|V\|_{\fm}\wf^3
\le  \GGa_3\big(\wf' + \|V\|_{\fm}\wf^3\big)
$$
\end{proof}

\newpage
%%%%%%%%%%%%%%%%%%%%%%%%%%%%%%%%%%%%%%%%%%%%%%%%%%%%%%%%%%
\section{Variations of the Background Field with Respect to $\psi$}
\label{secBGEvrnspsi}
%%%%%%%%%%%%%%%%%%%%%%%%%%%%%%%%%%%%%%%%%%%%%%%%%%%%%%%%%

Recall from \cite[(\eqnSTdehatphidef)]{PAR1} that 
\begin{equation}\label{eqnBGEdehatphidef}
\de\hat\phi_{(*)n+1}(\psi_*,\psi,z_*,z)
  =\bbbs\big[\de\check\phi_{(*)n+1}\big(\bbbs^{-1}\psi_*\,,\,
      \bbbs^{-1}\psi_*\,,\,
      D^{(n)*}\bbbl_* z_*\,,\,
      D^{(n)}\bbbl_* z\,,\,
      \mu,\cV\big)\big]
\end{equation}
where
\begin{itemize}[leftmargin=*, topsep=2pt, itemsep=2pt, parsep=0pt]
\item[$\circ$] the fields 
\begin{align*}
&\de\check\phi_{(*)n+1}(\th_*,\th,\de\psi_*,\de\psi,\mu,\cV) \\ &\hskip0.5in
= \Big[\phi_{(*)n}\big(\psi_{*}+\de\psi_*,\psi+\de\psi,\mu,\cV\big)
       - \phi_{(*)n}\big(\psi_{*},\psi,\mu,\cV\big)\Big]_{\psi_{(*)}=\psi_{(*)n}(\th_*,\th,\mu,\cV)}
\end{align*}
were defined in \cite[Definition \defBGAbckgndVarn.a]{PAR1},
\item[$\circ$] 
the scaling operators $\bbbs$ and $\bbbl_*$ were defined in 
\cite[Appendix \appDEFscaling]{PAR1}, and
\item[$\circ$] the  operator  square root $\,D^{(n)}\,$ of the fluctuation 
field covariance $C^{(n)}$ was defined just before \cite[(\eqnHTcn)]{PAR1}.

\end{itemize}
The fields $\de\hat\phi_{(*)n+1}$ also depend implicitly on $\mu$ and $\cV$.
Proposition \ref{propBGEdephisoln}, below, implies that $\de\hat\phi_{(*)n+1}$
are analytic maps in $(\psi_*,\psi,z_*,z)$ from a neighborhood 
of the origin in
$\,\cH_0^{(n+1)}\times \cH_0^{(n+1)}\times \cH_1^{(n)}\times \cH_1^{(n)}\,$ 
to $\,\cH_{n+1}^{(0)}$. 
As in \cite[\S\chapOSFfluct]{PAR2},  we define, on the space of field maps
$F(\psi_*,\psi,z_*,z)$, the projections
\begin{itemize}[leftmargin=*, topsep=2pt, itemsep=2pt, parsep=0pt]
\item[$\circ$] 
$P^\psi_2$ 
which extracts the part of degree exactly one in each of $\psi_*$ 
and $\psi$, and of arbitrary degree in $z_{(*)}$ and
\item[$\circ$]
$P^\psi_1$ 
which extracts the part of degree exactly one in $\psi_{(*)}$, 
and of arbitrary degree in $z_{(*)}$  and
\item[$\circ$]
$P^\psi_0$  which extracts the part of degree zero in 
$\psi_{(*)}$ and of arbitrary degree in $z_{(*)}$. 
\end{itemize}

\begin{proposition}\label{propBGEdephisoln}
There are constants\footnote{Recall Convention \ref{convBGEconstants}.} 
$\GGa_4$ and $\rrho_2>0$ such that the following hold, if
\begin{equation*}
\max\big\{\ L^2|\mu|\ ,\  
       \|V\|_\fm (\wf+L^9\wf_\fl)(\wf+\wf'+L^9\wf_\fl)\ \big\}
\le \rrho_2
\end{equation*}
\begin{enumerate}[label=(\alph*), leftmargin=*]
\item[$\circ$]
The field maps $\de\hat\phi_{(*)n+1}(\psi_*,\psi,z_*,z)$ obey\ \ \ 
$
\tn\de\hat\phi_{(*)n+1}\tn
\le L^{11}\GGa_4\, \wf_\fl
$.

\item[$\circ$]
Write, as in \cite[(\eqnOSAhatphiplus)]{PAR1}
\begin{align*}
\de\hat\phi_{(*)n+1}^{(+)}(\psi_*,\psi,z_*,z)
=\de\hat\phi_{(*)n+1}(\psi_*,\psi,z_*,z)
    -L^{3/2}\bbbs S_n^{(*)}Q_n^*\fQ_n D^{(n)(*)}\bbbs^{-1}z_{(*)}
\end{align*}
It obeys\ \ \ 
$
\TN \de\hat\phi_{(*)n+1}^{(+)}\TN
\le L^{29} \GGa_4\,\{\|V\|_\fm(\wf+\wf_\fl)^2+|\mu|\}
              \wf_\fl
$.
\item[$\circ$]
The part, $\de\hat\phi_{(*)n+1}^{(\ge 2)}$, of 
$\de\hat\phi_{(*)n+1}^{(+)}$ that is of degree 
at least two in $z_{(*)}$, fulfils the bound
\begin{align*}
\TN \de\hat\phi_{(*)n+1}^{(\ge 2)}\TN
& \le L^{29} \GGa_4\, \|V\|_\fm(\wf+\wf_\fl)\wf_\fl^2
\end{align*}

\item[$\circ$] 
Using the notation of \cite[Definition \defBGAgradV]{PAR1}, we have
\begin{align*}
\de\hat\phi_{*n+1}^{(+)}(\psi_*,\psi,z_*,z)
&=L^{3/2}\bbbs [S_n(\mu)^*-S_n^*]Q_n^*\fQ_n 
                                   D^{(n)*}\bbbs^{-1}z_*\\
&\hskip0.75in
  - L^{\sfrac{3}{2}}\bbbl_*^{-1} S_n(\mu)^*\cV'_*(\varphi_*, \varphi,\varphi_*) 
      \Big|^{\atop{\varphi_*=\phi_*+\de\phi_*}
                  {\varphi=\phi +\de\phi}}
     _{\atop{\varphi_*=\phi_*}
            {\varphi=\phi}}
  + \de\hat\phi_*^{(\ho)}\cr
%%%%
\de\hat\phi_{n+1}^{(+)}(\psi_*,\psi,z_*,z)
&=L^{3/2}\bbbs [S_n(\mu)-S_n]Q_n^*\fQ_n 
                                   D^{(n)}\bbbs^{-1}z\\
&\hskip0.75in
  - L^{\sfrac{3}{2}}\bbbl_*^{-1}S_n(\mu)\cV'(\varphi, \varphi_*,\varphi) 
      \Big|^{\atop{\varphi_*=\phi_*+\de\phi_*}
                  {\varphi=\phi +\de\phi}}
           _{\atop{\varphi_*=\phi_*}
                  {\varphi=\phi}}
  + \de\hat\phi^{(\ho)}
\end{align*}
with the substitutions
\begin{align*}
\phi_{(*)}
    &=\bbbs^{-1}S_{n+1}(L^2\mu)^{(*)}Q_{n+1}^* \fQ_{n+1}\,\psi_{(*)}\\
\de\phi_{(*)}
   &=S_n(\mu)^{(*)} Q_n^* \fQ_n\,L^{3/2} D^{(n)(*)}\bbbs^{-1} z_{(*)}
\end{align*}
and with the contributions in $\de\hat\phi_{(*)}^{(\ho)}$ being 
of degree at least five in
  $(\psi_{(*)},z_{(*)})$ and obeying
\begin{align*}
  \TN P^\psi_j\de\hat\phi_{(*)}^{(\ho)} \TN   
        &\le  L^{(1- j)3/2}L^{9(5-j)}
                   \GGa_4 \|V\|_\fm^2\wf^j\wf_\fl^{5-j} \qquad
   \text{for $j=0,1,2$}
\end{align*}

\item[$\circ$]
 There are field maps 
  $\de\hat\phi_{(*)n+1,\nu}\big(\psi_*,\psi,\psi_{*\nu},\psi_\nu,z_*,z\big)$,
   $0\le\nu\le 3$, 
such that
\begin{align*}
&\big(\partial_\nu\de\hat\phi_{(*)n+1}\big)(\psi_*,\psi,z_*,z)
 =\de\hat\phi_{(*)n+1,\nu}\big(\psi_*,\psi,\partial_\nu\psi_*,\partial_\nu\psi,
                      z_*,z\big)
\end{align*}
and
\begin{equation*}
\tn \de\hat\phi_{(*)n+1,\nu}\tn
\le L_\nu L^{11}\GGa_4\,\wf_\fl 
\end{equation*}
where $L_0=L^2$ and $L_\nu=L$ for $\nu=1,2,3$.
\end{enumerate}
\end{proposition}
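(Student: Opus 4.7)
The strategy is to peel off each statement from Proposition \ref{propBGEphivepssoln}. Using the decomposition $\phi_{(*)n}=S_n(\mu)^{(*)}Q_n^*\fQ_n\psi_{(*)}+\phi_{(*)n}^{(\ge 3)}$ of Proposition \ref{propBGEphivepssoln}.a, the inner finite difference becomes
\begin{equation*}
\de\check\phi_{(*)n+1}=S_n(\mu)^{(*)}Q_n^*\fQ_n\,\de\psi_{(*)}
+\big[\phi_{(*)n}^{(\ge 3)}(\psi_*+\de\psi_*,\psi+\de\psi)-\phi_{(*)n}^{(\ge 3)}(\psi_*,\psi)\big]
\end{equation*}
evaluated at $\psi_{(*)}=\psi_{(*)n}(\th_*,\th,\mu,\cV)$, and $\de\hat\phi_{(*)n+1}$ is the result of substituting $\th_{(*)}=\bbbs^{-1}\psi_{(*)}$, $\de\psi_{(*)}=D^{(n)(*)}\bbbl_*z_{(*)}$ and applying $\bbbs$. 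The main tool for passing from bounds on $\phi_{(*)n}$ to bounds on the substituted map is \cite[Corollary \corSUBsubstitution]{SUB} together with the bounds on $\bbbs,\bbbs^{-1},\bbbl_*,D^{(n)}$ already assembled in \cite{POA}.

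For the first bullet (a), the linear-in-$z$ contribution $\bbbs S_n(\mu)^{(*)}Q_n^*\fQ_n D^{(n)(*)}\bbbl_*\bbbs^{-1}z_{(*)}$ is bounded by estimating each factor via \cite[Proposition \POGmainpos]{POA} and the scaling norms of \cite{PAR1}; the remainder from $\phi_{(*)n}^{(\ge 3)}$ is controlled by the difference lemma \cite[Lemma \lemSUBdiff]{SUB}, giving a contribution of size $\|V\|_\fm(\wf+\wf_\fl)^2\wf_\fl$ that the hypothesis swallows. The composite $L^{11}$ arises from the combined scaling powers of $\bbbs$, $\bbbl_*$ and $D^{(n)}$. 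For the second bullet (b), subtract the unperturbed linear-in-$z$ piece to get two contributions: the resolvent identity gives $S_n(\mu)^{(*)}-S_n^{(*)}=S_n(\mu)^{(*)}\mu\,S_n^{(*)}$, yielding an $O(|\mu|)$ term, and the $\phi_{(*)n}^{(\ge 3)}$ remainder yields an $O(\|V\|_\fm(\wf+\wf_\fl)^2)$ term; both are multiplied by $\wf_\fl$ and a tracked $L^{29}$. Bullet (c) follows because, in the representation above, the only piece that can be of degree $\ge 2$ in $z_{(*)}$ is the $\phi_{(*)n}^{(\ge 3)}$ difference, whose degree-$j$-in-$\psi$, degree-$(k\ge 2)$-in-$\de\psi$ components come with at least one power of $\|V\|_\fm$ and one extra power of $\wf_\fl$ compared with the linear-in-$z$ part.

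For the fourth bullet (d) we use Remark \ref{remBGEphivepssoln} to replace $\phi_{(*)n}^{(\ge 3)}$ by the explicit cubic term plus $\phi_{(*)n}^{(\ge 5)}$. The cubic term, being $-S_n(\mu)^{(*)}\cV'_{(*)}(\Phi_*,\Phi,\Phi_{(*)})$ with $\Phi_{(*)}=S_n(\mu)^{(*)}Q_n^*\fQ_n\psi_{(*)}$, produces exactly the displayed ``$\cV'$-difference'' term after evaluation on $\psi_*+\de\psi_*,\psi+\de\psi$ minus evaluation on $\psi_*,\psi$. Everything else---namely $\phi_{(*)n}^{(\ge 5)}$ differences, plus the nonlinear ``$\ge 3$'' content of the critical-field substitution $\psi_{(*)}=\psi_{(*)n}(\th_*,\th)$ from \S\ref{secBGEcritfld} (used to reach degree $\ge 5$ in $(\psi_{(*)},z_{(*)})$)---is packaged into $\de\hat\phi_{(*)}^{(\ho)}$. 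Projecting with $P^\psi_j$ and counting degrees (each extra power of $\psi$ costs $\wf$, each extra $z$ costs $\wf_\fl L^9$, and two factors of $\|V\|_\fm$ are always present) produces the stated $j$-dependent bound. Bullet (e) is handled by applying $\partial_\nu$ to the representation of $\de\check\phi_{(*)n+1}$; Proposition \ref{propBGEphivepssoln}.b identifies $\partial_\nu\phi_{(*)n}^{(\ge 3)}$ with a field map in $(\psi_*,\psi,\partial_\nu\psi_*,\partial_\nu\psi)$, and the anisotropic scaling factor $L_\nu$ (with $L_0=L^2$ from parabolic time scaling and $L_\nu=L$ spatially) emerges from commuting $\partial_\nu$ past $\bbbs$.

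The main obstacle is bookkeeping: tracking the precise powers of $L$ through $\bbbs,\bbbs^{-1},\bbbl_*,D^{(n)}$, the chemical-potential rescaling $\mu\mapsto L^2\mu$ between the scale-$n$ and scale-$(n+1)$ pictures, and the power of $\wf_\fl$ (weighted by $L^9$) carried by each $z_{(*)}$ factor, so that the combined exponents match $L^{11}$, $L^{29}$ and the factor $L^{(1-j)3/2}L^{9(5-j)}$ in bullet (d). Everything else reduces to routine applications of the fixed-point and substitution machinery of \cite{SUB} and the operator bounds of \cite{POA}.
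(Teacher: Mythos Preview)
Your outline is workable but takes a different route from the paper, and there is one point where it leans on an identity you do not state.

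\textbf{How the paper proceeds.} The paper does \emph{not} take differences of the already-constructed $\phi_{(*)n}^{(\ge 3)}$ and then substitute the critical field $\psi_{(*)n}$. Instead it introduces an auxiliary field map $\de\varphi_{(*)}(\phi_*,\phi,\de\psi_*,\de\psi)$ in which $\phi_*,\phi$ are treated as \emph{independent} variables, and solves a fresh fixed-point problem for it (Lemma~\ref{lemBGEdephisoln}, refined in Lemma~\ref{lemBGEdephisolnB}). The key observation, coming from \cite[Remark \remBGAbckgndVarn.c]{PAR1}, is that
\[
\de\check\phi_{(*)n+1}(\th_*,\th,\de\psi_*,\de\psi)
=\de\varphi_{(*)}\big(\check\phi_{*n+1}(\th_*,\th),\,\check\phi_{n+1}(\th_*,\th),\,\de\psi_*,\de\psi\big),
\]
so that after the substitution $\th_{(*)}=\bbbs^{-1}\psi_{(*)}$ one has $\phi_{(*)}=\bbbs^{-1}\phi_{(*)n+1}(\psi_*,\psi,L^2\mu,\bbbs\cV)$, controlled directly by Proposition~\ref{propBGEphivepssoln} at scale $n{+}1$. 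The critical field $\psi_{(*)n}$ never enters the estimates; the concatenation $\phi_{(*)n}\circ\psi_{(*)n}=\check\phi_{(*)n+1}$ is used only to set up the definition, not in the bounds.

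\textbf{Comparison.} Your approach --- take $\phi_{(*)n}=S_n(\mu)^{(*)}Q_n^*\fQ_n\psi_{(*)}+\phi_{(*)n}^{(\ge3)}$, form differences, then plug in $\psi_{(*)}=\psi_{(*)n}(\th_*,\th)$ --- is not circular (Proposition~\ref{propCFpsisoln} depends only on Proposition~\ref{propBGEphivepssoln}), but it forward-references \S\ref{secBGEcritfld}. More substantively, for bullet~(d) you claim the cubic piece reproduces the displayed $\cV'$-difference with $\phi_{(*)}=\bbbs^{-1}S_{n+1}(L^2\mu)^{(*)}Q_{n+1}^*\fQ_{n+1}\psi_{(*)}$. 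In your setup the natural argument is $S_n(\mu)^{(*)}Q_n^*\fQ_n\,\psi_{(*)n}(\th_*,\th)$, i.e.\ the scale-$n$ propagator composed with the critical field; matching this to the scale-$(n{+}1)$ propagator requires the operator-level concatenation identity (essentially \cite[Proposition \propBSconcatbackgr]{BlockSpin}) plus pushing the nonlinear part of $\psi_{(*)n}$ into $\de\hat\phi_{(*)}^{(\ho)}$. You allude to this but do not name the identity. The paper's parametrization by $\phi_{(*)}$ sidesteps all of this: the scale-$(n{+}1)$ propagator appears directly from Proposition~\ref{propBGEphivepssoln}.a applied to $\phi_{(*)n+1}$, and the degree-in-$\de\psi$ decomposition needed for bullets~(c,d) comes straight out of the fixed-point structure of $\de\varphi_{(*)}$ rather than from difference estimates on $\phi_{(*)n}^{(\ge3)}$.
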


\noindent 
This Proposition will be proven following the proof of 
Lemma \ref{lemBGEdephisolnB}.
Recall, from \eqref{eqnBGEdehatphidef}, that $\de\hat\phi_{(*)n+1}$
is defined in terms of $\de\check\phi_{(*)n+1}$.
Also recall, from \cite[Remark \remBGAbckgndVarn.c and 
Definition \defBGAgradV]{PAR1},
that 
$\de\check\phi_{(*)n+1}$ is obtained from the solution 
$\de\phi_{(*)}\!=\de\varphi_{(*)}(\phi_*,\phi,\de\psi_*,\de\psi)$ of
\begin{equation}\label{eqnBGEdevarphi}
\begin{split}
 \de\phi_*
&={S_n^*}Q_n^* \fQ_n\, \de\psi_*  +\mu {S_n^*}\de\phi_*
  - {S_n^*}\cV'(\varphi_*, \varphi,\varphi_*) 
      \Big|^{\varphi_*=\phi_*+\de\phi_*\atop
        \varphi=\phi+\de\phi}
     _{\varphi_*=\phi\atop
        \varphi=\phi}
 \\
\noalign{\vskip0.05in}
\de\phi
&= S_n Q_n^* \fQ_n\, \de\psi +\mu S_n\de\phi
- S_n\cV'_*(\varphi,\varphi_*, \varphi) 
      \Big|^{\varphi_*=\phi_*+\de\phi_*\atop
        \varphi=\phi+\de\phi}
     _{\varphi_*=\phi_*\atop
        \varphi=\phi}
\end{split}
\end{equation}
by substituting $\phi_{(*)}=\check\phi_{(*)n+1}(\th_*, \th,\mu,\cV)$. 
So we first prove the existence of and develop bounds on 
$\de\varphi_{(*)}(\phi_*,\phi,\de\psi_*,\de\psi)$.
We fix any $\wf_\phi$,$\wf'_\phi$, $\wf_{\de\psi}\ge 1$ and
denote by $\tn\ \cdot\ \tn_\phi$ the (auxiliary) norm with 
mass $\fm$ that assigns the weight factors
$\wf_\phi$ to the fields $\phi_{(*)}$,
$\wf'_\phi$ to the fields $\phi_{(*)\nu}$ and
$\wf_{\de\psi}$ to the fields $\de\psi_{(*)}$.

\begin{lemma}\label{lemBGEdephisoln}
There are constants $\GGa'_4$ and $\rrho'_2>0$ such that the following hold, 
if
$$
\max\big\{\ |\mu|\ ,\  
 \|V\|_\fm (\wf_\phi+\wf_{\de\psi})(\wf_\phi+\wf'_\phi+\wf_{\de\psi})\ \big\}
\le \rrho'_2
$$
\begin{enumerate}[label=(\alph*), leftmargin=*]
\item[$\circ$]
There are field maps $\de\varphi_{(*)}(\phi_*,\phi,\de\psi_*,\de\psi)$
that obey
$
\TN \de\varphi_{(*)}\TN_\phi
\le \GGa'_4\, \wf_{\de\psi}
$ and solve \eqref{eqnBGEdevarphi}.
Write
\begin{align*}
\de\varphi_{(*)}&= S_n^{(*)}Q_n^*\fQ_n\de\psi_{(*)}
    +\de\varphi_{(*)}^{(+)}
\end{align*}
and denote by  $\de\varphi_{(*)}^{(\ge 2)}$ the part of 
$\de\varphi_{(*)}^{(+)}$ that is of degree at least two in 
$\de\psi_{(*)}$. They obey
\begin{align*}
\TN \de\varphi_{(*)}^{(+)}\TN_\phi
&\le \GGa'_4\,\{\|V\|_\fm(\wf_\phi+\wf_{\de\psi})^2+|\mu|\}
              \wf_{\de\psi}\\
\TN \de\varphi_{(*)}^{(\ge 2)}\TN_\phi
&\le \GGa'_4\, \|V\|_\fm(\wf_\phi+\wf_{\de\psi})\wf_{\de\psi}^2
\end{align*}

\item[$\circ$]
There are field maps 
   $\de\varphi_{(*)\nu}\big(\phi_*,\phi,\phi_{*\nu},\phi_\nu,
                   \de\psi_*,\de\psi\big)$,
   $0\le\nu\le 3$, 
such that
\begin{align*}
&\big(\partial_\nu\de\varphi_{(*)}\big)(\phi_*,\phi,\de\psi_*,\de\psi)
%\\&\hskip0.7in
    =\de\varphi_{(*)\nu}\big(\phi_*,\phi,\partial_\nu\phi_*,\partial_\nu\phi,
                      \de\psi_*,\de\psi\big)
\end{align*}
and\ \ \ 
$
\tn \de\varphi_{(*)\nu}\tn_\phi
\le \GGa'_4\,\wf_{\de\psi} 
$.
\end{enumerate}
\end{lemma}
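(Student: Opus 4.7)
The strategy closely parallels part~(a) of Proposition~\ref{propBGEphivepssoln}. Since $S_n(\mu)^{(*)} = S_n^{(*)}(\bbbone - \mu S_n^{(*)})^{-1}$, the system \eqref{eqnBGEdevarphi} is equivalent to
\begin{equation*}
\de\phi_{(*)} = S_n(\mu)^{(*)} Q_n^* \fQ_n\, \de\psi_{(*)} \;-\; S_n(\mu)^{(*)}\,\cV'_{(*)}\!\bigl(\,\cdots\,\bigr)\Big|^{\varphi_{(*)}=\phi_{(*)}+\de\phi_{(*)}}_{\varphi_{(*)}=\phi_{(*)}},
\end{equation*}
the cubic difference being exactly the one in \eqref{eqnBGEdevarphi}. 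Setting $\vec\al = (\phi_*, \phi, \de\psi_*, \de\psi)$, $\vec\ga = (\ga_*, \ga)$ and substituting $\de\phi_{(*)} = S_n(\mu)^{(*)}\bigl[Q_n^*\fQ_n\,\de\psi_{(*)} + \ga_{(*)}\bigr]$ cancels the driving term and leaves a fixed-point equation of the form $\vec\ga = \vec f(\vec\al) + \vec L(\vec\al,\vec\ga) + \vec B(\vec\al,\vec\ga)$ of \cite[(\eqnSUBfixedpteqn.b)]{SUB}. Expanding each cubic difference into its seven trilinear monomials in $\de\phi_{(*)},\phi_{(*)}$ (each carrying at least one $\de\phi$), and then re-expanding every $\de\phi_{(*)}$ factor as $S_n(\mu)^{(*)}(Q_n^*\fQ_n\,\de\psi_{(*)}+\ga_{(*)})$, I assign the pieces independent of $\ga$ to $\vec f$, those linear in $\ga$ to $\vec L$, and the quadratic and cubic-in-$\ga$ remainders to $\vec B$.

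I would then verify the hypotheses of Proposition~\ref{SUB:propBGEeqnsoln} with $\ka_{1,2} = \|Q_n^*\fQ_n\|_\fm\wf_{\de\psi}$, $\la_{1,2} = \wf_{\de\psi}$, and auxiliary weights $\wf_\phi, \wf'_\phi$ attached to the $\phi_{(*)}$ components of $\vec\al$. Using $\|S_n(\mu)^{(*)}\|_\fm \le \Gam_\op$ from \cite[Proposition~\POGmainpos]{POA} (valid once $\rrho_2'$ is small enough), each of $\tn f_j\tn, \tn L_j\tn, \tn B_j\tn$ is controlled by $\const\,\|V\|_\fm(\wf_\phi+\wf_{\de\psi})^2\,\wf_{\de\psi}$, which is at most $\tfrac{1}{8}\la_j$ by hypothesis. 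The theorem supplies a unique fixed point $\vec\Ga(\vec\al)$ with $\tn\Ga_j\tn \le \const\,\wf_{\de\psi}$, yielding the overall bound on $\TN\de\varphi_{(*)}\TN_\phi$. For the refined $(+)$-bound I split
\begin{equation*}
\de\varphi_{(*)}^{(+)} \;=\; \bigl[S_n(\mu)^{(*)} - S_n^{(*)}\bigr] Q_n^*\fQ_n\,\de\psi_{(*)} \;+\; S_n(\mu)^{(*)}\,\Ga_{(*)},
\end{equation*}
use the resolvent identity $S_n(\mu) - S_n = \mu\,S_n(\mu)\,S_n$ to dominate the first summand by $\const\,|\mu|\,\wf_{\de\psi}$, and invoke the contraction estimate $\tn\Ga_j\tn \le \const\,\tn f_j\tn \le \const\,\|V\|_\fm(\wf_\phi+\wf_{\de\psi})^2\wf_{\de\psi}$ for the second. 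The $(\ge 2)$-bound then follows by projecting onto monomials of degree $\ge 2$ in $\de\psi_{(*)}$: the first summand is linear in $\de\psi$ and drops out, while the degree-$\ge 2$ part of $\vec f$ arises only from those trilinear pieces of the cubic difference with two or three $\de\phi$ factors, trading a $\wf_\phi$ for a $\wf_{\de\psi}$ and producing the advertised $\|V\|_\fm(\wf_\phi+\wf_{\de\psi})\wf_{\de\psi}^2$ size after iterating the contraction.

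For part~(b), I apply $\partial_\nu$ to \eqref{eqnBGEdevarphi} via the discrete product rules \eqref{eqnBGEprodrule}--\eqref{eqnBGEprodruleTrip} and, following the template of \eqref{eqnBGEbgderiv}, replace $\partial_\nu\de\varphi_{(*)}, \partial_\nu\phi_{(*)}, \partial_\nu\de\psi_{(*)}$ by independent fields $\de\varphi_{(*)\nu}, \phi_{(*)\nu}, \de\psi_{(*)\nu}$. The resulting system is linear in $\de\varphi_{(*)\nu}$, so I again cast it in the $\vec f + \vec L + \vec B$ form with $\vec B = 0$ and apply Proposition~\ref{SUB:propBGEeqnsoln}, whose contraction property now yields both existence and uniqueness of a solution obeying $\tn\de\varphi_{(*)\nu}\tn_\phi \le \GGa_4'\,\wf_{\de\psi}$. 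Uniqueness, combined with the observation that $\partial_\nu\de\varphi_{(*)}$ itself solves this equation once $\phi_{(*)\nu} = \partial_\nu\phi_{(*)}$ and $\de\psi_{(*)\nu} = \partial_\nu\de\psi_{(*)}$, yields the claimed identity. I expect the principal obstacle to be the careful bookkeeping of the seven trilinear pieces of the cubic difference and their partition into $\vec f, \vec L, \vec B$, together with the resolvent manipulation that isolates the $|\mu|$-dependence needed in the refined $(+)$-bound.
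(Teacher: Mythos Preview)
Your approach is correct but differs from the paper's in where $\mu$ is placed. The paper does \emph{not} pass to $S_n(\mu)$: it substitutes $\de\phi_{(*)}=S_n^{(*)}\ga_{(*)}$, so that $\vec f$ is simply the linear driving term $(Q_n^*\fQ_n\de\psi_*,\,Q_n^*\fQ_n\de\psi)$, while the term $\mu S_n^{(*)}\ga_{(*)}$ sits in $\vec L$. With this choice $\de\psi_{(*)}$ appears only in $\vec f$, so the $(\ge2)$-- and $(+)$--bounds follow cleanly from \cite[Proposition~\propSUBeqnsoln.b]{SUB} (separating the part $\Ga^{(1)}$ linear in $\vec f$ from the rest). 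By contrast, your substitution $\de\phi_{(*)}=S_n(\mu)^{(*)}[Q_n^*\fQ_n\de\psi_{(*)}+\ga_{(*)}]$ makes $\vec f$ cubic and lets $\de\psi$ leak into $\vec L$ and $\vec B$ through the linear piece of $\de\phi$; you then recover the $|\mu|$--contribution to the $(+)$--bound via the resolvent identity $S_n(\mu)-S_n=\mu S_n(\mu)S_n$. This works, but the $(\ge2)$ projection is more delicate in your setup since the degree--in--$\de\psi$ counting must now be tracked through $\vec L$ and $\vec B$ as well as $\vec f$; the paper's arrangement avoids that bookkeeping entirely. One small correction for part~(b): the lemma's field map $\de\varphi_{(*)\nu}$ has arguments $(\phi_*,\phi,\phi_{*\nu},\phi_\nu,\de\psi_*,\de\psi)$ with no separate $\de\psi_{(*)\nu}$ slot; the paper handles $\partial_\nu(Q_n^*\fQ_n\de\psi_{(*)})$ as $Q_{n,\nu}^{(+)}\fQ_n[T_\nu\de\psi_{(*)}-\de\psi_{(*)}]$ and absorbs it into the existing $\de\psi$--dependence (at the cost of an $e^\fm+1$ factor), rather than introducing a new field variable.
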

\begin{proof} (a) The equations \eqref{eqnBGEdevarphi}, for 
$\de\varphi_{(*)}$, are of the form
\begin{equation*}
\vec\ga
=\vec f(\vec\al)+\vec L(\vec\al,\vec\ga)+\vec B\big(\vec\al\,;\, \vec\ga\big)
\end{equation*}
as in \cite[(\eqnSUBfixedpteqn.b)]{SUB}, with $X=\cX_n$ and
\begin{alignat*}{5}
\al_*&=\phi_* \quad  
     &  \al&= \phi \quad
     & \de\al_*&=  Q_n^*\fQ_n\de\psi_* \quad   
     &  \de\al&= Q_n^*\fQ_n\de\psi  \quad
     & \vec\al& = \big(\al_*,\al,\de\al_*,\de\al\big)\cr
& &
& &
     \de\phi_*&= S_n^* \ga_* &
     \de\phi&= S_n\ga
    & \vec\ga&=\big(\ga_*,\ga\big) \cr 
\end{alignat*}
and
\begin{align*}
\vec f(\vec\al)(u)
&=\left[\begin{matrix}
                \de\al_*(u)  \\
                \de\al(u)
        \end{matrix}\right]
\cr
\noalign{\vskip0.1in}
\vec L(\vec\al;\vec\ga)(u)
&=\left[\begin{matrix}
        \mu (S_n^*\ga_*)(u) 
                     - \cV'_*\big(u;\al_*,S_n \ga,\al_*\big)
                     - 2\cV'_*\big(u;\al_*,\al,S_n^*\ga_*\big)  \\
           \noalign{\vskip0.05in}
          \mu (S_n\ga)(u) 
               - \cV'\big(u;\al,S_n^*\ga_*,\al\big)
               - 2\cV'\big(u;\al,\al_*,S_n \ga\big)
       \end{matrix}\right]
\\
\noalign{\vskip0.1in}
\vec B(\vec\al;\vec\ga)(u)
&=\left[\begin{matrix}
          - \cV'_*\big(u;S_n^*\ga_*,\al,S_n^*\ga_*\big)
          - 2\cV'_*\big(u;S_n^*\ga_*,S_n \ga,\al_*\big)\hfill\null\\
         \hskip2.5in - \cV'_*\big(u;S_n^*\ga_*,S_n \ga,S_n^*\ga_*\big)\\
          \noalign{\vskip0.05in}
               - \cV'\big(u;S_n \ga,\al_*,S_n \ga\big)
               - 2\cV'\big(u;S_n \ga,S_n^*\ga_*,\al\big)\hfill\null\\
         \hskip2.5in- \cV'\big(u;S_n \ga,S_n^*\ga_*,S_n \ga\big)
        \end{matrix}\right]
\end{align*}
Now apply  \cite[Proposition \propSUBeqnsoln.a and Remark \remSUBcB.a]{SUB} 
with $d_{\rm max}=3$, $\fc=\half$ and 
\begin{align*}
&\ka_1= \ka_2 = \wf_\phi &
\ka_3&=\ka_4=\| Q_n^*\fQ_n\|_\fm\wf_{\de\psi} &
&\la_1=\la_2=4\ka_4 &
\end{align*}
Since
\begin{align*}
\tn f_j\tn_w&\le  \ka_{2+j}
               =\sfrac{1}{4}\la_j\\
%%%%
\tn L_j\tn_{w_{\ka,\la}}
   &\le \|S_n\|_{\fm}(|\mu|+2\|V\|_{\fm}\ka_1\ka_2)\la_j
            +\|S_n\|_{\fm}\ \|V\|_{\fm}\ka_j^2\la_{3-j}\\
  &\le  \|S_n\|_\fm \Big\{|\mu|+3\|V\|_\fm\wf_\phi^2\big\}\la_j\\
%%%%
\tn B_j\tn_{w_{\ka,\la}}
    &\le\|S_n\|_{\fm}^2\ \|V\|_{\fm}\big[  
                    \ka_{3-j}\la^2_j+2\ka_j\la_j\la_{3-j}
                            +\|S_n\|_{\fm}\la_j^2\la_{3-j} \big] \\
  &\le \|S_n\|_\fm^2\|Q_n^*\fQ_n\|_\fm\|V\|_\fm
      \Big\{12\wf_\phi \wf_{\de\psi}
  +16\|S_n\|_\fm\|Q_n^*\fQ_n\|_\fm \wf_{\de\psi}^2
      \Big\}\la_j 
\end{align*}
\cite[Proposition \propSUBeqnsoln.a]{SUB} gives
\begin{equation*}
\de\varphi_{(*)}(\phi_*,\phi,\de\psi_*,\de\psi)
  =\de\phi_{(*)}
  = S_n^{(*)}\Ga_{(*)}\big(\phi_*,\phi, Q_n^*\fQ_n\de\psi_*,
                                        Q_n^*\fQ_n\de\psi\big)
\end{equation*}
with 
$
\tn \Ga_{(*)}\tn_{w_{\ka,\la}}
       \le 2\| Q_n^*\fQ_n\|_\fm \wf_{\de\psi}
$.
The first conclusion now follows. 

Denote by $\de\varphi_{(*)}^{(1)}$ the part of $\de\varphi_{(*)}$ 
that is of degree precisely one in $\de\psi_{(*)}$ and decompose
\begin{align*}
\de\varphi_{(*)}^{(1)}&= S_n^{(*)}Q_n^*\fQ_n\de\psi_{(*)}
    +\de\varphi_{(*)}^{(1)\squig}
\end{align*}
In the notation of \cite[Proposition \propSUBeqnsoln.b]{SUB},
$\vec\Ga^{(1)}$ is the part of $\vec \Ga$ that is of degree
precisely $1$ in $\vec f$. In our application, $\vec f$ is homogeneous
of degree one in $\de\psi_{(*)}$, and $\de\psi_{(*)}$ does not appear 
in either $\vec L$ or $\vec B$, so
\begin{align*}
\de\varphi_{(*)}^{(1)} &= S_n^{(*)}\Ga_{(*)}^{(1)}
     \big(\phi_*,\phi, Q_n^*\fQ_n\de\psi_*,Q_n^*\fQ_n\de\psi\big) \\
\de\varphi_{(*)}^{(1)\squig} &=
 S_n^{(*)}\big\{ \Ga_{(*)}^{(1)}
     \big(\phi_*,\phi, Q_n^*\fQ_n\de\psi_*,Q_n^*\fQ_n\de\psi\big)
     -f_{(*)}\big(Q_n^*\fQ_n\de\psi_*,Q_n^*\fQ_n\de\psi\big)\big\}\\
\de\varphi_{(*)}^{(\ge 2)} &=
 S_n^{(*)}\big\{ \Ga_{(*)}
     \big(\phi_*,\phi, Q_n^*\fQ_n\de\psi_*,Q_n^*\fQ_n\de\psi\big)
     -\Ga_{(*)}^{(1)}
     \big(\phi_*,\phi, Q_n^*\fQ_n\de\psi_*,Q_n^*\fQ_n\de\psi\big)\big\}
\end{align*}
Hence the bounds on  $\de\varphi_{(*)}^{(\ge 2)}$ and
$\de\varphi_{(*)}^{(+)}= \de\varphi_{(*)}^{(1)\squig}
  + \de\varphi_{(*)}^{(\ge 2)}$
follows from \cite[Proposition \propSUBeqnsoln.b and Remark \remSUBcB.a]{SUB} 
with $d_{\rm max}=3$ and
\begin{align*}
&\max_{1\le j\le r}\sfrac{1}{\la_j}\tn B_j\tn_{w_{\ka,\la}}
\le \tilde \GGa'_4 \|V\|_\fm(\wf_\phi+\wf_{\de\psi})\wf_{\de\psi}
\\
&\fc=\max_{1\le j\le r}\sfrac{1}{\la_j}\tn L_j\tn_{w_{\ka,\la}}+
3\max_{1\le j\le r}\sfrac{1}{\la_j}\tn B_j\tn_{w_{\ka,\la}}
\le \tilde \GGa'_4\{\|V\|_\fm(\wf_\phi+\wf_{\de\psi})^2+|\mu|\}
\end{align*}

\Item (b) We follow the same strategy as in Proposition \ref{propBGEphivepssoln}.b.
That is, we apply $\partial_\nu$ to \eqref{eqnBGEdevarphi} and use
the ``discrete product rule'' \eqref{eqnBGEprodruleTrip} and
\begin{equation}\label{eqnBGEcommutederiv}
\partial_\nu {S_n^*}
= S_{n,\nu}^{(+)}\partial_\nu\qquad
\partial_\nu  S_n
= S_{n,\nu}^{(-)}\partial_\nu\qquad
\partial_\nu Q^*_n \fQ_n
= Q_{n,\nu}^{(+)} \fQ_n\partial_\nu
\end{equation}
where $Q_{n,\nu}^{(+)}$ was defined in \cite[(\eqnPBSqnplusminus)]{POA} and
$S_{n,\nu}^{(\pm)}$ was defined in \cite[(\eqnPOGSnnudef) and (\eqnPOGSnnuzerodef)]{POA}.
(See \cite[Remark \remPBSunderivAlg\ and (\eqnPOGpartialS)]{POA}.)
Denoting  $\de\phi_{(*)}=\de\varphi_{(*)}(\phi_*,\phi,\de\psi_*,\de\psi)$,
this gives
\begin{equation}\label{eqnBGEpartialDephi}
\begin{aligned}
&\partial_\nu\de\phi_*
  &+S_{n,\nu}^{(+)}L_{11}(\partial_\nu\de\phi_*)
  &+S_{n,\nu}^{(+)}L_{12}(\partial_\nu\de\phi)
&=S_{n,\nu}^{(+)}f_*\\
&\partial_\nu\de\phi
  &+S_{n,\nu}^{(-)}L_{21}(\partial_\nu\de\phi_*)
  &+S_{n,\nu}^{(-)}L_{22}(\partial_\nu\de\phi)
&=S_{n,\nu}^{(-)}f
\end{aligned}
\end{equation}
where
\begin{align*}
L_{11}(\partial_\nu\de\phi_*)&= -\mu\partial_\nu\de\phi_*
             +2\cV'_*\big(\phi_*,\phi,\partial_\nu\de\phi_*\big)
            +\cV'_*\big(\partial_\nu\de\phi_*,T_\nu^{-1}\phi,
                     \de\phi_*+T_\nu^{-1}\de\phi_*\big)
     \\&\hskip0.5in 
      +2\cV_*'\big(\partial_\nu\de\phi_*,T_\nu^{-1}\de\phi,T_\nu^{-1}\phi_*\big)
       +\cV_*'\big(\partial_\nu\de\phi_*,T_\nu^{-1}\de\phi,
                            \de\phi_*+T_\nu^{-1}\de\phi_*\big)
\\
L_{12}(\partial_\nu\de\phi)&= \cV'_*\big(\phi_*,\partial_\nu\de\phi,\phi_*\big)
      +2\cV'_*\big(\de\phi_*,\partial_\nu\de\phi,T_\nu^{-1}\phi_*\big)
      +\cV'_*\big(\de\phi_*,\partial_\nu\de\phi,\de\phi_*\big)
\\
L_{21}(\partial_\nu\de\phi_*)&=\cV'\big(\phi,\partial_\nu\de\phi_*,\phi\big)
          +2\cV'\big(\de\phi,\partial_\nu\de\phi_*,T_\nu^{-1}\phi\big)
          +\cV'\big(\de\phi,\partial_\nu\de\phi_*,\de\phi\big)
\\
L_{22}(\partial_\nu\de\phi)&=-\mu\partial_\nu\de\phi
           +2\cV'\big(\phi,\phi_*,\partial_\nu\de\phi\big)
  +\cV'\big(\partial_\nu\de\phi,T_\nu^{-1}\phi_*,\de\phi+T_\nu^{-1}\de\phi\big)
     \\&\hskip0.5in 
    +2\cV'\big(\partial_\nu\de\phi,T_\nu^{-1}\de\phi_*,T_\nu^{-1}\phi\big) 
+\cV'\big(\partial_\nu\de\phi,T_\nu^{-1}\de\phi_*,\de\phi+T_\nu^{-1}\de\phi\big)
\\
f_*&= Q_{n,\nu}^{(+)} \fQ_n[T_\nu\de\psi_*-\de\psi_*]
   -\cV'_*\big(\partial_\nu\phi_*,T_\nu^{-1}\de\phi,\phi_*+T_\nu^{-1}\phi_*\big)
     \cr&\hskip0.5in 
   -2\cV'_*\big(\partial_\nu\phi_*,T_\nu^{-1}\phi,T_\nu^{-1}\de\phi_*\big)
   -2\cV'_*\big(\phi_*,\partial_\nu\phi,T_\nu^{-1}\de\phi_*\big)
     \\&\hskip0.5in 
   -\cV'_*\big(\de\phi_*,\partial_\nu\phi,\de\phi_*\big)
   -2\cV'_*\big(\de\phi_*,\de\phi,\partial_\nu\phi_*\big)
\\
f&= Q_{n,\nu}^{(+)} \fQ_n[T_\nu\de\psi-\de\psi] 
     -\cV'\big(\partial_\nu\phi,T_\nu^{-1}\de\phi_*,\phi+T_\nu^{-1}\phi\big)
     \\&\hskip0.5in 
     -2\cV'\big(\partial_\nu\phi,T_\nu^{-1}\phi_*,T_\nu^{-1}\de\phi\big)
     -2\cV'\big(\phi,\partial_\nu\phi_*,T_\nu^{-1}\de\phi\big)   
     \\&\hskip0.5in 
     -\cV'\big(\de\phi,\partial_\nu\phi_*,\de\phi\big)
     -2\cV'\big(\de\phi,\de\phi_*,\partial_\nu\phi\big)
\end{align*}
The system of equations \eqref{eqnBGEpartialDephi} is of the form
\begin{equation*}
\vec\ga
=\vec f(\vec\al)+\vec L(\vec\al,\vec\ga)+\vec B\big(\vec\al\,;\, \vec\ga\big)
\end{equation*}
as in \cite[(\eqnSUBfixedpteqn.b)]{SUB}, with $\vec\al=(\al_1,\cdots,\al_8)$,
$\vec\ga=(\ga_*,\ga)$ and
\begin{align*}
\al_1&=\phi_*   
     &  \al_2&= \phi  
     &  \al_3&= \partial_\nu\phi_* \qquad
     &  \al_4&= \partial_\nu\phi  \\
         \al_5&=\de\phi_*   
     &  \al_6&= \de\phi 
     &  \al_7&=\de\psi_*   
     &  \al_8&= \de\psi \cr
     \partial_\nu\de\phi_*&=S_{n,\nu}^{(+)} \ga_* &
     \partial_\nu\de\phi&=S_{n,\nu}^{(-)}\ga 
\end{align*}
and $\vec B(\vec\al;\vec\ga)=0$ and
\begin{align*}
\vec L(\vec\al;\vec\ga)
&=-\left[\begin{matrix}
          L_{11}\, (S_{n,\nu}^{(+)} \ga_*)
          \ +\ L_{12}\,(S_{n,\nu}^{(-)} \ga)  \\
           L_{21}\,(S_{n,\nu}^{(+)} \ga_*)
         \ +\ L_{22}\, (S_{n,\nu}^{(-)} \ga)
          \end{matrix}\right]
\end{align*}
Now apply  \cite[Proposition \propSUBeqnsoln.a]{SUB} with $\fc=\half$ and
\begin{align*}
\ka_1& = \ka_2 = \wf_\phi  &
\ka_3& = \ka_4 = \wf_\phi' &
\ka_5& = \ka_6 = \GGa'_4\wf_{\de\psi} &
\ka_7& =\ka_8 = \wf_{\de\psi} \\
\la_1&=\la_2=4\wf_f 
\end{align*}
with 
\begin{align*}
\wf_f&= (e^\fm+1)\| Q_{n,\nu}^{(+)} \fQ_n\|_{\fm}\ka_7
  +e^{2\veps_n\fm}\|V\|_\fm\big\{6\ka_1\ka_3\ka_5
                             +3\ka_3\ka_5^2\big\}\\
    &= \Big[(e^\fm+1)\| Q_{n,\nu}^{(+)} \fQ_n\|_{\fm}
        +\GGa'_4e^{2\veps_n\fm}\|V\|_\fm\big\{
          6\wf_\phi\wf'_\phi+3\wf_\phi'\,\GGa'_4\wf_{\de\psi}
                      \big\}\Big]\wf_{\de\psi}\\
    &\le\half \GGa'_4\wf_{\de\psi}
\end{align*}
for a new $\GGa'_4$ and $\veps_n=\sfrac{1}{L^n}$.
Since $\tn f_j\tn_w \le \wf_f=\sfrac{1}{4}\la_j$,
$\tn B_j\tn_{w_{\ka,\la}}=0$ and 
\begin{align*}
\tn L_j\tn_{w_{\ka,\la}}
  &\le \max_{\si=+,-}\|S_{n,\nu}^{(\si)}\|_{\fm}
     \Big[|\mu| \la_j
            +\|V\|_\fm e^{2\veps_n\fm}
                 \big\{2\ka_1^2+4\ka_1\ka_5+2\ka_5^2\big\}\la_j\\&\hskip2.2in
             +\|V\|_\fm e^{\veps_n\fm}\big\{\ka_1^2+2\ka_1\ka_5+\ka_5^2\big\}
             \la_{3-j}
                    \Big]
\\
  &\le \max_{\si=+,-}\|S_{n,\nu}^{(\si)}\|_{\fm}
     \Big[|\mu|
      +3e^{2\veps_n\fm}\|V\|_\fm(\wf_\phi+\GGa'_4\wf_{\de\psi})^2\Big]
  \la_j
\end{align*}
\cite[Propositions \propSUBeqnsoln.a]{SUB} gives
\begin{align*}
\partial_\nu\de\phi_*
  &= S_{n,\nu}^{(+)}  \Ga_1\big(\al_1,\cdots, \al_8\big)\\
\partial_\nu\de\phi
  &= S_{n,\nu}^{(-)}\Ga_2 \big(\al_1,\cdots, \al_8\big)
\end{align*}
with 
\begin{align*}
&\tn \Ga_1\tn_{w_{\ka,\la}},
\tn \Ga_2\tn_{w_{\ka,\la}}
\le 2\max_{j=1,2}\tn f_j\tn_w
\le 2\wf_f
\le \GGa'_4\wf_{\de\psi}
\end{align*}
The conclusion now follows by \cite[Corollary \corSUBsubstitution]{SUB}. 

\end{proof}

\noindent
We define, on the space of field maps
$F(\phi_*,\phi,\de\psi_*,\de\psi)$, the projections
\begin{itemize}[leftmargin=*, topsep=2pt, itemsep=2pt, parsep=0pt]
\item[$\circ$]
$P^\phi_2$ 
which extracts the part of degree exactly one in each of $\phi_*$ 
and $\phi$, and of arbitrary degree in $\de\psi_{(*)}$ and
\item[$\circ$]
$P^\phi_1$ 
which extracts the part of degree exactly one in $\phi_{(*)}$, 
and of arbitrary degree in $\de\psi_{(*)}$  and
\item[$\circ$] 
$P^\phi_0$  which extracts the part of degree zero in 
$\phi_{(*)}$ and of arbitrary degree in $\de\psi_{(*)}$. 
\end{itemize}
\pagebreak[2]

\begin{lemma}\label{lemBGEdephisolnB}
Under the hypothesis of Lemma \ref{lemBGEdephisoln},
there is a constant $\GGa''_4$ such that the field maps 
$\de\varphi_{(*)}(\phi_*,\phi,\de\psi_*,\de\psi)$ of Lemma \ref{lemBGEdephisoln}
have the form
\begin{align*}
 \de\varphi_*
&= S_n(\mu)^* Q_n^* \fQ_n\, \de\psi_*  
  - {S_n(\mu)^*}\cV'_*(\varphi_*, \varphi,\varphi_*) 
      \Big|^{\atop{\varphi_*=\phi_*+S_n(\mu)^* Q_n^* \fQ_n\, \de\psi_*}
                  {\varphi=\phi+S_n(\mu) Q_n^* \fQ_n\, \de\psi}}
     _{\atop{\varphi_*=\phi}
            {\varphi=\phi}} 
   + \de\varphi_*^{(\ge 5)}
 \\
\noalign{\vskip0.05in}
\de\varphi
&= S_n(\mu) Q_n^* \fQ_n\, \de\psi 
- S_n(\mu)\cV' (\varphi, \varphi_*,\varphi) 
      \Big|^{\atop{\varphi_*=\phi_*+S_n(\mu)^* Q_n^* \fQ_n\, \de\psi_*}
                  {\varphi=\phi+S_n(\mu) Q_n^* \fQ_n\, \de\psi}}
     _{\atop{\varphi_*=\phi_*}
            {\varphi=\phi}}
   + \de\varphi^{(\ge 5)}
\end{align*}
with $\de\varphi_{(*)}^{(\ge 5)}$ being of order at least five in
$(\phi_{(*)},\de\psi_{(*)})$ and obeying 
\begin{align*}
  \TN P^\phi_j\de\varphi^{(\ge 5)}_{(*)} \TN_\phi   
     &\le \GGa''_4 \|V\|_\fm^2(\wf_\phi+\wf_{\de\psi})^j\wf_{\de\psi}^{5-j}
 \qquad\text{for $j=0,1,2$} \cr
\end{align*}

\end{lemma}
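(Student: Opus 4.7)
\medskip
\noindent\textbf{Approach.}
The plan is to perform a single Picard iteration of a cleaned-up form of \eqref{eqnBGEdevarphi}. The resolvent identity $(\bbbone-\mu S_n)^{-1}S_n=S_n(\mu)$, immediate from $S_n(\mu)^{-1}=S_n^{-1}-\mu$, converts \eqref{eqnBGEdevarphi} into
\[
\de\varphi_{(*)}=S_n(\mu)^{(*)}Q_n^*\fQ_n\,\de\psi_{(*)}-S_n(\mu)^{(*)}\cV'_{(*)}(\varphi_*,\varphi,\varphi_*)\Big|^{\varphi_*=\phi_*+\de\varphi_*\atop\varphi=\phi+\de\varphi}_{\varphi_*=\phi_*\atop\varphi=\phi}
\]
(the $*$-equation is displayed; the other is the obvious symmetric copy). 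Setting $\Psi_{(*)}:=S_n(\mu)^{(*)}Q_n^*\fQ_n\,\de\psi_{(*)}$ and writing $\tilde{\de\varphi}_{(*)}$ for the right-hand side of the lemma with the $\de\varphi_{(*)}^{(\ge 5)}$ term removed, subtracting and using trilinearity of $\cV'$ yields the explicit expression
\[
\de\varphi_*^{(\ge 5)}=-S_n(\mu)^*\bigl\{\cV'_*(\phi_*+\de\varphi_*,\phi+\de\varphi,\phi_*+\de\varphi_*)-\cV'_*(\phi_*+\Psi_*,\phi+\Psi,\phi_*+\Psi_*)\bigr\}
\]
and analogously for $\de\varphi^{(\ge 5)}$.

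\medskip
\noindent\textbf{Expansion and degree count.}
Next I would decompose the trilinear difference into seven summands by writing each slot as $(\phi_{(*)}+\Psi_{(*)})+(\de\varphi_{(*)}-\Psi_{(*)})$ and expanding. Every summand contains at least one slot of the form $\de\varphi_{(*)}-\Psi_{(*)}$. The cleaned equation itself reads $\de\varphi_{(*)}-\Psi_{(*)}=-S_n(\mu)^{(*)}[\,\cdots\,]$, where $[\,\cdots\,]$ is a trilinear $\cV'_{(*)}$-difference that vanishes when $\de\varphi_{(*)}=0$; thus each such slot carries an additional factor of $\|V\|_\fm$ and is of total degree $\ge 3$ in $(\phi_{(*)},\de\psi_{(*)})$ (three slots of degree $\ge 1$, with at least one being a $\de\varphi_{(*)}$, which itself contributes at least one $\de\psi_{(*)}$). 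The two remaining slots in each summand are of degree $\ge 1$, giving total degree $\ge 5$ as claimed, while the outer $\cV'_*$ contributes the second $\|V\|_\fm$. Multilinearity together with Lemma \ref{lemBGEdephisoln} then yields the intermediate estimate
\[
\TN\de\varphi_{(*)}-\Psi_{(*)}\TN_\phi\le\const\,\|V\|_\fm(\wf_\phi+\wf_{\de\psi})^2\wf_{\de\psi},
\]
from which the overall bound $\TN\de\varphi_{(*)}^{(\ge 5)}\TN_\phi\le\const\,\|V\|_\fm^2(\wf_\phi+\wf_{\de\psi})^4\wf_{\de\psi}$ is immediate.

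\medskip
\noindent\textbf{The projection bounds.}
The $P^\phi_j$-bounds for $j=0,1,2$ are the substantive point. I would distribute the target $\phi$-degree $j$ among the three slots of each of the seven summands: a $(\phi_{(*)}+\Psi_{(*)})$-slot contributes $\wf_\phi^a\wf_{\de\psi}^{1-a}$ with $a\in\{0,1\}$, while a $(\de\varphi_{(*)}-\Psi_{(*)})$-slot of $\phi$-degree $a$ is controlled by $\|V\|_\fm\wf_\phi^a\wf_{\de\psi}^{k}$ with $k\ge 3-a$, obtained by projecting its defining equation and again invoking Lemma \ref{lemBGEdephisoln}. Summing over the admissible distributions of the $\phi$-degree $j$ collapses, via the binomial identity, to $\|V\|_\fm^2(\wf_\phi+\wf_{\de\psi})^j\wf_{\de\psi}^{5-j}$. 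The main obstacle is precisely this bookkeeping: Lemma \ref{lemBGEdephisoln} only supplies total-degree bounds on $\de\varphi_{(*)}$, so separating the $\phi$- from the $\de\psi$-dependence of a $(\de\varphi_{(*)}-\Psi_{(*)})$-slot requires working with an auxiliary norm in which $\phi_{(*)}$ and $\de\psi_{(*)}$ carry distinct weight factors; the cleanest implementation is to reapply \cite[Corollary \corSUBsubstitution]{SUB} to the cleaned equation with these separate weights and then substitute to recover the compact form $(\wf_\phi+\wf_{\de\psi})^j$ appearing in the lemma.
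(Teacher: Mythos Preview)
Your proposal is correct and follows essentially the same route as the paper: rewrite \eqref{eqnBGEdevarphi} via the resolvent identity with $S_n(\mu)$, read off $\de\varphi_{(*)}=\Psi_{(*)}+\de\varphi_{(*)}^{(\ge 3)}$ with the $P^\phi_j$--graded bounds $\TN P^\phi_j\de\varphi_{(*)}^{(\ge 3)}\TN_\phi\le\const\,\|V\|_\fm(\wf_\phi+\wf_{\de\psi})^j\wf_{\de\psi}^{3-j}$, and substitute this back into the $S_n(\mu)$--equation to obtain $\de\varphi_{(*)}^{(\ge 5)}$. The paper simply asserts the intermediate $P^\phi_j$ bounds on $\de\varphi_{(*)}^{(\ge 3)}$ by inspection of the trilinear structure, whereas you spell out the slot-by-slot bookkeeping; no auxiliary norm is actually needed since $\tn\,\cdot\,\tn_\phi$ already carries distinct weights $\wf_\phi$ and $\wf_{\de\psi}$.
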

\begin{proof}
Rewrite the equations \eqref{eqnBGEdevarphi}
for $\de\varphi_{(*)}(\phi_*,\phi,\de\psi_*,\de\psi)$ in the form
\begin{align*}
 \de\varphi_*
&={S_n(\mu)^*}Q_n^* \fQ_n\, \de\psi_*  
  - {S_n(\mu)^*}\cV'_*(\varphi_*, \varphi,\varphi_*) 
      \Big|^{\atop{\varphi_*=\phi_*+\de\varphi_*}
                  {\varphi=\phi+\de\varphi}}
     _{\atop{\varphi_*=\phi}
            {\varphi=\phi}}
 \cr
\noalign{\vskip0.05in}
\de\varphi
&= S_n(\mu) Q_n^* \fQ_n\, \de\psi 
- S_n(\mu)\cV' (\varphi, \varphi_*,\varphi) 
      \Big|^{\atop{\varphi_*=\phi_*+\de\varphi_*}
                  {\varphi=\phi+\de\varphi}}
     _{\atop{\varphi_*=\phi_*}
             {\varphi=\phi}}
\end{align*}
We see from these equations that 
$\de\varphi_{(*)}=S_n(\mu)^{(*)} Q_n^* \fQ_n\, \de\psi_{(*)}
  + \de\varphi_{(*)}^{(\ge 3)}$, with $\de\varphi_{(*)}^{(\ge 3)}$ 
being of order at least three in $(\phi_{(*)},\de\psi_{(*)})$ and obeying
$\TN \de\varphi_{(*)}^{(\ge 3)} \TN_\fm\le \tilde\GGa''_4\|V\|_\fm$
\begin{align*}
  \TN P^\phi_j\de\varphi^{(\ge 3)}_{(*)} \TN_\phi   
        &\le \tilde\GGa''_4 \|V\|_\fm
                (\wf_\phi+\wf_{\de\psi})^j\wf_{\de\psi}^{3-j}
  \qquad\text{for $j=0,1,2$}
\end{align*}
Hence
\begin{align*}
 \de\varphi_*
&=S_n(\mu)^* Q_n^* \fQ_n\, \de\psi_*  
  - {S_n(\mu)^*}\cV'_*(\varphi_*, \varphi,\varphi_*) 
      \Big|^{\atop{\varphi_*=\phi_*+S_n(\mu)^* Q_n^* \fQ_n\, \de\psi_*
                  + \de\varphi_*^{(\ge 3)}}
        {\varphi=\phi+S_n(\mu) Q_n^* \fQ_n\, \de\psi+ \de\varphi^{(\ge 3)}}}
     _{\atop{\varphi_*=\phi}
             {\varphi=\phi}}
%\\
%&= S_n(\mu)^* Q_n^* \fQ_n\, \de\psi_*  
%  - {S_n(\mu)^*}\cV'_*(\varphi_*, \varphi,\varphi_*) 
%      \Big|^{\atop{\varphi_*=\phi_*+S_n(\mu)^* Q_n^* \fQ_n\, \de\psi_*}
%                  {\varphi=\phi+S_n(\mu) Q_n^* \fQ_n\, \de\psi}}
%     _{\atop{\varphi_*=\phi}
%             {\varphi=\phi}}
%   + \de\varphi_*^{(\ge 5)}
% \\
%\noalign{\vskip0.05in}
%\de\varphi
%&= S_n(\mu) Q_n^* \fQ_n\, \de\psi 
%- S_n(\mu)\cV' (\varphi, \varphi_*,\varphi) 
%      \Big|^{\atop{\varphi_*=\phi_*+S_n(\mu)^* Q_n^* \fQ_n\, \de\psi_*
%                 + \de\varphi_*^{(\ge 3)}}
%        {\varphi=\phi+S_n(\mu) Q_n^* \fQ_n\, \de\psi+ \de\varphi^{(\ge 3)}}}
%     _{\atop{\varphi_*=\phi_*}
%            {\varphi=\phi}}
% \\
%&= S_n(\mu) Q_n^* \fQ_n\, \de\psi 
%- S_n(\mu)\cV' (\varphi, \varphi_*,\varphi) 
%      \Big|^{\atop{\varphi_*=\phi_*+S_n(\mu)^* Q_n^* \fQ_n\, \de\psi_*}
%                  {\varphi=\phi+S_n(\mu) Q_n^* \fQ_n\, \de\psi}}
%     _{\atop{\varphi_*=\phi_*}
%            {\varphi=\phi}}
%   + \de\varphi^{(\ge 5)}
\end{align*}
The claim follows immediately from this and the corresponding equation
for $\de\varphi$.
\end{proof}

\begin{proof}[Proof of Proposition \ref{propBGEdephisoln}]
\Item {\it Parts (a) and (e):}\ \ \ 
By \eqref{eqnBGEdevarphi}
\begin{align*}
&\de\check\phi_{(*)n+1}(\th_*,\th,\de\psi_*,\de\psi,\mu,\cV)\\
&\hskip0.5in
   =\de\varphi_{(*)}\big(\check\phi_{*n+1}(\th_*,\th,\mu,\cV)\,,\,
        \check\phi_{n+1}(\th_*,\th,\mu,\cV)\,,\,\de\psi_*\,,\,\de\psi\big)
\end{align*}
so that, by \eqref{eqnBGEdehatphidef} and 
\cite[Definition \defBGAphicheck]{PAR1},
\begin{equation}\label{eqnBGEdevarphiB}
\begin{split}
&\de\hat\phi_{(*)n+1}(\psi_*,\psi,z_*,z)
%&\hskip0.5in=\bbbs\big[\de\varphi_{(*)}\big(
%           \check\phi_{*n+1}(\bbbs^{-1}\psi_*,\bbbs^{-1}\psi,\mu,\cV)\,,\,
%           \check\phi_{n+1}(\bbbs^{-1}\psi_*,\bbbs^{-1}\psi,\mu,\cV)\,,\,
%           D^{(n)*}\bbbl_* z_*\,,\,
%           D^{(n)}\bbbl_* z\big)\big]\\
=\bbbs\Big[\de\varphi_{(*)}\big(
           \phi_*\,,\,
           \phi\,,\,
           \de\psi_*\,,\,
           \de\psi\big)\Big]_{\atop
  {\phi_{(*)}=\check\phi_{(*)n+1}(\bbbs^{-1}\psi_*,\bbbs^{-1}\psi,\mu,\cV)}
  {\de\psi_{(*)}=D^{(n)(*)}\bbbl_* z_{(*)}}}\\
&\hskip0.5in=L^{\sfrac{3}{2}}\bbbl_*^{-1}\Big[\de\varphi_{(*)}\big(
           \bbbs^{-1}\Phi_*\,,\,
           \bbbs^{-1}\Phi\,,\,
           \bbbs^{-1} \de\Psi_*\,,\,
           \bbbs^{-1} \de\Psi\big)\Big]_{\atop
              {\Phi_{(*)}=\phi_{(*)n+1}(\psi_*,\psi,L^2\mu,\bbbs\cV)}
              {\de\Psi_{(*)}=L^{3/2}\bbbs D^{(n)(*)}\bbbs^{-1} z_{(*)}}}\\
&\hskip0.5in=L^{\sfrac{3}{2}}\de\varphi_{(*)}^{(s)}\big(
                     \Phi_*\,,\,
                     \Phi\,,\,
                     \de\Psi_*\,,\,
                     \de\Psi\big)\Big|_{\atop
              {\Phi_{(*)}=\phi_{(*)n+1}(\psi_*,\psi,L^2\mu,\bbbs\cV)}
              {\de\Psi_{(*)}=L^{3/2}\bbbs D^{(n)(*)}\bbbs^{-1} z_{(*)}}}
\end{split}
\end{equation}
in the notation of \cite[(\eqnSAscaledfm)]{PAR1}.
Similarly, using \cite[Remark \remSCscaling.b]{PAR1},
\begin{align*}
&\big(\partial_\nu\de\hat\phi_{(*)n+1}\big)(\psi_*,\psi,z_*,z)
=\bbbs_\nu \partial_\nu\bbbs^{-1}\de\hat\phi_{(*)n+1}(\psi_*,\psi,z_*,z)\\
%&\hskip0.2in=\bbbs_\nu \partial_\nu\big[\de\varphi_{(*)}\big(
%           \check\phi_{*n+1}(\bbbs^{-1}\psi_*,\bbbs^{-1}\psi,\mu)\,,\,
%           \check\phi_{n+1}(\bbbs^{-1}\psi_*,\bbbs^{-1}\psi,\mu)\,,\,
%           D^{(n)*}\bbbl_* z_*\,,\,
%           D^{(n)}\bbbl_* z\big)\big]\\
&\hskip0.2in=L_\nu  L^{\sfrac{3}{2}}
          \bbbl_*^{-1}\Big[\partial_\nu\de\varphi_{(*)}\big(
           \bbbs^{-1}\Phi_*\,,\,
           \bbbs^{-1}\Phi\,,\,
           \bbbs^{-1} \de\Psi_*\,,\,
           \bbbs^{-1} \de\Psi\big)\Big]_{\atop
               {\Phi_{(*)}=\phi_{(*)n+1}(\psi_*,\psi,L^2\mu,\bbbs\cV)}
               {\de\Psi_{(*)}=L^{3/2}\bbbs D^{(n)(*)}\bbbs^{-1} z_{(*)}}}\\
&\hskip0.2in=L_\nu  L^{\sfrac{3}{2}}
          \bbbl_*^{-1}\Big[\de\varphi_{(*)\nu}\big(
           \bbbs^{-1}\Phi_*\,,\,
           \bbbs^{-1}\Phi\,,\,
           \bbbs^{-1}_\nu\partial_\nu\Phi_*\,,\,
           \bbbs^{-1}_\nu\partial_\nu\Phi\,,\,
           \bbbs^{-1} \de\Psi_*\,,\,
           \bbbs^{-1} \de\Psi\big)\Big]_{
                  \atop{\Phi_{(*)}=\cdots}
                       {\de\Psi_{(*)}=\cdots}}\\
%&\hskip0.2in=L_\nu L^{\sfrac{3}{2}}
%           \de\varphi_{(*)\nu}^{(s)}\big(
%                     \Phi_*\,,\,
%                     \Phi\,,\,
%                     \Phi_{*\nu}\,,\,
%                     \Phi_\nu\,,\,
%                     \de\Psi_*\,,\,
%                     \de\Psi\big)\Big|_{
%           {\Phi_{(*)}=\phi_{(*)n+1}(\psi_*,\psi,L^2\mu,\bbbs\cV)\atop
%  \Phi_{(*)\nu}=\partial_\nu\phi_{(*)n+1}(\psi_*,\psi,L^2\mu,\bbbs\cV)}
%        \atop \de\Psi_{(*)}=L^{3/2}\bbbs D^{(n)(*)}\bbbs^{-1} z_{(*)}}\cr
&\hskip0.2in=\de\hat\phi_{(*)n+1,\nu}\big(\psi_*,\psi,\partial_\nu\psi_*,
                \partial_\nu\psi,z_*,z\big)\\
\end{align*}
where $L_0=L^2$ and $L_\nu=L$ for $1\le\nu\le 3$, and we have set
\begin{align*}
&\de\hat\phi_{(*)n+1,\nu}\big(\psi_*,\psi,\psi_{*\nu},\psi_\nu,
                      z_*,z\big)\\
&=L_\nu L^{\sfrac{3}{2}}
           \de\varphi_{(*)\nu}^{(s)}\big(
                     \Phi_*,
                     \Phi,
                     \Phi_{*\nu},
                     \Phi_\nu,
                     \de\Psi_*,
                     \de\Psi\big)\Big|_{\atop
            {\atop{\Phi_{(*)}=\phi_{(*)n+1}(\psi_*,\psi,L^2\mu,\bbbs\cV)}
                  {\Phi_{(*)\nu}=B^{(\pm)}_{n+1,L^2\mu,\nu}\!\psi_{(*)\nu}
                    +\phi^{(\ge 3)}_{(*)n+1,\nu}(\psi_*,\!\psi, 
                       \psi_{*\nu},\psi_\nu,
                       L^2\mu,\bbbs\cV)}}
                    {\de\Psi_{(*)}=L^{3/2}\bbbs D^{(n)(*)}\bbbs^{-1} z_{(*)}}}
\end{align*}

We shall bound $\de\varphi_{(*)}^{(s)}$ and 
$\de\varphi_{(*)\nu}^{(s)}$
using the norm $\tn\ \cdot\ \tn_\Phi$
with mass $\fm$ and weight factors 
\begin{align*}
\wf_\Phi&= \|S_{n+1}(L^2\mu)\|_\fm \|Q_{n+1}^* \fQ_{n+1}\|_\fm\wf
                 +\GGa_1 \|\bbbs V\|_{\fm}\wf^3\\
\wf'_\Phi&= \max_{\si=+,-}\|B^{(\si)}_{n+1,L^2\mu,\nu}\|_\fm\wf'
                 + \GGa_1 \|\bbbs V\|_\fm\wf^2\wf'\\
\wf_{\de\Psi}&=  L^{3/2}\|\bbbs D^{(n)}\bbbs^{-1}\|_\fm\ \wf_\fl
\end{align*}
By \cite[Corollary \corSUBsubstitution]{SUB} and Proposition \ref{propBGEphivepssoln},
with $n$ replaced by $n+1$, $\mu$ replaced by $L^2\mu$ and $V=\bbbs\cV$,
\begin{equation*}
\tn \de\hat\phi_{(*)n+1}\tn \le L^{3/2}\TN\de\varphi^{(s)}_{(*)}\TN_\Phi
\qquad
\tn \de\hat\phi_{(*)n+1,\nu}\tn \le L_\nu L^{3/2}
   \TN\de\varphi_{(*)\nu}^{(s)}\TN_\Phi
\end{equation*}
The hypothesis 
$\|\bbbs V\|_\fm\wf^2+L^2|\mu|\le\rrho_1$ of 
Proposition \ref{propBGEphivepssoln} is satisfied if $\rrho_2$ is 
small enough, since $\|\bbbs V\|_\fm\le \sfrac{1}{L}\|V\|_\fm$,
by \cite[Lemma \lemSAscaletoscale.a]{PAR1}. 
By \cite[Lemma \lemSAscaletoscale.c]{PAR1} with 
$\wf=\wf_\Phi$, $\wf'=\wf'_\Phi$, $\wf_\fl=\wf_{\de\Psi}$,
and $\check\fm=\fm$, 
$\check\wf=\wf_\phi$, $\check\wf'=\wf'_\phi$, $\check\wf_\fl=\wf_{\de\psi}$, with the choice
\begin{align*}
\wf_\phi&=L^{-3/2}\wf_\Phi
        = L^{-3/2}\big[\|S_{n+1}(L^2\mu)\|_\fm \|Q_{n+1}^* \fQ_{n+1}\|_\fm
                 +\GGa_1 \|\bbbs V\|_{\fm}\wf^2\big]\wf\\
\wf'_\phi&= L^{-5/2}\wf'_\Phi
   = L^{-5/2}\big[\max_{\si=+,-}\|B^{(\si)}_{n+1,L^2\mu,\nu}\|_\fm
                 + \GGa_1 \|\bbbs V\|_\fm\wf^2\big]\wf'\\
\wf_{\de\psi}&=L^{-3/2}\wf_{\de\Psi}
    =L^9\big(\sfrac{1}{L^9}\|\bbbs D^{(n)}\bbbs^{-1}\|_\fm\big)\ \wf_\fl
\end{align*}
we have
$
\tn\de\varphi^{(s)}_{(*)}\tn_\Phi
\le \tn\de\varphi_{(*)}\tn_\phi
$
and
$
 \TN\de\varphi_{(*)\nu}^{(s)}\TN_\Phi
\le  \TN \de\varphi_{(*)\nu}\TN_\phi
$
so that
\begin{equation}\label{eqnBGEdephisc}
\tn \de\hat\phi_{(*)n+1}\tn \le L^{3/2} \tn\de\varphi_{(*)}\tn_\phi
\qquad
\tn \de\hat\phi_{(*)n+1,\nu}\tn \le L_\nu L^{3/2}
   \TN \de\varphi_{(*)\nu}\TN_\phi
\end{equation}
So, by Lemma \ref{lemBGEdephisoln}, 
\begin{align*}
\tn \de\hat\phi_{(*)n+1}\tn &\le L^{3/2}\GGa'_4\wf_{\de\psi}
\le L^{11}\GGa_4\wf_\fl\\
\tn \de\hat\phi_{(*)n+1,\nu}\tn &\le L_\nu L^{11}
   \GGa_4\wf_\fl
\end{align*}
The hypothesis 
$\max\big\{\ |\mu|\ ,\  
 \|V\|_\fm (\wf_\phi+\wf_{\de\psi})
          (\wf_\phi+\wf'_\phi+\wf_{\de\psi})\ \big\}
\le \rrho'_2$ of Lemma \ref{lemBGEdephisoln} is satisfied if 
$\rrho_2$ is small enough.

\Item {\it Parts (b) and (c):}\ \ \ 
As in \eqref{eqnBGEdephisc},
\begin{alignat*}{3}
\tn \de\hat\phi_{(*)n+1}^{(+)}\tn 
           &\le L^{3/2} \tn\de\varphi_{(*)}^{(+)}\tn_\phi&
%            \le L^{3/2}\GGa'_4\,
%                 \{\|V\|_\fm(\wf_\phi+\wf_{\de\psi})^2+|\mu|\}
%                 \wf_{\de\psi}
            &\le L^{29}\GGa_4\,
                    \{\|V\|_\fm(\wf+\wf_\fl)^2+|\mu|\}
                  \wf_\fl \\
\tn \de\hat\phi_{(*)n+1}^{(\ge 2)}\tn 
           &\le L^{3/2} \tn\de\varphi_{(*)}^{(\ge 2)}\tn_\phi&
%            \le L^{3/2}\GGa'_4\,
%                 \|V\|_\fm(\wf_\phi+\wf_{\de\psi})
%                 \wf_{\de\psi}^2
            &\le L^{29}\GGa_4\,
                    \|V\|_\fm(\wf+\wf_\fl)
                  \wf_\fl^2 
\end{alignat*}
by Lemma \ref{lemBGEdephisoln}.a.

\Item {\it Part (d):}\ \ \ 
By \eqref{eqnBGEdevarphiB} and Lemma \ref{lemBGEdephisolnB},
\begin{align*}
\de\hat\phi_{n+1}^{(+)}(\psi_*,\psi,z_*,z)
%=\de\hat\phi_{n+1}(\psi_*,\psi,z_*,z)
%    -L^{3/2}\bbbs S_n Q_n^*\fQ_n D^{(n)}\bbbs^{-1}z\\
%%%%
&=L^{3/2}\bbbs [S_n(\mu)-S_n]Q_n^*\fQ_n 
                                   D^{(n)}\bbbs^{-1}z \\
&\hskip0.5in
  - L^{\sfrac{3}{2}}\bbbl_*^{-1}S_n(\mu)\cV'(\varphi, \varphi_*,\varphi) 
      \Big|^{\varphi_{(*)}=\phi_{(*)}
                 +S_n(\mu)^{(*)} Q_n^* \fQ_n\, \de\psi_{(*)}}
     _{\varphi_{(*)}=\phi_{(*)}}\\
&\hskip0.5in
    + L^{3/2}\bbbl_*^{-1}\de\varphi^{(\ge 5)}(\phi_*,\phi,\de\psi_*,\de\psi) 
\end{align*}
with the substitutions
\refstepcounter{equation}\label{eqnBGEsubB}
\begin{align}
\phi_{(*)}&=\bbbs^{-1}\phi_{(*)n+1}(\psi_*,\psi,L^2\mu,\bbbs\cV)
   \tag{\ref{eqnBGEsubB}.a}\\
\de\psi_{(*)}&=L^{3/2} D^{(n)(*)}\bbbs^{-1} z_{(*)}
   \tag{\ref{eqnBGEsubB}.b}
\end{align}
In the substitution, we expand, by Proposition \ref{propBGEphivepssoln}.a,
\begin{equation}\label{eqnBGEsubC}
\phi_{(*)}
=\bbbs^{-1}S_{n+1}(L^2\mu)^{(*)}Q_{n+1}^* \fQ_{n+1}\,\psi_{(*)}
      +\bbbs^{-1}\phi_{(*)n+1}^{(\ge 3)}(\psi_*,\psi,L^2\mu,\bbbs\cV)
\end{equation}
to get the statement of the proposition with 
$\de\hat\phi^{(\ho)}$ being the sum of 
\begin{equation*}
- L^{\sfrac{3}{2}}\bbbl_*^{-1}S_n(\mu)\cV'(\varphi, \varphi_*,\varphi) 
      \Big|^{\varphi_{(*)}=\phi_{(*)}
                      +S_n(\mu)^{(*)} Q_n^* \fQ_n\, \de\psi_{(*)}}
     _{\varphi_{(*)}=\bbbs^{-1}S_{n+1}(L^2\mu)^{(*)}Q_{n+1}^* \fQ_{n+1}\,\psi_{(*)}
          +S_n(\mu)^{(*)} Q_n^* \fQ_n\, \de\psi_{(*)}}
\end{equation*}
and
\begin{equation*}
L^{3/2}\bbbl_*^{-1}\de\varphi^{(\ge 5)}(\phi_*,\phi,\de\psi_*,\de\psi)
\end{equation*}
with the substitutions (\ref{eqnBGEsubB}.b) and \eqref{eqnBGEsubC}. 
As in \eqref{eqnBGEdephisc}, the specified
properties of $\de\hat\phi^{(\ho)}$ follow from 
\cite[Lemma \lemSAscaletoscale.c]{PAR1},
the properties of $\phi_{(*)n+1}^{(\ge 3)}$ in Proposition  
\ref{propBGEphivepssoln}.a\ 
and the properties of $\de\varphi^{(\ge 5)}$ in Lemma \ref{lemBGEdephisolnB}.

\end{proof}

\newpage
%%%%%%%%%%%%%%%%%%%%%%%%%%%%%%%%%%%%%%%%%%%%%%%%%%%%%%%%%%
\section{
Variations of the Background Field with Respect to the Chemical
Potential $\mu$ and the Interaction $V$}\label{secBGEvrnsmu}
%%%%%%%%%%%%%%%%%%%%%%%%%%%%%%%%%%%%%%%%%%%%%%%%%%%%%%%%%

\begin{proposition}\label{propBGEdephidemu} 
There are constants\footnote{Recall Convention \ref{convBGEconstants}.}
$\rrho_3>0$ and $\GGa_5$, 
such that, if 
\begin{equation*}
\max\big\{\,|\mu|\,,\,|\de\mu|\,,\,\|V\|_\fm\wf^2\,,\,\|\de V\|_\fm \wf^2\, 
         \big\}
\le \rrho_3
\end{equation*}
then there are field maps 
  $\De\phi_{(*)n}$, 
  $\De\phi_{(*)n,\nu}$ and
  $\De\phi_{(*)n,D}$
such that
\begin{align*}
\phi_{(*)n}(\psi_*,\psi,\mu+\de\mu,\cV+\de\cV)
     &=\phi_{(*)n}(\psi_*,\psi,\mu,\cV)
       +\De\phi_{(*)n}(\psi_*,\psi,\mu,\de\mu,\cV,\de\cV)\\
%%%%
\partial_\nu\phi_{(*)n}(\psi_*,\psi,\mu+\de\mu,\cV+\de\cV)
     &=\partial_\nu\phi_{(*)n}(\psi_*,\psi,\mu,\cV) \\ &\hskip0.75in
         +\De\phi_{(*)n,\nu}(\psi_*,\psi,\partial_\nu\psi_*,\partial_\nu\psi,
                 \mu,\de\mu,\cV,\de\cV)\\
%%%%
D_n^{(*)}\phi_{(*)n}(\psi_*,\psi,\mu+\de\mu,\cV+\de\cV)
     &=D_n^{(*)}\phi_{(*)n}(\psi_*,\psi,\mu,\cV)  \\ &\hskip0.75in
         +\De\phi_{(*)n,D}\big(\psi_*,\psi,\mu,\de\mu,\cV,\de\cV\big)
\end{align*}
The field maps fulfill the bounds
\begin{align*}
\tn \De\phi_{(*)n}\tn &\le  \GGa_5\ \big(|\de\mu|+\|\de V\|_\fm\wf^2\big)\wf\\
\tn \De\phi_{(*)n,\nu}\tn 
                  &\le \GGa_5\ \big(|\de\mu|+\|\de V\|_\fm\wf^2\big)\wf'\\
\tn \De\phi_{(*)n,D}\tn &\le \GGa_5 \big(|\de\mu|+\|\de V\|_\fm\wf^2\big)\, \wf 
\end{align*}
Furthermore  $\De\phi_{(*)n}$ and $\De\phi_{(*)n,D}$ are of degree at least 
one in $\psi_{(*)}$
and each of  $\De\phi_{*n,\nu}$ and $\De\phi_{n,\nu}$ are of degree 
precisely one in $\psi_{(*)\nu}$.
Indeed, 
\begin{align*}
\De\phi_{(*)n}&=\de\mu\,B_{(*)n,\mu}\, \psi_{(*)}
                    +\De\phi_{(*)n}^{(\ge 3)}\\
\De\phi_{(*)n,\nu}&=\de\mu\,B_{(*)n,\mu,\nu}\,\psi_{(*)\nu}
                    +\De\phi_{(*)n,\nu}^{(\ge 3)}\\
\De\phi_{(*)n,D}&=\de\mu\,B_{(*)n,\mu,D}\,\psi_{(*)}
                     +\De\phi_{(*)n,D}^{(\ge 3)}
\end{align*}
where 
\begin{align*}
B_{(*)n,\mu}&=S_n^{(*)}\,
         \big[\bbbone -(\mu+\de\mu)S_n^{(*)}\big]^{-1} 
         S_n(\mu)^{(*)}Q_n^* \fQ_n\\
B_{*n,\mu,\nu}&=S_{n,\nu}^{(+)}\,
         \big[\bbbone -(\mu+\de\mu)S_{n,\nu}^{(+)}\big]^{-1} 
         B_{n,\nu,\mu}^{(+)}\\
B_{n,\mu,\nu}&=S_{n,\nu}^{(-)}\,
         \big[\bbbone -(\mu+\de\mu)S_{n,\nu}^{(-)}\big]^{-1} 
         B_{n,\nu,\mu}^{(-)}\\
B_{(*)n,\mu,D}&=S_n(\mu)^{(*)}Q_n^* \fQ_n 
         - \big(Q_n^*\fQ_nQ_n-\mu-\de\mu\big)
          B_{(*)n,\mu}
\end{align*} 
$\De\phi_{(*)n}^{(\ge 3)}$, $\De\phi_{(*)n,D}^{(\ge 3)}$ are 
of degree at least three in $\psi_{(*)}$.
$\De\phi_{(*)n,\nu}^{(\ge 3)}$ and $\De\phi_{n,\nu}^{(\ge 3)}$ are of degree
precisely one in $\psi_{(*),\nu}$ and of degree at least two in $\psi_{(*)}$.
They obey the bounds
\begin{align*}
\TN \De\phi_{(*)n}^{(\ge 3)}\TN\ ,\ 
\TN \De\phi_{(*)n,D}^{(\ge 3)}\TN
\ \le\  \GGa_5\big(|\de\mu|\|V\|_\fm+\|\de V\|_\fm\big)\, \wf^3\\
\TN \De\phi_{(*)n,\nu}^{(\ge 3)}\TN
\ \le\  \GGa_5\big(|\de\mu|\|V\|_\fm+\|\de V\|_\fm\big)\, \wf^2\wf'\\
\end{align*}

\end{proposition}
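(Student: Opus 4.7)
\medskip
\noindent\textit{Proof proposal.}\ \
The plan is to follow the same scheme as in the proofs of Proposition \ref{propBGEphivepssoln}(a) and Lemma \ref{lemBGEdephisoln}, namely to convert the problem into a fixed point equation of the form $\vec\ga=\vec f(\vec\al)+\vec L(\vec\al,\vec\ga)+\vec B(\vec\al;\vec\ga)$ as in \cite[(\eqnSUBfixedpteqn.b)]{SUB}, and to apply \cite[Proposition \propSUBeqnsoln]{SUB} (that is, Proposition \ref{SUB:propBGEeqnsoln}) and \cite[Corollary \corSUBsubstitution]{SUB}. Writing $\phi_{(*)}=\phi_{(*)n}(\psi_*,\psi,\mu,\cV)$ and $\tilde\phi_{(*)}=\phi_{(*)}+\De\phi_{(*)n}$, the identity $S_n(\mu+\de\mu)^{-1}=S_n(\mu)^{-1}-\de\mu$ together with the background field equations \eqref{eqnBGEbgeqns} at the two parameter values yields, after subtraction,
\begin{align*}
\De\phi_*
&= S_n(\mu)^*\Big\{\de\mu\,(\phi_*+\De\phi_*)
   -\big[\cV'_*(\tilde\phi_*,\tilde\phi,\tilde\phi_*)-\cV'_*(\phi_*,\phi,\phi_*)\big]
   -\de\cV'_*(\tilde\phi_*,\tilde\phi,\tilde\phi_*)\Big\},
\end{align*}
and the analogous equation for $\De\phi$. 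I would take $\vec\al=(\phi_*,\phi)$ (with weight factor $\ka\sim\|S_n\|_\fm\|Q_n^*\fQ_n\|_\fm\wf$, coming from Proposition \ref{propBGEphivepssoln}) and $\vec\ga=(\De\phi_*,\De\phi)$ (with weight factor $\la$ proportional to $(|\de\mu|+\|\de V\|_\fm\wf^2)\wf$), and read off $\vec f$, $\vec L$, $\vec B$ from the expansion in $\De\phi_{(*)}$; the linear piece $\de\mu\,S_n(\mu)^*\De\phi_*$ gets absorbed into $\vec L$ and is bounded by the hypothesis on $|\de\mu|$. The bound $\tn\De\phi_{(*)n}\tn\le\GGa_5(|\de\mu|+\|\de V\|_\fm\wf^2)\wf$ then follows from \cite[Proposition \propSUBeqnsoln.a]{SUB} and \cite[Corollary \corSUBsubstitution]{SUB}.

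To identify the linear-in-$\psi$ part, note that the only contribution of degree exactly one in $\psi_{(*)}$ comes from the leading linear piece of $\phi_{(*)}$ in the inhomogeneous term $\de\mu\,\phi_{(*)}$, plus the induced shift of $\De\phi_{(*)}$ via the $\de\mu\,S_n(\mu)^*\De\phi_*$ term. Summing the geometric series in $\de\mu$ gives the factor $[\bbbone-(\mu+\de\mu)S_n^{(*)}]^{-1}$ in front of $S_n(\mu)^{(*)}Q_n^*\fQ_n\psi_{(*)}$, which is precisely $B_{(*)n,\mu}$ (using the identity $S_n^{(*)}[\bbbone-(\mu+\de\mu)S_n^{(*)}]^{-1}=S_n(\mu+\de\mu)^{(*)}$ and the resolvent formula). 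All other contributions come from cubic terms in $\cV'_*$ or from $\de\cV'_*$ and are of degree at least three in $(\psi_*,\psi)$, and their bound $\TN\De\phi_{(*)n}^{(\ge 3)}\TN\le\GGa_5(|\de\mu|\|V\|_\fm+\|\de V\|_\fm)\wf^3$ follows by running the same argument with the degree-separation of \cite[Proposition \propSUBeqnsoln.b]{SUB}, exactly as in the proof of Proposition \ref{propBGEphivepssoln}(a).

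For $\De\phi_{(*)n,\nu}$, I would apply $\partial_\nu$ to the background field equations at the two parameter values as in the proof of Proposition \ref{propBGEphivepssoln}(b), using $\partial_\nu S_n^{(*)}=S_{n,\nu}^{(\pm)}\partial_\nu$ from \cite[(\eqnPOGpartialS)]{POA} and the discrete product rule \eqref{eqnBGEprodruleTrip}. Subtracting the two gives a linear ($\vec B=0$) fixed point equation for $(\partial_\nu\De\phi_{*n},\partial_\nu\De\phi_n)$ of the form $\vec\ga=\vec f+\vec L(\vec\al,\vec\ga)$ whose inhomogeneity involves $\de\mu\,\partial_\nu\phi_{(*)}$ and $\De\phi_{(*)}$-differences of $\cV'_*$, all of which are already bounded. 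An application of \cite[Proposition \propSUBeqnsoln.a]{SUB} supplies $\De\phi_{(*)n,\nu}$ with the claimed bound, and the same resolvent calculation as before, now with $S_n^{(*)}$ replaced by $S_{n,\nu}^{(\pm)}$ and the inhomogeneity $S_n(\mu)^{(*)}Q_n^*\fQ_n$ replaced by $B_{n,\nu,\mu}^{(\pm)}$, identifies the linear-in-$\psi_{(*)\nu}$ part as $\de\mu\,B_{(*)n,\mu,\nu}\psi_{(*)\nu}$.

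For $\De\phi_{(*)n,D}$, the cleanest route is purely algebraic. Rearranging \eqref{eqnBGEbgeqns} as in the proof of Proposition \ref{propBGEphivepssoln}(c) gives
\begin{equation*}
D_n^{(*)}\phi_{(*)} = Q_n^*\fQ_n\psi_{(*)} - (Q_n^*\fQ_nQ_n-\mu)\phi_{(*)} - \cV^{(\prime)}_{(*)}(\cdots),
\end{equation*}
and subtracting the same identity at $(\mu+\de\mu,\cV+\de\cV)$ expresses $\De\phi_{(*)n,D}$ algebraically in terms of $\De\phi_{(*)n}$, $\de\mu\,\phi_{(*)n}$, a cubic difference of $\cV'_{(*)}$, and $\de\cV'_{(*)}(\tilde\phi_*,\tilde\phi,\tilde\phi_*)$. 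The bounds on $\De\phi_{(*)n}$ from the first part and the operator bounds on $Q_n^*\fQ_nQ_n-\mu$ from \cite{POA} then give the third stated bound, and substituting the linear decomposition $\De\phi_{(*)n}=\de\mu\,B_{(*)n,\mu}\psi_{(*)}+\De\phi_{(*)n}^{(\ge 3)}$ yields the formula $B_{(*)n,\mu,D}=S_n(\mu)^{(*)}Q_n^*\fQ_n-(Q_n^*\fQ_nQ_n-\mu-\de\mu)B_{(*)n,\mu}$ for the linear part. The main obstacle is the bookkeeping: organising the three types of contribution (from $\de\mu$ alone, from $\de\cV$ alone, and from their cross terms) so that the separation between the total bound and the higher-degree bound matches the stated weights $\wf$, $\wf'$, $\wf^3$, $\wf^2\wf'$ — a task that is routine but notationally dense.
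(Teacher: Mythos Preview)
Your proposal is correct and follows essentially the same strategy as the paper. The paper organizes the argument slightly differently: it first proves an auxiliary lemma (Lemma \ref{lemBGEdephi}) that constructs $\De\varphi_{(*)n}$ as field maps in the variables $(\phi_*,\phi)$ with auxiliary weight $\wf_\phi$, and only then substitutes $\phi_{(*)}=\phi_{(*)n}(\psi_*,\psi,\mu,\cV)$ via \cite[Corollary \corSUBsubstitution]{SUB}; it also chooses $\tilde S_n=S_n(\mu+\de\mu)$ as the resolvent in the fixed point equation (so the operator $[\bbbone-(\mu+\de\mu)S_n^{(*)}]^{-1}$ appears directly rather than through the geometric series you describe), but these are purely organizational differences and your route works just as well.
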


As in the proof of Lemma \ref{lemBGEdephisoln}, we fix any $\wf_\phi$ 
and $\wf'$, $0\le\nu\le 3$, and denote by $\tn\ \cdot\ \tn_\phi$ 
the (auxiliary) norm with mass $\fm$ that assigns the weight factors
$\wf_\phi$ to the fields $\phi_{(*)}$ and
$\wf'$ to the fields $\phi_{\nu(*)}$.

\begin{lemma}\label{lemBGEdephi} 
There are constants $\rrho'_3>0$, $\GGa'_5$, 
such that, if 
$$
\max\big\{\,|\mu|\,,\,|\de\mu|\,,\,
             \|V\|_\fm\wf_\phi^2\,,\,\|\de V\|_\fm \wf_\phi^2\, \big\}
\le \rrho'_3
$$
then the following are true. 
\begin{enumerate}[label=(\alph*), leftmargin=*]
\item[$\circ$]
There are field maps $\De\varphi_{(*)n}
   =\De\varphi_{(*)n}\big(\phi_*,\phi,\mu,\de\mu,\cV,\de\cV\big)$ 
such that
\begin{align*}
\phi_{*n}(\psi_*,\psi,\mu\!+\!\de\mu,\cV\!+\!\de\cV)
     &=\phi_{*n}(\psi_*,\psi,\mu,\cV) \\ &\hskip0.75in
        +\De\varphi_{*n}\big(\phi_*,\phi,\mu,\de\mu,\!\cV,\de\cV\big)
          \Big|_{\phi_{(*)}=\phi_{(*)n}(\psi_*,\psi,\mu,\cV)}\\
\phi_n(\psi_*,\psi,\mu\!+\!\de\mu,\cV\!+\!\de\cV)
     &=\phi_n(\psi_*,\psi,\mu,\cV) \\ &\hskip0.75in
      +\De\varphi_n\big(\phi_*,\phi,\mu,\de\mu,\cV,\de\cV\big)
          \Big|_{\phi_{(*)}=\phi_{(*)n}(\psi_*,\psi,\mu,\cV)}
\end{align*}
and
\begin{equation*}
\tn \De\varphi_{(*)n}\tn_\phi
\le 4\|S_n\|_{\fm}\big(|\de\mu|+\|\de V\|_\fm\wf_\phi^2\big)\, \wf_\phi 
\end{equation*}
Furthermore $\De\varphi_{*n}$ and $\De\varphi_n$ are of degree at least one in $\phi_{(*)}$.
Indeed
\begin{equation}\label{eqnBGEfgethree}
\De\varphi_{(*)n}=\de\mu\,S_n^{(*)}\,
         \big[\bbbone -(\mu+\de\mu)S_n^{(*)}\big]^{-1} \phi_{(*)}
             +\De\varphi_{(*)n}^{(\ge 3)}
\end{equation}
where $\De\varphi_{(*)n}^{(\ge 3)}$ is the part of $\De\varphi_{(*)n}$ that 
is of degree at least three in $\phi_{(*)}$, and 
\begin{align*}
\TN \De\varphi_{(*)n}^{(\ge 3)}\TN_\phi
&\le 4\|S_n\|_\fm\big\{\|\de V\|_\fm 
   +16\|S_n\|_\fm\ \|V\|_\fm\,|\de\mu|\big\}\wf_\phi^3
\end{align*}

\item[$\circ$]
There are field maps $\De\varphi_{(*)n,\nu}
   =\De\varphi_{(*)n,\nu}\big(\phi_*,\phi,\phi_{*\nu},\phi_\nu,
                     \mu,\de\mu,\cV,\de\cV\big)$
such that
\begin{align*}
&\partial_\nu\phi_{(*)n}(\psi_*,\psi,\mu+\de\mu,\cV+\de\cV) \\&\hskip0.4in
     =\partial_\nu\phi_{(*)n}(\psi_*,\psi,\mu,\cV)  \\ &\hskip0.8in
     +\De\varphi_{(*)n,\nu}\big(\phi_*,\phi,\partial_\nu\phi_*,\partial_\nu\phi,
                    \mu,\de\mu,\cV,\de\cV\big)
          \Big|_{\phi_{(*)}=\phi_{(*)n}(\psi_*,\psi,\mu,\cV)}
\end{align*}
and
\begin{equation*}
\tn \De\varphi_{(*)n,\nu}\tn_\phi
\le \GGa'_5\ \big(|\de\mu|+\|\de V\|_\fm\wf_\phi^2\big)\wf'_\phi
\end{equation*}
Furthermore $\De\varphi_{*n,\nu}$ and $\De\varphi_{n,\nu}$ are both 
of degree precisely one in $\phi_{(*)\nu}$. Indeed
\begin{equation}\label{eqnBGEfgethreenu}
\De\varphi_{(*)n,\nu}=\de\mu\,S_{n,\nu}^{(\pm)}\,
         \big[\bbbone -(\mu+\de\mu)S_{n,\nu}^{(\pm)}\big]^{-1} \phi_{(*)\nu}
             +\De\varphi_{(*)n,\nu}^{(\ge 3)}
\end{equation}
where $\De\varphi_{*n,\nu}^{(\ge 3)}$ and
$\De\varphi_{n,\nu}^{(\ge 3)}$ are both of degree precisely one in $\phi_{(*)\nu}$ 
and of degree at least two in $\phi_{(*)}$, and 
\begin{align*}
\TN \De\varphi_{(*)n,\nu}^{(\ge 3)}\TN_\phi
&\le \GGa'_5\ \big(|\de\mu| \|V\|_\fm+\|\de V\|_\fm\big)\,
                                 \wf_\phi^2\wf'_\phi
\end{align*}
\end{enumerate}
\end{lemma}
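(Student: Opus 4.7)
The plan is to convert each statement in the lemma into a fixed point equation for $\De\varphi_{(*)n}$, viewed as a field map in the auxiliary variables $\phi_{(*)}$, and then apply Proposition~\ref{SUB:propBGEeqnsoln}. Writing the background field equations \eqref{eqnBGEbgeqns} in the form $\phi_{(*)}=S_n^{(*)}Q_n^*\fQ_n\psi_{(*)}+\mu S_n^{(*)}\phi_{(*)}-S_n^{(*)}\cV'_{(*)}(\phi_*,\phi,\phi_*)$, taking the same equation at parameters $(\mu+\de\mu,\cV+\de\cV)$ with solution $\phi_{(*)}+\De\phi_{(*)}$, and subtracting gives the key identity
\begin{equation*}
\De\phi_{(*)}
 =\de\mu\,S_n^{(*)}\phi_{(*)}
 +(\mu+\de\mu)\,S_n^{(*)}\De\phi_{(*)}
 -S_n^{(*)}\bigl[\cV'_{(*)}(\phi+\De\phi,\ldots)-\cV'_{(*)}(\phi,\ldots)\bigr]
 -S_n^{(*)}\de\cV'_{(*)}(\phi+\De\phi,\ldots).
\end{equation*}
Expanding the $\cV'_{(*)}$ difference multilinearly (using the symmetries \eqref{eqnBGEvnperm}) separates it into pieces of definite degree in $\vec\ga=(\De\phi_*,\De\phi)$ and $\vec\al=(\phi_*,\phi)$, so that the identity takes the form $\vec\ga=\vec f(\vec\al)+\vec L(\vec\al,\vec\ga)+\vec B(\vec\al,\vec\ga)$ of \cite[(\eqnSUBfixedpteqn.b)]{SUB}, with $\vec f$ built from $\de\mu\,S_n^{(*)}\phi_{(*)}$ and $-S_n^{(*)}\de\cV'_{(*)}(\phi,\phi,\phi)$, $\vec L$ containing $(\mu+\de\mu)S_n^{(*)}\vec\ga$ together with the linear-in-$\vec\ga$ terms of the $\cV'_{(*)}$ expansion, and $\vec B$ the quadratic and cubic-in-$\vec\ga$ remainder.

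For part (a), apply Proposition~\ref{SUB:propBGEeqnsoln} with $\ka_1=\ka_2=\wf_\phi$ and $\la_1=\la_2$ a fixed constant multiple of $\|S_n\|_\fm\bigl(|\de\mu|+\|\de V\|_\fm\wf_\phi^2\bigr)\wf_\phi$; routine estimates, in the style of Proposition~\ref{propBGEphivepssoln}.a and Lemma~\ref{lemBGEdephisoln}, show that under the smallness hypothesis the contraction constant $\fc$ can be taken to be $\sfrac{1}{8}$, giving the stated bound on $\tn\De\varphi_{(*)n}\tn_\phi$. To extract the degree-one-in-$\phi_{(*)}$ piece, observe that the $\cV'_{(*)}$ difference has no contribution of degree exactly one in $\phi_{(*)}$ and zero in $\De\phi_{(*)}$, and $\de\cV'_{(*)}$ is cubic in $(\phi_{(*)},\De\phi_{(*)})$; hence the degree-one part of the solution satisfies $[\bbbone-(\mu+\de\mu)S_n^{(*)}]\De\varphi_{(*)n}^{(1)}=\de\mu\,S_n^{(*)}\phi_{(*)}$, whose solution is exactly $\de\mu\,S_n^{(*)}[\bbbone-(\mu+\de\mu)S_n^{(*)}]^{-1}\phi_{(*)}$, the factor $\bbbone-(\mu+\de\mu)S_n^{(*)}$ being invertible by Neumann series since $|\mu+\de\mu|\,\|S_n\|_\fm$ is small. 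The bound on the remainder $\De\varphi_{(*)n}^{(\ge 3)}$ follows from \cite[Proposition~\propSUBeqnsoln.b]{SUB}, tracking separately the source contribution from $S_n^{(*)}\de\cV'_{(*)}(\phi,\phi,\phi)$ (of size $\|\de V\|_\fm\wf_\phi^3$) and the iterated contribution from $\vec L$ proportional to $|\de\mu|\|V\|_\fm\wf_\phi^3$.

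For part (b), differentiate the background field equations \eqref{eqnBGEbgeqns} as in the proof of Proposition~\ref{propBGEphivepssoln}.b, using the discrete product rule \eqref{eqnBGEprodruleTrip} and the commutation identities \eqref{eqnBGEcommutederiv}, producing equations of the form \eqref{eqnBGEbgderiv} in which $S_n^{(*)}$ is replaced by $S_{n,\nu}^{(\pm)}$. Subtracting these equations at $(\mu+\de\mu,\cV+\de\cV)$ from those at $(\mu,\cV)$ yields a linear fixed-point equation for $\partial_\nu\De\phi_{(*)}$ with $\vec B=0$ and with source involving $\partial_\nu\phi_{(*)}$ together with contributions from $\de\mu$ and $\de\cV$. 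A second application of Proposition~\ref{SUB:propBGEeqnsoln}, using the bound on $\De\varphi_{(*)n}$ from part (a) to control coefficient fields, gives the stated $\tn\De\varphi_{(*)n,\nu}\tn_\phi$ bound; the degree-one-in-$\phi_{(*)\nu}$ piece is extracted as in part (a) and produces the prefactor $\de\mu\,S_{n,\nu}^{(\pm)}[\bbbone-(\mu+\de\mu)S_{n,\nu}^{(\pm)}]^{-1}$. The main technical obstacle is the bookkeeping in the multilinear expansion of $\cV'_{(*)}(\phi+\De\phi,\ldots)-\cV'_{(*)}(\phi,\ldots)$ and the differentiated analogue, to isolate the linear, quadratic, and cubic in $\vec\ga$ pieces cleanly and verify that each summand in $\vec L$ and $\vec B$ is bounded as required, exactly paralleling the calculations already carried out in Proposition~\ref{propBGEphivepssoln} and Lemma~\ref{lemBGEdephisoln}.
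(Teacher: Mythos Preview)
Your proposal is correct and follows essentially the same approach as the paper. The only organizational difference is that the paper absorbs the $(\mu+\de\mu)$ term into a modified propagator $\tilde S_n^{(*)}=\big[S_n^{(*)-1}-\mu-\de\mu\big]^{-1}$ and writes $\De\phi_{(*)}=\tilde S_n^{(*)}\ga_{(*)}$ (so that the linear piece $\de\mu\,\tilde S_n^{(*)}\phi_{(*)}$ already equals the desired $\de\mu\,S_n^{(*)}[\bbbone-(\mu+\de\mu)S_n^{(*)}]^{-1}\phi_{(*)}$), whereas you keep $(\mu+\de\mu)S_n^{(*)}\vec\ga$ inside $\vec L$ and recover the same linear part by solving the degree-one projection; in part (b) the paper differentiates the difference equation \eqref{eqnBGEDephi} while you subtract the differentiated equations, which is the same computation.
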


\begin{proof} (a)
Write  
\begin{align*}
\phi_{(*)n}(\psi_*,\psi,\mu+\de\mu,\cV+\de\cV)
&=\phi_{(*)n}(\psi_*,\psi,\mu,\cV)
          +\De\phi_{(*)}(\psi_*,\psi,\mu,\de\mu,\cV,\de\cV)\\
&=\phi_{(*)}+\De\phi_{(*)}
\end{align*}
Then, by \eqref{eqnBGEbgeqns}, using the notation of 
\cite[Definition \defBGAgradV]{PAR1},
\begin{alignat*}{3}
&S^{* -1}_n\big(\phi_*+\De\phi_*\big) 
  +(\cV'_*+\de\cV'_*)(\phi_*+\De\phi\,,\,
            \phi+\De\phi\,,\,\phi_*+\De\phi_*)\\
&\hskip2.8in -(\mu+\de\mu)[\phi_*+\De\phi_*]
       &&=Q_n^* \fQ_n\psi_*\\
&S^{-1}_n\big(\phi+\De\phi\big)
         +(\cV'+\de\cV')(\phi+\De\phi\,,\,
            \phi_*+\De\phi_*\,,\,\phi\De\phi)\\
&\hskip2.8in -(\mu+\de\mu)[\phi+\De\phi]
              &&=Q_n^* \fQ_n\psi
\end{alignat*}
Subtracting these equations but with 
$\de\mu=\de\cV=\de\cV'_{(*)}=\De\phi_{(*)}=0$, we see that $\De\phi_{(*)}=\De\phi_{(*)}(\psi_*,\psi,\mu,\de\mu,\cV,\de\cV)$
is the solution to
\begin{equation}\label{eqnBGEDephi}
\begin{alignedat}{5}
&\tilde S^{* -1}_n\De\phi_*
       &&+(\cV'_*+\de\cV'_*)(\phi_*\!+\!\De\phi_*\,,\,
            \phi+\De\phi\,,\,\phi_*\!+\!\De\phi_*)
       &&-\cV'_*(\phi_*\,,\,\phi\,,\,\phi_*)
  &&=\de\mu\,\phi_*\\
%%%%%%%%%%%%%%%%%%%%%%%%%%%%%
&\tilde S_n^{-1}\De\phi
       &&+(\cV'+\de\cV')(\phi\!+\!\De\phi\,,\,
            \phi_*+\De\phi_*\,,\,\phi\!+\!\De\phi)
       &&-\cV'(\phi\,,\,\phi_*\,,\,\phi)
   &&=\de\mu\,\phi
\end{alignedat}
\end{equation}
when
\begin{equation*}
\tilde S_n^{-1}= S_n^{-1} -\mu-\de\mu\qquad
\phi_*=\phi_{*n}(\mu,\cV)\qquad
\phi=\phi_n(\mu,\cV)
\end{equation*}
(Recall that $\tilde S_n^*$ is the transpose, rather than the adjoint, of
$\tilde S_n$.)
If $\rrho'_3$ is small enough $|\mu+\de\mu|\,\|S_n\|_{\fm}\le\half$, 
and $\|\tilde S_n\|_{\fm}\le 2\|S_n\|_{\fm}$. Rewrite \eqref{eqnBGEDephi} as
\begin{align*}
&\tilde S^{* -1}_n\De\phi_*
       &+(\cV'_*+\de\cV'_*)(\phi_*\!+\!\De\phi_*\,,\,
            \phi+\De\phi\,,\,\phi_*\!+\!\De\phi_*)
       &-(\cV'_*+\de\cV'_*)(\phi_*\,,\,\phi\,,\,\phi_*) \cr
& &  &\hskip0.25in=\de\mu\,\phi_*-\de\cV'_*(\phi_*\,,\,\phi\,,\,\phi_*)\\
%%%%%%%%%%%%%%%%%%%%%%%%%%%%%
&\tilde S_n^{-1}\De\phi
       &+(\cV'+\de\cV')(\phi\!+\!\De\phi\,,\,
            \phi_*+\De\phi_*\,,\,\phi\!+\!\De\phi)
       &-(\cV'+\de\cV')(\phi\,,\,\phi_*\,,\,\phi) \\
& &   &\hskip0.25in=\de\mu\,\phi-\de\cV'(\phi\,,\,\phi_*\,,\,\phi)\\
\end{align*}
This is of the form
\begin{equation*}
\vec\ga
=\vec f(\vec\al)+\vec L(\vec\al,\vec\ga)+\vec B\big(\vec\al\,;\, \vec\ga\big)
\end{equation*}
as in \cite[(\eqnSUBfixedpteqn.b)]{SUB}, with $X=X_n$ and
\begin{align*}
\al_*&=\phi_*   
     &  \al&= \phi
     & \de\al_*&=  \de\mu\,\phi_*    
     &  \de\al&= \de\mu\,\phi  
     & \vec\al& = \big(\al_*,\al,\de\al_*,\de\al\big)\\
& &
& &
        \De\phi_*&=\tilde S_n^* \ga_* &
        \De\phi&=\tilde S_n\ga   
     & \vec\ga&=\big(\ga_*,\ga\big) 
\end{align*}
and 
\begin{align*}
\vec f(\vec\al)(u)
&=\left[\begin{matrix}
                \de\al_*(u) - \de\cV'_*\big(u;\al_*,\al,\al_*\big) \\
                \de\al(u) - \de\cV'\big(u;\al,\al_*,\al\big)
        \end{matrix}\right]
\cr
\noalign{\vskip0.1in}
\vec L(\vec\al;\vec\ga)(u)
&=\left[\begin{matrix}
                  - (\cV'_*+\de\cV'_*)\big(u;\al_*,\tilde S_n \ga,\al_*\big)
               - 2(\cV'_*+\de\cV'_*)\big(u;\al_*,\al,\tilde S_n^*\ga_*\big) \\
           \noalign{\vskip0.05in}
               - (\cV'+\de\cV')\big(u;\al,\tilde S_n^*\ga_*,\al\big)
               - 2(\cV'+\de\cV')\big(u;\al,\al_*,\tilde S_n \ga\big)
          \end{matrix}\right]
\\
\noalign{\vskip0.1in}
\vec B(\vec\al;\vec\ga)(u)
&=\left[\begin{matrix}
   - (\cV'_*+\de\cV'_*)\big(u;\tilde S_n^*\ga_*,\al,\tilde S_n^*\ga_*\big)
   - 2(\cV'_*+\de\cV'_*)\big(u;\tilde S_n^*\ga_*,\tilde S_n \ga,\al_*\big)
\hfill\null\\\hskip2.5in
   - (\cV'_*+\de\cV'_*)\big(u;\tilde S_n^*\ga_*,\tilde S_n \ga,
                                           \tilde S_n^*\ga_*\big)\\
          \noalign{\vskip0.05in}
    - (\cV'+\de\cV')\big(u;\tilde S_n \ga,\al_*,\tilde S_n \ga\big)
   - 2(\cV'+\de\cV')\big(u;\tilde S_n \ga,\tilde S_n^*\ga_*,\al\big)
\hfill\null\\ \hskip2.5in
   - (\cV'+\de\cV')\big(u;\tilde S_n \ga,\tilde S_n^*\ga_*,\tilde S_n\ga\big)
      \end{matrix}\right]
\end{align*}
Now apply  \cite[Proposition \propSUBeqnsoln.a and Remark \remSUBcB.a]{SUB} 
with $d_{\rm max}=3$, $\fc=\half$ and
\begin{equation*}
\ka_1= \ka_2 = \wf_\phi \qquad
\ka_3=\ka_4=|\de\mu|\,\wf_\phi \qquad
\la_1=\la_2=4\ka_f
\end{equation*}
with
\begin{equation*}
\ka_f = |\de\mu|\,\wf_\phi +\|\de V\|_\fm\wf_\phi^3
\end{equation*}
Since
\begin{align*}
\tn f_j\tn_w &\le  \ka_f=\sfrac{1}{4}\la_j\\
%%%%
\tn L_j\tn_{w_{\ka,\la}}&\le \|\tilde S_n\|_{\fm}\|V+\de V\|_\fm
     \{\ka_j^2\la_{3-j}+2\ka_1\ka_2\la_j\} \cr
  &\le  6\|S_n\|_{\fm}\|V+\de V\|_\fm \wf_\phi^2\la_j\\
%%%%
\tn B_j\tn_{w_{\ka,\la}}
   &\le\|\tilde S_n\|_{\fm}^2\ \|V+\de V\|_\fm\   
           \big\{\ka_{3-j}\la_j^2+2\ka_j\la_1\la_2\big\}
            +\|\tilde S_n\|_{\fm}^3\|V+\de V\|_\fm\la_j^2\la_{3-j} \\
   &\le 4\|S_n\|_{\fm}^2\ \|V+\de V\|_\fm\  
            \big\{3\ka_j + 2\|S_n\|_{\fm}\la_j\big\}\la_j^2 \cr
   &\le 4\|S_n\|_{\fm}^2\ \|V+\de V\|_\fm\  
      \big\{3 + 8\|S_n\|_{\fm}\big(|\de\mu|+\|\de V\|_\fm\wf_\phi^2\big)\big\}
          \\&\hskip3.5in
              \ 4\wf_\phi^2 \big(|\de\mu|+\|\de V\|_\fm\wf_\phi^2\big) \la_j \\
  &\le  50\|S_n\|_{\fm}^2\|V+\de V\|_\fm
     \ \big(|\de\mu|+\|\de V\|_\fm\wf_\phi^2\big)\ \wf_\phi^2\la_j
\end{align*}
\cite[Proposition \propSUBeqnsoln.a]{SUB} gives
\begin{align*}
\De\phi_{*n}
  &= \tilde S_n^*  \Ga_1\big(\phi_*,\phi,\de\mu\,\phi_*,\de\mu\,\phi\big)\\
\De\phi_n
   &=\tilde S_n\Ga_2 \big(\phi_*,\phi,\de\mu\,\phi_*,\de\mu\,\phi\big)
\end{align*}
with 
\begin{align*}
\tn \Ga_1\tn_w,
\tn \Ga_2\tn_w
\le 2\ka_f
= 2\ \big(|\de\mu|+\|\de V\|_\fm\wf_\phi^2\big)\ \wf_\phi
\end{align*}

Now we prove \eqref{eqnBGEfgethree}, using the same system of equations and 
the same $\vec\al$, $\vec\ga$, $\ka$'s and $\la$'s. But we apply
\cite[Proposition \propSUBeqnsoln.b]{SUB} with
\begin{equation*}
\fc=\max_{j=1,2}\sfrac{1}{\la_j}\tn L_j\tn_{w_{\ka,\la}}
     +3\max_{j=1,2}\sfrac{1}{\la_j}\tn B_j\tn_{w_{\ka,\la}}
\le 8\|S_n\|_{\fm}\|V+\de V\|_\fm \wf_\phi^2
\end{equation*}
which gives, for $j=1,2$,
\begin{align*}
\tn \Ga^{(1)}_j- f_j\tn_w
       &\le\sfrac{\fc}{1-\fc}\max_{j'=1,2}\tn f_{j'}\tn_w
       \le 16\|S_n\|_{\fm}\|V+\de V\|_\fm 
             \big(|\de\mu|+\|\de V\|_\fm\wf_\phi^2\big)\, \wf_\phi^3\\
\tn \Ga_j-\Ga^{(1)}_j\tn_w
   &\le\max_{j'=1,2}\tn B_j\tn_{w_{\ka,\la}}
   \le 200\|S_n\|_{\fm}^2\|V+\de V\|_\fm
              \big(|\de\mu|+\|\de V\|_\fm\wf_\phi^2\big)^2 \wf_\phi^3
\end{align*}
where $\vec\Ga^{(1)}$ is the solution of
$\vec\ga=\vec f(\vec\al)+\vec L(\vec\al,\vec\ga)$.
Since 
\begin{equation*}
\vec \Ga^{(1)}
=\vec f(\vec\al)+\vec L(\vec\al,\vec\Ga^{(1)})
\qquad\text{and}\qquad
\vec \Ga
=\vec f(\vec\al)+\vec L(\vec\al,\vec\Ga)+\vec B\big(\vec\al\,;\, \vec\Ga\big)
\end{equation*}
and $\vec\Ga$ is degree at least one in $\phi_{(*)}$ and $\vec L$ and 
$\vec B$ are of degree three in $(\vec\al,\vec\ga)$, both  
$\vec\Ga^{(1)}-\vec f$ and $\vec\Ga-\vec f$ are of degree at least 3 in
$\phi_{(*)}$. So is $\vec f-\vec F$ where
\begin{equation*}
\vec F = \left[\begin{matrix}
                \de\al_* \\
                \de\al
                \end{matrix}\right]
       = \left[\begin{matrix}
                \de\mu\,\phi_*  \\
                \de\mu\,\phi
                \end{matrix}\right]
\end{equation*}
Consequently
\begin{align*}
\De\varphi_{(*)n}&=\de\mu\,\tilde S_n^{(*)}\, \phi_{(*)}
             +\De\varphi_{(*)}^{(\ge 3)}
\end{align*}
with
\begin{align*}
\De\varphi_{*n}^{(\ge 3)}
&=\tilde S_n^*\big\{[f_1-F_1]+[\Ga_1^{(1)}-f_1]+[\Ga_1-\Ga^{(1)}_1]\big\}\\
\De\varphi_n^{(\ge 3)}
&=\tilde S_n\big\{[f_2-F_2]+[\Ga_2^{(1)}-f_2]+[\Ga_2-\Ga^{(1)}_2]\big\}
\end{align*}
As $\tilde S_n^{(*)}=\big[{S_n^{(*)}}^{-1}-\mu-\de\mu\big]^{-1}
=S_n^{(*)}\,\big[\bbbone -(\mu+\de\mu)S_n^{(*)}\big]^{-1}$ and
\begin{align*}
&\tn f_j-F_j\tn_w
+\tn\Ga_j^{(1)}-f_j\tn_w
+\tn\Ga_j-\Ga^{(1)}_j\tn_w \\
&\hskip1in \le \|\de V\|_\fm \, \wf_\phi^3
    +32\|S_n\|_{\fm}\|V+\de V\|_\fm 
             \big(|\de\mu|+\|\de V\|_\fm\wf_\phi^2\big)\, \wf_\phi^3 \\
&\hskip1in \le 2\big\{\|\de V\|_\fm 
    +16\|S_n\|_{\fm}\|V\|_\fm |\de\mu|\big\} \wf_\phi^3 
\end{align*}
the desired bound on $\TN \De\varphi_{(*)n}^{(\ge 3)}\TN_\phi$
follows by \cite[Proposition \propSUBsubstitution.a]{SUB}.

\Item (b) 
By \eqref{eqnBGEcommutederiv},
applying $\partial_\nu$ to \eqref{eqnBGEDephi} gives
\begin{equation}\label{eqnMSpartialDePhi}
\begin{alignedat}{5}
&\big(\tilde S_{n,\nu}^{(+)}\big)^{-1}\partial_\nu\De\phi_*
  &+L_{11}(\partial_\nu\De\phi_*)
  &+L_{12}(\partial_\nu\De\phi)
&&=f_*\\
&\big(\tilde S_{n,\nu}^{(-)}\big)^{-1}\partial_\nu\De\phi
  &+L_{21}(\partial_\nu\De\phi_*)
  &+L_{22}(\partial_\nu\De\phi)
&&=f
\end{alignedat}
\end{equation}
where $\big(\tilde S_{n,\nu}^{(\si)}\big)^{-1}
=\big(S_{n,\nu}^{(\si)}\big)^{-1}-\mu-\de\mu$ and
\begin{align*}
L_{11}(\partial_\nu\De\phi_*)&= 
            2(\cV'_*+\de\cV'_*)\big(\phi_*,\phi,\partial_\nu\De\phi_*\big)
            +(\cV'_*+\de\cV'_*)\big(\partial_\nu\De\phi_*,T_\nu^{-1}\phi,
                     \De\phi_*\!+\!T_\nu^{-1}\De\phi_*\big)
     \\&\hskip0.4in  
           +2(\cV'_*+\de\cV'_*)\big(\partial_\nu\De\phi_*,
                                      T_\nu^{-1}\De\phi,T_\nu^{-1}\phi_*\big)
     \\&\hskip0.4in  
       +(\cV'_*+\de\cV'_*)\big(\partial_\nu\De\phi_*,T_\nu^{-1}\De\phi,
                            \De\phi_*+T_\nu^{-1}\De\phi_*\big)
\displaybreak[0]\\
L_{12}(\partial_\nu\De\phi)&= 
     (\cV'_*+\de\cV'_*)\big(\phi_*,\partial_\nu\De\phi,\phi_*\big)
      +2(\cV'_*+\de\cV'_*)\big(\De\phi_*,
                                \partial_\nu\De\phi,T_\nu^{-1}\phi_*\big)
     \cr&\hskip0.4in  
      +(\cV'_*+\de\cV'_*)\big(\De\phi_*,\partial_\nu\De\phi,\De\phi_*\big)
\displaybreak[0]\\
L_{21}(\partial_\nu\De\phi_*)&=
          (\cV'+\de\cV')\big(\phi,\partial_\nu\De\phi_*,\phi\big)
          +2(\cV'+\de\cV')\big(\De\phi,\partial_\nu\De\phi_*,T_\nu^{-1}\phi\big)
     \\&\hskip0.4in  
          +(\cV'+\de\cV')\big(\De\phi,\partial_\nu\De\phi_*,\De\phi\big)
\displaybreak[0]\\
L_{22}(\partial_\nu\De\phi)&=
        2(\cV'+\de\cV')\big(\phi,\phi_*,\partial_\nu\De\phi\big)
  +(\cV'+\de\cV')\big(\partial_\nu\De\phi,T_\nu^{-1}\phi_*,
                             \De\phi+T_\nu^{-1}\De\phi\big)
     \\&\hskip0.4in 
    +2(\cV'+\de\cV')\big(\partial_\nu\De\phi,T_\nu^{-1}\De\phi_*,
                                        T_\nu^{-1}\phi\big) 
     \\&\hskip0.4in  
+(\cV'+\de\cV')\big(\partial_\nu\De\phi,T_\nu^{-1}\De\phi_*,
                             \De\phi+T_\nu^{-1}\De\phi\big)
\displaybreak[0]\\
f_*&= \de\mu\,\partial_\nu\phi_*
   -(\cV'_*+\de\cV'_*)\big(\partial_\nu\phi_*,T_\nu^{-1}\De\phi,
                                            \phi_*+T_\nu^{-1}\phi_*\big)
     \\&\hskip0.4in 
   -2(\cV'_*+\de\cV'_*)\big(\partial_\nu\phi_*,T_\nu^{-1}\phi,
                                            T_\nu^{-1}\De\phi_*\big)
     \\&\hskip0.4in  
   -2(\cV'_*+\de\cV'_*)\big(\phi_*,\partial_\nu\phi,T_\nu^{-1}\De\phi_*\big)
     \\&\hskip0.4in 
   -(\cV'_*+\de\cV'_*)\big(\De\phi_*,\partial_\nu\phi,\De\phi_*\big)
   -2(\cV'_*+\de\cV'_*)\big(\De\phi_*,\De\phi,\partial_\nu\phi_*\big)
     \\&\hskip0.4in 
   -\de\cV'_*\big(\phi_*,\partial_\nu\phi,\phi_*\big)
   -\de\cV'_*\big(\partial_\nu\phi_*,T_\nu^{-1}\phi,\phi_*+T_\nu^{-1}\phi_*\big)
\displaybreak[0]\\
f&= \de\mu\,\partial_\nu\phi 
     -(\cV'+\de\cV')\big(\partial_\nu\phi,T_\nu^{-1}\De\phi_*,
                                            \phi+T_\nu^{-1}\phi\big)
     \\&\hskip0.4in 
     -2(\cV'+\de\cV')\big(\partial_\nu\phi,T_\nu^{-1}\phi_*,
                                             T_\nu^{-1}\De\phi\big)
     -2(\cV'+\de\cV')\big(\phi,\partial_\nu\phi_*,T_\nu^{-1}\De\phi\big)   
     \\&\hskip0.4in 
     -(\cV'+\de\cV')\big(\De\phi,\partial_\nu\phi_*,\De\phi\big)
     -2(\cV'+\de\cV')\big(\De\phi,\De\phi_*,\partial_\nu\phi\big)
     \\&\hskip0.4in 
   -\de\cV'\big(\partial_\nu\phi,T_\nu^{-1}\phi_*,\phi+T_\nu^{-1}\phi\big)
\end{align*}
Here we have used the ``discrete product rule'' \eqref{eqnBGEprodruleTrip}.
Observe that, if $\rrho'_3$ is small enough, then $|\mu+\de\mu|\,\|S_{n,\nu}^{(\si)}\|_{\fm}\le\half$, 
and $\|\tilde S_{n,\nu}^{(\si)}\|_{\fm}\le 2\|S_{n,\nu}^{(\si)}\|_{\fm}$.

The system of equations \eqref{eqnMSpartialDePhi} is of the form
\begin{equation*}
\vec\ga
=\vec f(\vec\al)+\vec L(\vec\al,\vec\ga)+\vec B\big(\vec\al\,;\, \vec\ga\big)
\end{equation*}
as in \cite[(\eqnSUBfixedpteqn.b)]{SUB}, with $\vec\al=(\al_1,\cdots,\al_6)$,
$\vec\ga=(\ga_*,\ga)$ and
\begin{align*}
        \al_1&=\phi_*   
     &  \al_2&= \phi  
     &  \al_3&= \partial_\nu\phi_* \quad
     &  \al_4&= \partial_\nu\phi   \quad
     &  \al_5&=\De\phi_*           \ \ 
     &  \al_6&= \De\phi \\
     \partial_\nu\De\phi_*&=\tilde S_{n,\nu}^{(+)} \ga_* &
     \partial_\nu\De\phi&=\tilde S_{n,\nu}^{(-)}\ga 
\end{align*}
and $\vec B(\vec\al;\vec\ga)=0$ and
\begin{align*}
\vec L(\vec\al;\vec\ga)
&=-\left[\begin{matrix}
          L_{11}\, (\tilde S_{n,\nu}^{(+)} \ga_*)
          \ +\ L_{12}\,(\tilde S_{n,\nu}^{(-)} \ga)  \\
           L_{21}\,(\tilde S_{n,\nu}^{(+)} \ga_*)
         \ +\ L_{22}\, (\tilde S_{n,\nu}^{(-)} \ga)
         \end{matrix}\right]
\end{align*} 
Now apply  \cite[Proposition \propSUBeqnsoln.a]{SUB} with $\fc=\half$ and
\begin{align*}
\ka_1&= \ka_2 = \wf_\phi  &
\ka_3&= \ka_4 = \wf'_\phi &
\ka_5&= \ka_6 =  4\|S_n\|_{\fm}\big(|\de\mu|+\|\de V\|_\fm\wf_\phi^2\big)\, 
                             \wf_\phi\\
\la_1&=\la_2=4\ka_f 
\end{align*}
with
\begin{align*}
\ka_f&=\ka_3\big\{|\de\mu|
      +\|V+\de V\|_\fm e^{2\veps_n\fm}
                 \big(6\ka_1+3\ka_5\big)\ka_5
       +\|\de V\|_\fm e^{2\veps_n\fm}
                 3\ka_1^2\big\} \\
     &\le 8 e^{2\veps_n\fm} \big\{|\de\mu|
       +\|\de V\|_\fm \wf_\phi^2\big\} \wf'_\phi
\end{align*}
and $\veps_n=\sfrac{1}{L^n}$.
Since $\|f_j\|_w \le \ka_f=\sfrac{1}{4}\la_j$,
$\|B_j\|_{w_{\ka,\la}}=0$ and 
\begin{align*}
\|L_j\|_{w_{\ka,\la}}
&\le \max_{\si=+,-}\|\tilde S_{n,\nu}^{(\si)}\|_{\fm}
     \|V+\de V\|_\fm e^{2\veps_n\fm}\Big[\big\{
     2\ka_1^2\!+\!4\ka_1\ka_5\!+\!2\ka_5^2
     \big\}\la_j
  \\ &\hskip3in
     +\big\{\ka_1^2\!+\!2\ka_1\ka_5\!+\!\ka_5^2\big\}\la_{3-j}\Big] \\
%%%%%%%%%%
&\le 3\max_{\si=+,-}\|\tilde S_{n,\nu}^{(\si)}\|_{\fm}
     \|V+\de V\|_\fm e^{2\veps_n\fm}\big\{\ka_1+\ka_5\big\}^2\la_j\\
%%%%%%%%%%
  &\le 3\max_{\si=+,-}\|\tilde S_{n,\nu}^{(\si)}\|_{\fm}
     \|V+\de V\|_\fm e^{2\veps_n\fm}
       \big\{1+4\|S_n\|_{\fm}\big(|\de\mu|+\|\de V\|_\fm\wf_\phi^2\big)\big\}^2
                 \wf_\phi^2 \la_j
\end{align*}
\cite[Propositions \propSUBeqnsoln.a]{SUB} gives
\begin{align*}
\partial_\nu\De\phi_*
  &= \tilde S_{n,\nu}^{(+)}  \Ga_1\big(\al_1,\cdots, \al_6\big)\\
\partial_\nu\De\phi
  &= \tilde S_{n,\nu}^{(-)}\Ga_2 \big(\al_1,\cdots, \al_6\big)
\end{align*}
with 
\begin{align*}
&\tn \Ga_1\tn_{w_{\ka,\la}},
\tn \Ga_2\tn_{w_{\ka,\la}}
\le 16 e^{2\veps_n\fm} \big\{|\de\mu|
       +\|\de V\|_\fm \wf_\phi^2\big\} \wf'_\phi
\end{align*}
The conclusions, except for \eqref{eqnBGEfgethreenu} now follow by 
\cite[Corollary \corSUBsubstitution]{SUB}. 

To prove \eqref{eqnBGEfgethreenu}, write 
\begin{equation*}
\left[\begin{matrix}
            \partial_\nu\De\varphi_{*n}\big(\phi_*,\phi,\mu,\de\mu\big)\\  
            \partial_\nu\De\varphi_n\big(\phi_*,\phi,\mu,\de\mu\big)
       \end{matrix}\right]
=\left[\begin{matrix}
                \tilde S^{(+)}_{n,\nu}&0\\
                0&\tilde S^{(-)}_{n,\nu}
         \end{matrix}\right]
\Big\{\vec f(\vec\al)+\vec L\Big(\vec\al\,,\,\big[\Ga_1(\vec\al),\Ga_2(\vec\al)\big]\Big)\Big\}
\end{equation*}
with
\begin{align*}
\al_1&=\phi_*  \qquad\qquad 
\al_2= \phi  & \qquad
\al_3&= \partial_\nu\phi_* \qquad\quad
\al_4= \partial_\nu\phi   \\
\al_5&=\De\varphi_{*n}\big(\phi_*,\phi,\mu,\de\mu,\cV,\de\cV\big)    &
\al_6&=\De\varphi_n\big(\phi_*,\phi,\mu,\de\mu,\cV,\de\cV\big) 
\end{align*}
Observe that the right hand side is of the form
\begin{align*}
\de\mu\left[\begin{matrix}
                    \tilde S^{(+)}_{n,\nu}\partial_\nu\phi_*\\
                    \tilde S^{(-)}_{n,\nu}\partial_\nu\phi\cr
             \end{matrix}\right]
+\left[\begin{matrix}
\De\varphi_{*n,\nu}^{(\ge 3)}
      \big(\phi_*,\phi,\partial_\nu\phi_*,\partial_\nu\phi,
                    \mu,\de\mu,\cV,\de\cV\big)\\
\De\varphi_{n,\nu}^{(\ge 3)}
      \big(\phi_*,\phi,\partial_\nu\phi_*,\partial_\nu\phi,
                    \mu,\de\mu,\cV,\de\cV\big)
\end{matrix}\right]
\end{align*}
with $\De\varphi_{(*)n,\nu}^{(\ge 3)}
       \big(\phi_*,\phi,\partial_\nu\phi_*,\partial_\nu\phi,
                       \mu,\de\mu,\cV\big)$
a finite sum of terms each of which is either of the form
$\pm\tilde S^{(\pm)}_{n,\nu}(\cV'_{(*)}+\de\cV'_{(*)})(\ze_1,\ze_2,\ze_3)$ 
with
\begin{itemize}[leftmargin=*, topsep=2pt, itemsep=2pt, parsep=0pt]
\item[$\circ$]
 exactly one of $\ze_1$, $\ze_2$, $\ze_3$ being one of 
$\partial_\nu\phi_{(*)}$, $\partial_\nu\De\phi_{(*)}$ (which are
of degree precisely one in $\partial_\nu\phi_{(*)}$) and
\item[$\circ$]
  each of the remaining two $\ze_j$'s being one of 
$\phi_{(*)}$, $\De\phi_{(*)}$, possibly translated
by $T_\nu^{-1}$, (which are of degree at least  one in $\phi_{(*)}$)  and
\item[$\circ$]
at least one of $\ze_1$, $\ze_2$, $\ze_3$ being one of 
$\De\phi_{(*)}$, $\partial_\nu\De\phi_{(*)}$, possibly translated
by $T_\nu^{-1}$.
\end{itemize}
or of the form
$\pm\tilde S^{(\pm)}_{n,\nu}\de\cV'_{(*)}(\ze_1,\ze_2,\ze_3)$ 
with
\begin{itemize}[leftmargin=*, topsep=2pt, itemsep=2pt, parsep=0pt]
\item[$\circ$]
exactly one of $\ze_1$, $\ze_2$, $\ze_3$ being a
$\partial_\nu\phi_{(*)}$   and
\item[$\circ$]
the remaining two $\ze_j$'s being a $\phi_{(*)}$, 
possibly translated by $T_\nu^{-1}$.
\end{itemize}
The degree properties and bounds on $\De\varphi_{(*)n,\nu}^{(\ge 3)}$
follow, with, in the bound,
\begin{itemize}[leftmargin=*, topsep=2pt, itemsep=2pt, parsep=0pt]
\item[$\circ$]
a factor of $\|V+\de V\|_\fm$ coming from the kernel of $\cV'_{(*)}+\de\cV'_{(*)}$,
\item[$\circ$]
a factor of $\|\de V\|_\fm$ coming from the kernel of $\de\cV'_{(*)}$,
\item[$\circ$]
$\partial_\nu\phi_{(*)}$ contributing a factor $\wf'_\phi$,
\item[$\circ$]
$\partial_\nu\De\phi_{(*)}$ contributing a factor
of $\const \big(|\de\mu|+\|\de V\|_\fm\wf_\phi^2\big) \wf'_\phi\le
\const  \wf'_\phi$,
\item[$\circ$]
 each $\phi_{(*)}$, possibly translated
by $T_\nu^{-1}$, contributing a factor of $\const\wf_\phi$, and
\item[$\circ$] 
each $\De\phi_{(*)}$, possibly translated by $T_\nu^{-1}$,
giving a factor of 
\begin{equation*}
\const \big(|\de\mu|+\|\de V\|_\fm\wf_\phi^2\big) \wf_\phi
\le\const\wf_\phi
\end{equation*}
\end{itemize}
since $\|V\|_\fm\|\de V\|_\fm\wf_\phi^2\le \const \|\de V\|_\fm$.
\end{proof}

\begin{proof}[Proof of Proposition \ref{propBGEdephidemu}]
We apply Lemma \ref{lemBGEdephi} with
\begin{align*}
\wf_\phi&=2\|S_n\|_\fm\|Q_n^* \fQ_n\|_\fm\wf
          +\GGa_1 \|V\|_{\fm}\wf^3
\\
\wf'_\phi&=\max_{\si=\pm}\|B^{(\si)}_{n,\mu,\nu}\|_\fm \wf'
         +\GGa_1\|V\|_\fm\wf^2\ \wf'
\end{align*}
First observe that $\wf_\phi$ and $\wf'_\phi$ are each bounded by 
a constant times $\wf$ and $\wf'$, respectively. So for a suitable choice 
of $\rrho_3$, the hypothesis of Lemma \ref{lemBGEdephi} is satisfied. 
The claims concerning $\De\phi_{(*)n}$ and $\De\phi_{(*)n,\nu}$ now follow 
by substituting
\begin{alignat*}{5}
\phi_{(*)}
     &=\phi_{(*)n}(\psi_*,\psi,\mu,\cV)&
     &=S_n(\mu)^{(*)}Q_n^* \fQ_n\,\psi_{(*)}
      +\phi_{(*)n}^{(\ge 3)}(\psi_*,\psi,\mu,\cV)\\
\partial_\nu\phi_*
&=\partial_\nu\phi_{*n}(\psi_*,\!\psi,\mu,\cV)&
&=B^{(+)}_{n,\nu,\mu}\partial_\nu\psi_*
 +\phi_{*n,\nu}^{(\ge 3)}\big(\psi_*,\!\psi, 
                    \partial_\nu\psi_*,\partial_\nu\psi,
                    \mu,\cV\big) \\
\partial_\nu\phi
&=\partial_\nu\phi_n(\psi_*,\psi,\mu,\cV)&
&=B^{(-)}_{n,\nu,\mu}\partial_\nu\psi
  +\phi_{n,\nu}^{(\ge 3)}\big(\psi_*,\psi, 
                    \partial_\nu\psi_*,\partial_\nu\psi,
                    \mu,\cV\big) 
\end{alignat*}
into the conclusions of Lemma \ref{lemBGEdephi}, using  
Proposition \ref{propBGEphivepssoln} and 
\cite[Corollary \corSUBsubstitution]{SUB}.

From \eqref{eqnBGEDephi} we see  
\begin{align*}
   D_n^*\De\phi_*
    &= \de\mu\,\phi_*
       -\big(Q_n^*\fQ_nQ_n-\mu-\de\mu\big) \De\phi_*
       -(\cV'_*+\de\cV'_*)(\Phi_*,\Phi,\Phi_*)\Big|
                     ^{\Phi{(*)}=\phi_{(*)}+\De\phi_{(*)}}
                     _{\Phi{(*)}=\phi_{(*)}} \\ \noalign{\vskip-0.1in}&\hskip2in
       -\de\cV'_*(\phi_*,\phi,\phi_*)\\
%%%%%%%%%%%%%%%%%%%%%%%%%%%%%
 D_n\De\phi
   &=\de\mu\,\phi
       -\big(Q_n^*\fQ_nQ_n-\mu-\de\mu\big)\De\phi
       -(\cV'+\de\cV')(\Phi,\Phi_*,\Phi)\Big|
                     ^{\Phi{(*)}=\phi_{(*)}+\De\phi_{(*)}}
                     _{\Phi{(*)}=\phi_{(*)}}\\ \noalign{\vskip-0.1in}&\hskip2in
       -\de\cV'(\phi,\phi_*,\phi)
\end{align*}
with
$\phi_{(*)}=\phi_{(*)n}(\mu,\cV)$ and 
$\De\phi_{(*)}=\phi_{(*)n}(\mu+\de\mu,\cV+\de\cV)-\phi_{(*)n}(\mu,\cV)$. 
Now just substitute for $\phi_{(*)n}$ using Proposition \ref{propBGEphivepssoln} 
and for $\De\phi_{(*)n}$ using the first part of this proposition. 

\end{proof}

\newpage
%%%%%%%%%%%%%%%%%%%%%%%%%%%%%%%%%%%%%%%%%%%%%%%%%%%%%%%%%%
\section{The Critical Field}\label{secBGEcritfld}
%%%%%%%%%%%%%%%%%%%%%%%%%%%%%%%%%%%%%%%%%%%%%%%%%%%%%%%%%

In this subsection we formulate and prove a precise version of
\cite[Proposition \propHTexistencecriticalfields]{PAR1}. 
Recall from \cite[(\eqnSThatpsi)]{PAR1} that 
\begin{equation}\label{eqnBGEhatpsi}
\hat \psi_{(*)n}(\psi_*,\psi,\mu,\cV)
=\bbbs\big[\psi_{(*)n}(\bbbs^{-1}\psi_*,\bbbs^{-1}\psi,\mu,\cV)\big]
\end{equation}
is a rescaled version of the critical field $\psi_{(*)n}$.

\begin{proposition}\label{propCFpsisoln}
Let $n\ge 1$.
There are constants\footnote{Recall Convention \ref{convBGEconstants}.}
$\GGa_6$, $\rrho_4>0$ 
such that the following hold if 
$
\sfrac{1}{L}\|V\|_{\fm} \wf^2+L^2 |\mu| \le \rrho_4
$
.

There are field maps 
$\hat \psi_{(*)n}^{(\ge 3)}\big(\psi_*,\psi,\mu\big)$ such that
\begin{align*}
\hat \psi_{(*)n}(\psi_*,\psi,\mu,\cV)
     &=\sfrac{a}{L^2}\bbbs C^{(n)}(\mu)^{(*)} Q^*\bbbs^{-1}\psi_{(*)} 
 +\hat \psi_{(*)n}^{(\ge 3)}(\psi_*,\psi,\mu,\cV)
\end{align*}
where
\begin{align*}
C^{(n)}(\mu)&=\big(\sfrac{a}{L^2}Q^*Q+\De^{(n)}(\mu)\big)^{-1} \\
\De^{(n)}(\mu) &= \begin{cases}
          \fQ_n -\fQ_n\,Q_n S_n(\mu) Q_n^* \fQ_n &\text{if $n\ge 1$}\\
          D_0-\mu & \text{if $n=0$}
          \end{cases}
\end{align*}
and 
\begin{equation*}
\tn \hat \psi_{(*)n}\tn\le \GGa_6 \wf\qquad
\TN \hat \psi_{(*)n}^{(\ge 3)} \TN
\le \GGa_6 \sfrac{1}{L}\|V\|_{\fm}\wf^3
\end{equation*}
Furthermore 
$\hat \psi_{(*)n}^{(\ge 3)}$ is of degree at least one in $\psi_{(*)}$ and 
is of degree at least three in $(\psi_*,\psi)$.

\noindent
There are also field maps 
$\hat \psi_{(*)n,\nu}\big(\psi_*,\psi,\psi_{*\nu},\psi_\nu,\mu,\cV\big)$
and $\hat \psi_{(*)n,\nu}^{(\ge 3)}
               \big(\psi_*,\psi,\psi_{*\nu},\psi_\nu,\mu,\cV\big)$
and a linear operator $B_{\psi_{(*)},n,\nu}(\mu)$ such that
\begin{align*}
\partial_\nu\hat \psi_{(*)n}(\psi_*,\psi,\mu,\cV) 
&=\hat \psi_{(*)n,\nu}
        \big(\psi_*,\psi,\partial_\nu\psi_*,\partial_\nu\psi,\mu,\cV\big)\\
&=B_{\psi_{(*)},n,\nu}(\mu)\,\partial_\nu\psi_{(*)}
   +\hat \psi_{(*)n,\nu}^{(\ge 3)}
        \big(\psi_*,\psi,\partial_\nu\psi_*,\partial_\nu\psi,\mu,\cV\big)
\end{align*}
and
\begin{equation*}
\tn \hat \psi_{(*)n,\nu}\tn\le \GGa_6 \wf'\qquad
\TN \hat \psi_{(*)n,\nu}^{(\ge 3)}\TN
          \le \GGa_6\sfrac{1}{L}\|V\|_\fm\wf^2 \wf'
\end{equation*}
Furthermore $\hat \psi_{*n,\nu}^{(\ge 3)}$ and 
$\hat \psi_{n,\nu}^{(\ge 3)}$ are each 
of degree precisely one in $\psi_{(*)\nu}$ and of degree at least two in 
$\big(\psi_*, \psi\big)$.

\end{proposition}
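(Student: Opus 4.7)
My plan is to mirror the strategy used for Proposition \ref{propBGEphivepssoln}. The first step is to derive the critical equations for $\psi_{(*)n}$. Since $\phi_{(*)n}$ is itself critical in $\phi_{(*)}$, the chain rule kills the implicit contributions, and the critical equations in $\psi_{(*)}$ take the concrete form supplied by \cite[Proposition \propBSconcatbackgr.a]{BlockSpin}. Substituting the decomposition $\phi_{(*)n} = S_n(\mu)^{(*)}Q_n^*\fQ_n\psi_{(*)} + \phi_{(*)n}^{(\ge 3)}$ of Proposition \ref{propBGEphivepssoln}.a, the linear operator to invert on the left-hand side is precisely $C^{(n)}(\mu)^{-1} = \sfrac{a}{L^2}Q^*Q + \De^{(n)}(\mu)$, once one recognizes $\De^{(n)}(\mu) = \fQ_n - \fQ_n Q_n S_n(\mu) Q_n^*\fQ_n$ as the Schur complement of the quadratic part of $A_n$ augmented by the block-spin Gaussian constraint \eqref{eqnBGEgaussianOne}.

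The second step is to cast the system into the fixed-point form $\vec\ga = \vec f(\vec\al) + \vec L(\vec\al,\vec\ga) + \vec B(\vec\al;\vec\ga)$ of \cite[(\eqnSUBfixedpteqn.b)]{SUB} and apply Proposition \ref{SUB:propBGEeqnsoln} with $\sfrac{1}{L}\|V\|_\fm\wf^2 + L^2|\mu|$ as the small parameter. Bounds on $C^{(n)}(\mu)$, $Q$, $Q^*$ come from \cite{POA}; the nonlinear driving term inherits its bounds from $\phi_{(*)n}^{(\ge 3)}$ via Proposition \ref{propBGEphivepssoln}.a and \cite[Corollary \corSUBsubstitution]{SUB}. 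The degree statements---$\hat\psi_{(*)n}^{(\ge 3)}$ being of degree at least one in $\psi_{(*)}$ and at least three in $(\psi_*,\psi)$---track directly from the corresponding statements for $\phi_{(*)n}^{(\ge 3)}$. The scaling \eqref{eqnBGEhatpsi} from $\psi_{(*)n}$ to $\hat\psi_{(*)n}$ is handled by \cite[Lemma \lemSAscaletoscale.c]{PAR1}, producing the $\bbbs(\cdot)\bbbs^{-1}$ sandwich and the overall factor $a/L^2$ in front of $C^{(n)}(\mu)^{(*)}Q^*$ announced in the statement.

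For the derivative estimates, I would apply $\partial_\nu$ to the fixed-point equation exactly as in Proposition \ref{propBGEphivepssoln}.b and Lemma \ref{lemBGEdephi}.b, using the discrete product rule \eqref{eqnBGEprodruleTrip} together with the commutation relations \eqref{eqnBGEcommutederiv} and their analogues for $Q$, $Q^*$ from \cite[(\eqnPBSqnplusminus)]{POA}. This yields a linear fixed-point system in $\partial_\nu\psi_{(*)n}$, whose solution, obtained from another application of Proposition \ref{SUB:propBGEeqnsoln}, defines $B_{\psi_{(*)},n,\nu}(\mu)$ as its linear part and satisfies the bound $\GGa_6\sfrac{1}{L}\|V\|_\fm\wf^2\wf'$ on the cubic remainder; \cite[Corollary \corSUBsubstitution]{SUB} then identifies this solution with $\partial_\nu\hat\psi_{(*)n}$ under the substitution $\psi_{(*)\nu}\mapsto \partial_\nu\psi_{(*)}$. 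The principal obstacle I expect is the algebraic identification of $C^{(n)}(\mu)$ and its Schur-complement structure in the linear critical equations; once the equations are written in the canonical form of \cite[Proposition \propBSconcatbackgr.a]{BlockSpin}, the analytic estimates become a routine adaptation of the Banach fixed-point arguments already used throughout this paper.
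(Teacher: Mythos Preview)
Your proposal sets up a \emph{new} fixed--point problem for $\psi_{(*)n}$, whereas the paper avoids this entirely. The key observation you are missing is that \cite[Proposition \propBGAomnibus]{PAR1} supplies an \emph{explicit formula}
\[
\psi_{(*)n}(\th_*,\th,\mu,\cV)
=\big(\sfrac{a}{L^2}Q^* Q+\fQ_n\big)^{-1}
    \big\{\sfrac{a}{L^2}Q^*\th_{(*)}
      +\fQ_nQ_n\check\phi_{(*)n+1}(\th_*,\th,\mu,\cV)\big\}
\]
expressing the critical field directly in terms of the background field $\check\phi_{(*)n+1}$ at the \emph{next} scale. Since $\check\phi_{(*)n+1}$ is already constructed by Proposition \ref{propBGEphivepssoln} (with $n\to n+1$, $\mu\to L^2\mu$, $\cV\to\bbbs\cV$), no Banach iteration is needed: one simply substitutes, rescales via \eqref{eqnBGEhatpsi}, and applies operator bounds from \cite[Proposition \propPOLmain]{POA}. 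The identification of the linear part with $\sfrac{a}{L^2}\bbbs C^{(n)}(\mu)^{(*)}Q^*\bbbs^{-1}$ follows from the algebraic identity \cite[Remark \remBSedA.e]{BlockSpin}. The crucial factor $\sfrac{1}{L}$ in the bound on $\hat\psi_{(*)n}^{(\ge 3)}$ arises because Proposition \ref{propBGEphivepssoln}.a is invoked with interaction $\bbbs\cV$, whose kernel satisfies $\|V^{(s)}\|_\fm\le\sfrac{1}{L}\|V\|_\fm$; in your scheme, solving a fixed--point equation at scale $n$ with the original $\cV$, it is not clear you would recover this improvement. The derivative bounds follow the same pattern, using Proposition \ref{propBGEphivepssoln}.b rather than a fresh $\partial_\nu$--linearized fixed--point problem.

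Your route is not wrong in principle, but it reproves from scratch what the concatenation structure of \cite{BlockSpin} already hands you for free, and risks losing the $\sfrac{1}{L}$ gain.
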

\begin{proof} Set
\begin{equation}\label{eqnBGEcheckSmu}
\check S_{n+1}(\mu)
      =  L^2\, \bbbs^{-1}S_{n+1}(L^2\mu)\bbbs
     =\big\{D_n-\mu
        +\check Q_{n+1}^*\check\fQ_{n+1}\check Q_{n+1}\big\}^{-1}
     :\cH_n\rightarrow\cH_n 
\end{equation}
where, as in \cite[Lemma \lemSCacheckOne]{PAR1}, $\check Q_n = \bbbs^{-1} Q_n\bbbs$
and $\check \fQ_n=\sfrac{1}{L^2}\bbbs^{-1} \fQ_n\bbbs$.
Observe that, by \cite[Remark \remBSedA.e]{BlockSpin}
and the fact that under the substitutions \cite[(\eqnBGAblosckspinSub)]{PAR1},
$\check\fQ=\check\fQ_{n+1}$, 
$\check Q_-=\check Q_{n+1}$ and
$\check S=\check S_{n+1}(\mu)$,
\begin{equation}\label{eqnBGEanotherC}
\sfrac{a}{L^2} C^{(n)}(\mu)^{(*)} Q^*
=\big(\sfrac{a}{L^2}Q^* Q+\fQ_n\big)^{-1}
    \big\{\sfrac{a}{L^2}Q^*  
   +\fQ_n Q_n \check S_{n+1}(\mu)^{(*)}\check Q_{n+1}^* \check\fQ_{n+1}
   \big\}
\end{equation}
By \cite[Definition \defBGAphicheck]{PAR1}
and Proposition \ref{propBGEphivepssoln} with $n$ replaced by $n+1$,
\begin{align*}
\check\phi_{(*)n+1}(\th_*,\th,\mu,\cV)
%&= \bbbs^{-1}\big[\phi_{(*)n+1}(\bbbs\th_*,\bbbs\th,L^2\mu,\bbbs\cV)\big]
%\\
&= \bbbs^{-1}S_{n+1}(L^2\mu)^{(*)}Q_{n+1}^* \fQ_{n+1}\,\bbbs\th_{(*)}
  +\bbbs^{-1}\!\big[\phi_{(*)n+1}^{(\ge 3)}
           (\bbbs\th_*,\bbbs\th,\!L^2\mu,\bbbs\cV)\big]
\end{align*}
Hence, by the definition of $\psi_{(*)n}$ 
in \cite[Proposition \propBGAomnibus, Lemma \lemSCacheckOne.b]{PAR1}, 
\eqref{eqnBGEcheckSmu} and \eqref{eqnBGEanotherC},
\begin{equation}\label{eqnBGEpsiNexp}
\begin{split}
&\psi_{(*)n}(\th_*,\th,\mu,\cV) 
=\big(\sfrac{a}{L^2}Q^* Q+\fQ_n\big)^{-1}
    \big\{\sfrac{a}{L^2}Q^*\th_{(*)} 
      +\fQ_nQ_n\check\phi_{(*)n+1}(\th_*,\th,\mu,\cV)\big\}\\
%&\hskip0.5in=\big(\sfrac{a}{L^2}Q^* Q+\fQ_n\big)^{-1}
%    \big\{\sfrac{a}{L^2}Q^*  
%   +\fQ_n Q_n \bbbs^{-1}S_{n+1}(L^2\mu)^{(*)}Q_{n+1}^* \fQ_{n+1}\,\bbbs
%   \big\}\th_{(*)}
%\\ &\hskip1.75in
%   +\big(\sfrac{a}{L^2}Q^* Q+\fQ_n\big)^{-1}
%         \fQ_nQ_n\bbbs^{-1}\big[\phi_{(*)n+1}^{(\ge 3)}
%           (\bbbs\th_*,\bbbs\th,L^2\mu,\bbbs\cV)\big]\\
&\hskip0.5in=\big(\sfrac{a}{L^2}Q^* Q+\fQ_n\big)^{-1}
    \big\{\sfrac{a}{L^2}Q^*  
   +\fQ_n Q_n \check S_{n+1}(\mu)^{(*)}\check Q_{n+1}^* \check\fQ_{n+1}
   \big\}\th_{(*)}
\\ &\hskip1.75in
   +\big(\sfrac{a}{L^2}Q^* Q+\fQ_n\big)^{-1}
         \fQ_nQ_n\bbbs^{-1}\big[\phi_{(*)n+1}^{(\ge 3)}
           (\bbbs\th_*,\bbbs\th,L^2\mu,\bbbs\cV)\big]\\
&\hskip0.5in=\sfrac{a}{L^2} C^{(n)}(\mu)^{(*)} Q^*\th_{(*)}
   +A_{\psi,\phi}\bbbs^{-1}\big[\phi_{(*)n+1}^{(\ge 3)}
           (\bbbs\th_*,\bbbs\th,L^2\mu,\bbbs\cV)\big]
\end{split}
\end{equation}
where
\begin{equation*}
A_{\psi,\phi}=(\sfrac{a}{L^2}Q^* Q+\fQ_n)^{-1}\fQ_nQ_n
\end{equation*}
So, by \eqref{eqnBGEhatpsi},
\begin{align*}
&\hat \psi_{(*)n}(\psi_*,\psi,\mu,\cV) 
  = \bbbs\big[\psi_{(*)n}(\bbbs^{-1}\psi_*,\bbbs^{-1}\psi,\mu,\cV)\big]\\
&\hskip0.5in=\sfrac{a}{L^2}\bbbs C^{(n)}(\mu)^{(*)} Q^*\bbbs^{-1}\psi_{(*)} 
   + \bbbs A_{\psi,\phi}\bbbs^{-1} 
      \phi_{(*)n+1}^{(\ge 3)}(\psi_*,\psi,L^2\mu,\bbbs\cV)
\end{align*}
Defining
\begin{equation*}
\hat \psi_{(*)n}^{(\ge 3)}(\psi_*,\psi,\mu,\cV) = \bbbs A_{\psi,\phi}\bbbs^{-1} 
      \phi_{(*)n+1}^{(\ge 3)}(\psi_*,\psi,L^2\mu,\bbbs\cV)
\end{equation*}
we have the specified bounds on 
$\hat \psi_{(*)n}(\psi_*,\psi,\mu)$, by 
\cite[Propostion \propPOLmain]{POA}, Proposition \ref{propBGEphivepssoln}.a
and the fact that the kernel, $V^{(s)}$, of $\bbbs\cV$ obeys
\begin{equation*}
\|V^{(s)}\|_\fm\le\sfrac{1}{L}\|V\|_\fm
\end{equation*}
by \cite[Lemma \lemSAscaletoscale.a]{PAR1}.

For $\partial_\nu\psi_{(*)n}$ we use that, 
by \cite[Proposition \propPOLmain.b]{POA},
\begin{align*}
&\partial_\nu\hat \psi_{(*)n}(\psi_*,\psi,\mu,\cV)
=\partial_\nu \sfrac{a}{L^2} \bbbs C^{(n)}(\mu)^{(*)} Q^* \bbbs^{-1}\psi_{(*)} 
   \!+\! \partial_\nu\bbbs A_{\psi,\phi}\bbbs^{-1} 
      \phi_{(*)n+1}^{(\ge 3)}\big(\psi_*,\psi,\!L^2\mu,\bbbs\cV\big)\\
&\hskip0.5in
=\bbbs A_{\psi_{(*)}\th_{(*)}\nu}(\mu)\bbbs^{-1}\ \partial_\nu\psi_{(*)}  
   + \bbbs  A_{\psi,\phi,\nu}\bbbs^{-1}
 \phi_{(*)n+1,\nu}^{(\ge 3)}
    \big(\psi_*,\psi,\partial_\nu\psi_*,\partial_\nu\psi,
            L^2\mu,\bbbs\cV\big)
\end{align*}
Now apply \cite[Proposition \propPOLmain.b]{POA} and, for the second term,
Proposition \ref{propBGEphivepssoln}.b.
\end{proof}

\begin{remark}\label{remCFpsisolnZero}
By  \eqref{eqnBGEhatpsi}, the definition of $\psi_{(*)0}$ 
in \cite[Proposition \propBGAomnibus\ and Definition \defBGAphicheck]{PAR1}, 
we have
\begin{equation*}
\hat\psi_{(*)0}(\psi_*,\psi,\mu,\cV)=\phi_{(*)1}(\psi_*,\psi,L^2\mu,\bbbs\cV)
\end{equation*}
Hence Proposition \ref{propBGEphivepssoln} provides the existence of,
properties of, and bounds on $\hat\psi_{(*)0}$.
\end{remark}

\begin{remark}\label{remCFhatpsitopsi}
\cite[Proposition \propHTexistencecriticalfields]{PAR1} follows from
\cite[Proposition \propBGAomnibus]{PAR1}. To get bounds on $\psi_{(*)n}$,
write, by \eqref{eqnBGEhatpsi},
$
\psi_{(*)n}(\th_*,\th,\mu,\cV)
=\bbbs^{-1}\!\big[\hat \psi_{(*)n}(\bbbs\th_*,\bbbs\th,\mu,\cV)\big]
$
and apply Proposition \ref{propCFpsisoln}.
\end{remark}

\begin{remark}[The complex conjugate of the critical field]\label{remBGEcrfandcomplexconj}
There exists a constant $\GGa_7$
such that the following holds for all $n\ge 1$.
Let $\th(y)$ be a field on $\cX_{-1}^{(n+1)}$ such that\footnote{
Recall that $L_0=L^2$ and $L_\nu=L$ for $\nu=1,2,3$.} 
$|\th(y)| <\sfrac{1}{L^{3/2}} \wf$ and
$|\partial_\nu \th(y)| <\sfrac{1}{L^{3/2}L_\nu} \wf'$ 
for all $y\in \cX_{-1}^{(n+1)}$ and $0\le \nu \le 3$. Then
\begin{equation*}
\big|   \psi_{*n}(\th^*,\th,\mu,\cV)^*(x)- \psi_{n}(\th^*,\th,\mu,\cV)(x)\big|\le
\GGa_7 \wf'   \qquad \qquad {\rm for\ all\ } x \in \cX_0^{(n)}
\end{equation*}
\end{remark}

\begin{proof}
By \cite[Proposition \propBGAomnibus]{PAR1}, 
\begin{align*}
& \psi_{*n}(\th^*,\th,\mu,\cV)^*- \psi_{n}(\th^*,\th,\mu,\cV)\\
& \hskip 4cm=A_{\psi,\phi}\bbbs^{-1}\big[
\phi_{*n+1}^*(\bbbs\th_*,\bbbs\th,L^2\mu,\bbbs\cV)
-\phi_{n+1}(\bbbs\th_*,\bbbs\th,L^2\mu,\bbbs\cV)
\big]
\end{align*}
with $A_{\psi,\phi}$ as after \eqref{eqnBGEpsiNexp}. Now apply 
Remark \ref{remBGEbgfandcomplexconj}.
\end{proof}

\newpage
\appendix
%&&&&&&&&&&&&&&&&&&&&&&&&&&&&&&&&&
\section{Norms  and a Fixed Point Theorem}\label{appNormsFixedPoint}
%&&&&&&&&&&&&&&&&&&&&&&&&&&&&&&&&&

We use the terminology ``field map'' to designate an analytic map that assigns to
one or more fields on a finite set $X$ another field on a finite set $Y$.
We assume that $X$ and $Y$ are equipped with volume factors (like the volume
of a fundamental cell in a finite lattice) $\vol_X$ and $\vol_Y$. Then such a field
map $\phi(\psi_1,\cdots,\psi_n)$ has a unique representation as a power
series
\begin{equation*}
\phi(\psi_1,\cdots,\psi_n)(y)
      = \sum_{r_1,\cdots,r_n\ge 0}\hskip-10pt \vol_X^{r_1+\cdots+r_n}
         \sum_{\atop{\vec x_i\in X^{r_i}} {1\le i\le n}}
            \phi_{r_1,\cdots,r_n}\big(y;\vec x_1,\cdots,\vec x_n\big)\,
            \psi_1(\vec x_1) \cdots\psi_n(\vec x_n)
\end{equation*}
where the coefficients $ \phi_{r_1,\cdots,r_n}\big(y;\vec x_1,\cdots,\vec x_n\big)$ are 
invariant under permutations of the components of each vector $\vec x_i$
and where, for $\,\vec x = (x_1,\cdots,x_r) \in X^r\,$ 
we set $\ \psi(\vec x) = \smprod_{i=1}^r \psi(x_i)\,$. 

To measure the size of field maps, we assume that $X$ and $Y$ are both
subsets of a common metric space with metric $d$. As in \cite[\S\secSUBfieldmaps]{SUB},
we introduce norms 
whose finiteness implies  that all the kernels in its power series representation 
are small and decay exponentially  as their arguments separate. 
The norm of $\phi$ with mass $\fm$ and weight factors $\ka_1,\cdots,\ka_n>0$ 
is defined to be 
\begin{equation*}
\tn \phi \tn =  \sum_{r_1,\cdots,r_n\ge 0}
  \big\|  \phi_{r_1,\cdots,r_n}\big\|_\fm\ 
  \smprod_{i=1}^r\ka_i^{r_i}
\end{equation*} 
where
\begin{align*}
\big\| \phi_{r_1,\cdots,r_n}\big\|_\fm
  =\max\big\{ L_\fm( \phi_{r_1,\cdots,r_n})\,,\,R_\fm( \phi_{r_1,\cdots,r_n})\big\}
\end{align*}
and
\begin{align*}
L_\fm( \phi_{r_1,\cdots,r_n})&=\max_{y\in Y}\ 
     \vol_X^{r_1+\cdots+r_n}\hskip-5pt
         \sum_{\atop{\vec x_i\in X^{r_i}} {1\le i\le n}}
     \big|\phi_{r_1,\cdots,r_n}\big(y;\vec x_1,\cdots,\vec x_n\big)\big|
      e^{\fm \tau_d(y,\vec x_1,\cdots,\vec x_n)}\\
R_\fm( \phi_{r_1,\cdots,r_n})&=\max_{x'\in X}
     \max_{ \atop{\atop{1\le j\le n}{r_j\ne 0}}
                {1\le i\le r_j}}\ 
    \vol_Y\hskip-2pt\sum_{y\in Y}\ 
    \vol_X^{r_1+\cdots+r_n-1}\hskip-5pt
           \sum_{\atop{ \atop{\vec x_\ell\in X^{r_\ell}}{1\le\ell\le n} }
           { {(\vec x_j)}_i=x'} }\hskip-3pt
\big|\phi_{r_1,\cdots,r_n}\big(y;\vec x_1,\cdots,\vec x_n\big)\big|
       \\ \noalign{\vskip-0.4in}&\hskip4.15in
       e^{\fm\tau_d(y;\vec x_1,\cdots,\vec x_n)}
\end{align*}
where the tree length $\,\tau_d(x_1,\cdots,x_p)\,$ is the 
minimal length of a tree in the common metric space  that has  
$\,x_1,\cdots,x_p\,$ among its vertices.

The main tool that we  use in the proof of the existence of and bounds
on the background field is \cite[Proposition \propSUBeqnsoln]{SUB}, 
which provides solutions $\vec\ga=\vec\Ga(\vec\al)$
to equations of the form
\begin{equation*}
\vec\ga
=\vec f(\vec\al)+\vec L\big(\vec\al,\vec\ga\big)
   +\vec B\big(\vec\al,\vec\ga\big)
\end{equation*}
Here 
\begin{itemize}[leftmargin=*, topsep=2pt, itemsep=2pt, parsep=0pt]
\item[$\circ$] $\vec f(\vec\al) = \big(f_1(\vec\al),\cdots,f_s(\vec\al) \big)$ 
is an $s$--tuple of field maps with each $f_j(\vec\al)$ mapping
the $r$--tuple of fields $\vec\al= \big(\al_1,\cdots,\al_r \big)$
on $X$ to the field $f_j(\vec\al)$ on $Y$.
\item[$\circ$] $\vec L$ and $\vec B$ are both $s$--tuples of field maps with each 
$j^{\mathrm{th}}$ component  mapping the $(r+s)$--tuple of fields $(\vec\al,\vec\ga)$ 
on $X$ and $Y$ to the field $L_j(\vec\al,\vec\ga)$, respectively $B_j(\vec\al,\vec\ga)$,
on $Y$. 
\item[$\circ$]Each $L_j$ is linear in $\vec\ga$. Each $B_j$ is of degree
at least two and at most $d_{\rm max}$ in $\vec\ga$.
\end{itemize}
For the readers convenience, here is the basic statement of 
\cite[Proposition \propSUBeqnsoln]{SUB}.

\begin{proposition}\label{SUB:propBGEeqnsoln}
Let $\ka_1$, $\cdots$, $\ka_s$ and $\la_1$, $\cdots$, $\la_r$ 
be weight factors for the fields $\al_1,\cdots,\al_s$, on $X$,
and $\ga_1,\cdots,\ga_r$, on $Y$, respectively.
For $s$--tuples of field maps $\vec\Ga(\vec\al) = \big(\Ga_1(\vec\al),\cdots,\Ga_s(\vec\al) \big)$,
we introduce the norm 
\begin{equation*}
\|\vec\Ga\|=\max\limits_{1\le j\le r}\sfrac{1}{\la_j}\tn\Ga_j\tn
\end{equation*}
where $\tn\ \cdot\ \tn$ is the norm with mass $\fm$ and 
weight factors $\ka_1$, $\cdots$, $\ka_s$.
Denote by $\cB_1=\set{\vec\Ga}{\|\vec\Ga\|\le 1}$ the closed unit ball.
\medskip

\noindent
Let $0<\fc<1$. 
Assume that, in the notation above,
\begin{align*}
\tn f_j\tn+\tn L_j\tn+\tn B_j\tn&\le\la_j\\
 \tn L_j\tn+d_{\rm max}\tn B_j\tn&\le \fc\la_j
\end{align*}
 for $1\le j \le r$.
Then there is a unique $\vec\Ga\in\cB_1$ for which
\begin{equation*}
\vec\Ga(\vec\al)
=\vec f(\vec\al)+\vec L\big(\vec\al,\vec\Ga(\vec\al)\big)
   +\vec B\big(\vec\al,\vec\Ga(\vec\al)\big)
\end{equation*}
Furthermore
\begin{equation*}
\max_{j}\sfrac{1}{\la_j}\tn\Ga_j\tn
         \le\sfrac{1}{1-\fc}\max_{j}\sfrac{1}{\la_j}\tn f_j\tn\qquad
\max_{j}\sfrac{1}{\la_j}\tn\Ga_j-f_j\tn
         \le\sfrac{\fc}{1-\fc}\max_{j}\sfrac{1}{\la_j}\tn f_j\tn
\end{equation*}
\end{proposition}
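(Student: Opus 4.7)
The plan is to invoke the Banach fixed point theorem on the closed unit ball $\cB_1$ for the map
\begin{equation*}
T(\vec\Ga)(\vec\al) \;=\; \vec f(\vec\al) + \vec L\big(\vec\al, \vec\Ga(\vec\al)\big) + \vec B\big(\vec\al, \vec\Ga(\vec\al)\big),
\end{equation*}
acting on the Banach space of tuples of field maps with norm $\|\vec\Ga\| = \max_j \frac{1}{\la_j} \tn \Ga_j\tn$. The whole argument rests on one composition estimate: if $F(\vec\al, \vec\ga)$ is homogeneous of degree $d$ in $\vec\ga$, then $\tn F(\cdot, \vec\Ga)\tn \le \tn F\tn\,\|\vec\Ga\|^d$. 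I would establish this first by expanding $F$ as a polynomial $\sum F_\alpha(\vec\al)\,\ga_{i_1}\cdots \ga_{i_d}$: the weight factors $\la_{i_k}$ built into the definition of $\tn F\tn$ exactly absorb the $\la_{i_k}$ produced when one uses $\tn \Ga_{i_k}\tn \le \la_{i_k}\|\vec\Ga\|$, while on the decay side the trees underlying the kernel of the composition are obtained by grafting the tree of $F_\alpha$ and those of the $\Ga_{i_k}$'s at a shared vertex, and tree-length is subadditive under such grafting, so the exponential weight $e^{\fm\tau}$ behaves correctly.

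With that in hand, the invariance $T:\cB_1 \to \cB_1$ is immediate from the first hypothesis: for $\|\vec\Ga\| \le 1$, the composition estimate gives $\tn L_j(\cdot,\vec\Ga)\tn \le \tn L_j\tn$ and $\tn B_j(\cdot,\vec\Ga)\tn \le \tn B_j\tn$, so $\tn T(\vec\Ga)_j\tn \le \tn f_j\tn + \tn L_j\tn + \tn B_j\tn \le \la_j$. For the contraction estimate, the linear part contributes Lipschitz constant at most $\max_j \frac{1}{\la_j}\tn L_j\tn$ since $L_j(\vec\al,\vec\Ga) - L_j(\vec\al,\vec\Ga') = L_j(\vec\al,\vec\Ga-\vec\Ga')$. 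The polynomial part uses the telescoping identity
\begin{equation*}
\Ga_{i_1}\cdots\Ga_{i_d} - \Ga'_{i_1}\cdots\Ga'_{i_d} = \sum_{k=1}^d \Ga_{i_1}\cdots \Ga_{i_{k-1}}\big(\Ga_{i_k}-\Ga'_{i_k}\big)\Ga'_{i_{k+1}}\cdots\Ga'_{i_d}
\end{equation*}
combined with $\|\vec\Ga\|,\|\vec\Ga'\| \le 1$ and $d \le d_{\rm max}$, producing an overall factor of at most $d_{\rm max}$. Thus the Lipschitz constant of $T$ on $\cB_1$ is bounded by $\max_j \frac{1}{\la_j}\big(\tn L_j\tn + d_{\rm max}\tn B_j\tn\big) \le \fc < 1$ by the second hypothesis, and Banach delivers the unique $\vec\Ga \in \cB_1$ with $T(\vec\Ga) = \vec\Ga$.

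The two quantitative bounds drop out by inserting the fixed point equation $\vec\Ga = \vec f + \vec L(\cdot,\vec\Ga) + \vec B(\cdot,\vec\Ga)$ into itself. The composition estimate, together with $\|\vec\Ga\| \le 1$, gives
\begin{equation*}
\tn \Ga_j - f_j\tn \le \tn L_j\tn \|\vec\Ga\| + \tn B_j\tn \|\vec\Ga\|^2 \le \big(\tn L_j\tn + \tn B_j\tn\big)\|\vec\Ga\| \le \fc\,\la_j\,\|\vec\Ga\|,
\end{equation*}
using $d_{\rm max} \ge 1$ in the last step. Dividing by $\la_j$, taking the maximum over $j$, and absorbing into $\|\vec\Ga\|$ yields first $\|\vec\Ga\| \le \fc\|\vec\Ga\| + \max_j \frac{1}{\la_j}\tn f_j\tn$, hence $\|\vec\Ga\| \le (1-\fc)^{-1} \max_j \frac{1}{\la_j}\tn f_j\tn$, and substituting this back into the displayed inequality gives the second desired bound $\max_j\frac{1}{\la_j}\tn \Ga_j-f_j\tn \le \frac{\fc}{1-\fc}\max_j \frac{1}{\la_j}\tn f_j\tn$. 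The main obstacle is verifying the initial composition estimate for the weighted tree-decay norm; once that is proved via the tree-grafting observation, the remainder is standard Banach-fixed-point boilerplate.
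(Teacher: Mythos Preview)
The paper does not actually prove this proposition; it is quoted ``for the readers convenience'' from \cite{SUB}, where the full proof lives. Your argument is correct and is the standard route: the composition/substitution estimate $\tn F(\cdot,\vec\Ga)\tn \le \tn F\tn\,\|\vec\Ga\|^d$ for degree--$d$ dependence on $\vec\ga$ (proved via tree--grafting and subadditivity of tree length) is exactly the technical input that \cite{SUB} isolates, and once it is in hand your invariance, contraction, and a posteriori bounds are all correct as written. There is nothing to compare against in this paper, but your proof matches what the cited reference does.
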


\noindent
There are more refined statements in \cite[Proposition \propSUBeqnsoln]{SUB}.

\newpage
%%%%%%%%%%%%%%%%%%%%%%%%%%%%%%%%%%%%%%
\bibliographystyle{plain}
\bibliography{refs}
%%%%%%%%%%%%%%%%%%%%%%%%%%%%%%%%%%%%%

%\printIssueCount
\end{document}